\newtheorem{theorem}{Theorem}[section]
\newtheorem{definition}{Definition}[section]
\newtheorem{example}{Example}[section]
\newtheorem{lemma}{Lemma}[section]
\newtheoremstyle{case}{}{}{}{}{}{.}{ }{}
\theoremstyle{case}
\newtheorem{case}{\textbf{Observation}}
\begin{document}

\title{Reconnecting the Estranged Relationships: Optimizing the Influence Propagation in Evolving Networks} 
\author{Taotao Cai}
\affiliation{%
  \institution{Macquarie University}
  \streetaddress{4 Research Park Dr, Macquarie Park}
  \city{Sydney}
  \country{Australia}
}
\email{taotao.cai@mq.edu.au}

\author{Qi Lei}
\affiliation{
   \institution{Chang'an University}
   \city{Xi'an}
   \country{China}
}
\email{2020024009@chd.edu.cn}

\author{Quan Z. Sheng}
\affiliation{%
  \institution{Macquarie University}
  \streetaddress{4 Research Park Dr, Macquarie Park}
  \city{Sydney}
  \country{Australia}
}
\email{michael.sheng@mq.edu.au}

\author{Shuiqiao Yang}
\affiliation{%
  \institution{University of New South Wales}
  \city{Sydney}
  \country{Australia}
}
\email{shuiqiao.yang@unsw.edu.au}

\author{Jian Yang}
\affiliation{%
  \institution{Macquarie University}
  \streetaddress{4 Research Park Dr, Macquarie Park}
  \city{Sydney}
  \country{Australia}
}
\email{jian.yang@mq.edu.au}

\author{Wei Emma Zhang}
\affiliation{%
  \institution{The University of Adelaide}
  \city{Adelaide}
  \country{Australia}
}
\email{wei.e.zhang@adelaide.edu.au}

\begin{abstract}
\textit{Influence Maximization} (IM), which aims to select a set of users from a social network to maximize the expected number of influenced users, has recently received significant attention for mass communication and commercial marketing. 
Existing research efforts dedicated to the IM problem depend on a strong assumption: the selected seed users are willing to spread the information after receiving benefits from a company or organization. 
In reality, however, some seed users may be reluctant to spread the information, or need to be paid 
higher 
to be motivated. 
Furthermore, the existing IM works pay little attention to 
capture 
user's influence propagation in the future period.
In this paper, we target a new research problem, named \textit{\underline{R}econnecting \underline{T}op-\underline{$l$} \underline{R}elationships} (RT$l$R) query, which aims to find $l$ number of previous existing relationships but 
being estranged later, such that reconnecting these relationships will maximize the expected number of influenced users by the given group in a future period. 
We prove that the RT$l$R problem is NP-hard. 
An efficient greedy algorithm is proposed to answer the RT$l$R queries with the influence estimation technique and the well-chosen link prediction method to predict the near future network structure.
We also design a pruning method to reduce unnecessary probing from candidate edges. 
Further, a carefully designed order-based algorithm is 
proposed to accelerate the RT$l$R queries.
Finally, we conduct extensive experiments on real-world datasets to demonstrate the effectiveness and efficiency of our proposed methods.
\end{abstract}

\maketitle

\section{Introduction} \label{sec:intro}

Over the past few decades, the rise of online social networks has brought a transformative effect on the communication and information spread among human beings. Through social media platforms (\textit{e.g., Twitter}), business companies can spread their products information and brand stories to their customers, politicians can deliver their administrative ideas and policies to the public, and researchers can post their upcoming academic seminars information to attract their peers around the world to attend. Motivated by real substantial applications of online social networks, researchers start to keep a watchful eye on \textit{information diffusion}~\cite{brown1987social,IM2003}, as the information could quickly become pervasive through the "\textit{word-of-mouth}" propagation among friends in social networks.  

\textit{Influence Maximization} (IM) is the key algorithmic problem in information diffusion research, which has been extensively studied in recent years. IM aims to find a small set of highly influential users such that they will cause the maximum influence spread in a social network~\cite{IM2003,RIS14,tang2014influence,DBLP:journals/corr/abs-2202-03893}. To fit with different real application scenarios, many variants of the IM problem have been investigated recently, such as
\textit{Topic-aware} IM~\cite{guo2013personalized,yuchenvldb2015,10.1145/3035918.3035952,cai2020target}, \textit{Time-aware} IM~\cite{DBLP:conf/aaai/FengCCZCX14,xie2015dynadiffuse,huang2019finding,singh2021link}, 
\textit{Community-aware} IM~\cite{DBLP:conf/kdd/WangCSX10,yadav2018bridging,DBLP:conf/ijcai/TsangWRTZ19,li2020community}, \textit{Competitive} IM~\cite{lu2015competition,ou2016influence,tsaras2021collective,becker2020balancing}, \textit{Multi-strategies} IM~\cite{DBLP:journals/toc/KempeKT15,DBLP:conf/aaai/ChenZZ20}, and \textit{Out-of-Home IM}~\cite{zhang2020towards,zhang2021minimizing}. However, some critical characteristics of the IM study fail to be fully discussed in existing IM works. We explain these characteristics using the two observations below.

\begin{case}
Some business companies wish their product information would be spread to most of their customers in the period after they spent their budgets on their selected seed users (\textit{e.g., Apple releases its new iPhone every September. They want to find optimal influencers in social networks to appeal to as many users as possible to purchase the new iPhone in the year ahead}). 
However, most of the existing IM works modelled the social networks as static graphs, while the topology of social networks often evolves over time in the real world~\cite{chen2015influential, 10.1145/1401890.1401948}. Therefore, the seed users selected currently may not give good performance for influence spread in the following time period due to the evolution of the network. To satisfy Apple's requirement, we would better predict the topology evolution of social networks in the following period and select seed users from the predicted network.
\end{case}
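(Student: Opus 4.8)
The final statement is a motivating \emph{Observation} rather than a formal theorem, so establishing it means first making its implicit claim precise and then substantiating that precise claim. The plan is to read the Observation as a \emph{temporal-mismatch} assertion: seeds that are optimal for the present network can become far from optimal once the topology has evolved, and hence selecting seeds on a predicted future network is preferable. Accordingly, I would formalize the setting as follows. Let $G_0$ denote the currently observed network and $G_T$ the network at a future horizon $T$, and let $\sigma_G(\cdot)$ be the expected influence spread under a fixed diffusion model. For a budget $k$, write $S_0 = \arg\max_{|S| \le k} \sigma_{G_0}(S)$ for the seeds chosen on the static present graph and $S_T = \arg\max_{|S| \le k} \sigma_{G_T}(S)$ for the future-aware optimum. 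The Observation then becomes the quantitative claim that the realized future spread $\sigma_{G_T}(S_0)$ of the statically chosen seeds can be much smaller than $\sigma_{G_T}(S_T)$, i.e., that the gap $\Delta(G_0, G_T, k) = \sigma_{G_T}(S_T) - \sigma_{G_T}(S_0)$ is non-negligible.

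First I would establish the \emph{worst-case} half of the claim by exhibiting an explicit family of evolving networks in which $\Delta$ grows with the vertex count $n$. The construction plants a node $u$ that is peripheral in $G_0$ (low degree, weakly connected to a large community) but becomes a hub in $G_T$ after new edges appear, together with a node $v$ that is a hub in $G_0$ whose incident edges are severed by time $T$. Under the diffusion model, $\sigma_{G_0}$ ranks $v$ above $u$, so the static selector commits to $v$; yet on $G_T$ the seed $u$ reaches an $\Omega(n)$-sized community while $v$ reaches only $O(1)$ nodes. This yields $\Delta = \Omega(n)$ and shows that static selection is arbitrarily suboptimal on evolving instances, which is precisely the qualitative point the Observation makes: choosing seeds without accounting for topology evolution can be badly wrong, and matching seeds to the future graph recovers the optimum.

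Because a worst-case construction only shows the gap \emph{can} be large, the second step is to verify it \emph{is} large on real evolving networks. On each dataset I would learn $G_T$ from historical snapshots, compute $S_0$ together with prediction-based seeds, and compare their realized spread on a held-out future graph, reporting $\Delta$ as a function of the horizon $T$ to demonstrate that the mismatch worsens as the prediction window grows. The hard part will be that $G_T$ is not available at selection time, so the whole argument rests on replacing the true future graph by a \emph{predicted} one; the separation degrades gracefully only when the link-prediction error is small relative to $\Delta$. Controlling this dependence — showing that the prediction accuracy attainable on real data suffices to recover most of the gap rather than introducing a new one — is the crux, and it is exactly what the well-chosen link-prediction method and the experimental evaluation developed later in the paper are designed to address.
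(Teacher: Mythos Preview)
The paper does not prove this statement at all: the \texttt{case} environment is styled as an \textbf{Observation}, and in the paper it functions purely as informal motivation in the Introduction. The authors simply assert, with two supporting citations~\cite{chen2015influential, 10.1145/1401890.1401948}, that real networks evolve and that static seed selection may therefore underperform on future snapshots; no formalization, construction, or experiment is offered (or intended) at this point in the text. Your proposal --- defining a quantitative gap $\Delta(G_0, G_T, k)$, exhibiting an $\Omega(n)$ worst-case family, and running a held-out empirical comparison --- is a genuine and much stronger argument than anything the paper attempts for this claim, but it is answering a question the paper never poses. If the task is to match the paper's own treatment, the correct response is simply that no proof is given: the Observation is a rhetorical device to motivate the RT$l$R problem, not a result to be established.
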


\begin{case}
Existing IM studies dedicated to the influence maximization problem depend on a strong assumption -- the selected seed users will spread the information. However, some of the chosen individual seed users may be unwilling to promote the product information for various reasons. Moreover, most startups and academic groups may not have the budget to motivate the seed users to spread their product or academic activities information. 
\end{case}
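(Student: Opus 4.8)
The plan is to substantiate this Observation not by deductive proof but by a three-part argument, since the claim concerns the modelling assumptions of prior work and real deployment constraints rather than a formal mathematical statement. First, I would make precise the \emph{willing-seed} assumption implicit in the classical IM formulation: under both the Independent Cascade and Linear Threshold diffusion models~\cite{IM2003}, every node placed in the seed set $S$ is treated as active at time $0$ with certainty, and the expected-spread objective $\sigma(S)$ counts its downstream activations unconditionally. I would argue that this is structural rather than incidental, because the submodularity and monotonicity that underpin the standard greedy guarantee rely on each seed contributing its full cascade; relaxing ``the seed always spreads'' is therefore exactly the premise the whole pipeline rests on. Showing that none of the surveyed variants (topic-aware, time-aware, community-aware, competitive, multi-strategy, out-of-home) removes this premise would establish the first clause of the Observation.

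Second, I would argue the unwillingness clause by separating the act of \emph{selection} from the act of \emph{propagation}. A node being influential does not make it a compliant broadcaster, so I would appeal to the distinction between a user's capacity to spread and their willingness to do so, pointing to the gap between predicted reach and realised engagement seen in practice. The point to make rigorous is that willingness is a latent, user-specific property not encoded in the edge probabilities; consequently an algorithm optimising $\sigma(S)$ can return a seed set whose realised spread falls well below $\sigma(S)$ whenever some chosen seeds abstain, which demonstrates that the assumption is not merely present but \emph{consequential}.

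Third, for the budget clause I would contrast the cost models. Classical IM either assumes seeds are acquired for free or folds acquisition into a budget or knapsack constraint that presumes the payment actually secures participation. I would argue that for resource-constrained entities such as startups and academic groups this monetary lever is simply unavailable, motivating a different actuator: instead of paying new seeds, one reconnects previously existing but estranged relationships to reshape the propagation topology, which is precisely the lever the RT$l$R problem formalises. The main obstacle here is methodological rather than technical: because the Observation is an empirical and modelling claim, it cannot attain the certainty of a proved theorem, so the difficulty lies in showing that the willing-seed assumption is simultaneously \emph{universal} across prior work and \emph{consequential} for the stated applications without overstating either. I would close this gap with the concrete scenarios motivated in the preceding observation and with the empirical evidence reported later in the experiments.
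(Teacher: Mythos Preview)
Your proposal is thoughtful but substantially over-engineered relative to what the paper actually does. In the paper, this statement is an \emph{Observation} (the \texttt{case} environment is aliased to ``Observation''), and the paper offers no proof or extended justification for it whatsoever: it is simply asserted in the Introduction as one of two motivating remarks, and the paper moves directly from it to the problem statement. There is no attempt to formalise the willing-seed assumption, no survey showing the premise is universal across IM variants, and no separate argument for the budget clause; the Observation is treated as self-evident commentary on the literature.

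So the comparison is this: the paper's ``proof'' is the empty proof (a bare motivational claim), whereas you propose a structured three-part argument---formalising the assumption inside IC/LT, arguing it is consequential via the selection/propagation distinction, and contrasting cost models to justify the budget clause. Your route is genuinely different and considerably stronger as a piece of argumentation; it would be appropriate if the statement were a proposition requiring defence. What the paper's approach buys is brevity and rhetorical momentum toward the RT$l$R definition; what yours buys is actual substantiation, at the cost of space and a methodological caveat you yourself flag (the claim is empirical, not provable). Neither is wrong, but if your task is to match the paper, you should simply note that the paper states the Observation without proof and uses it, together with the preceding Observation and the LinkedIn example, purely as motivation.
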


\noindent
\textbf{Our Problem. }
The aforementioned observations motivate us to propose and study a novel research problem, namely \underline{R}econnecting \underline{T}op-\underline{$l$} \underline{R}elationships (RT$l$R). Given a directed evolving graph $\mathcal{G} = \{G_i\}_0^{t-1}$, a parameter $l$, and an institute $\mathcal{U}$ contains a group of users, 
RT$l$R asks for reconnecting a set of $l$ estranged relationships (\textit{e.g., edges that have ever existed in $\mathcal{G}$ while disappearing in the near future snapshot graph $G_t$}). 
Reconnecting the selected edges in RT$l$R query to $G_t$ will maximize the number of influenced users in $G_t$ that are influenced by the members of $\mathcal{U}$.

\begin{figure}[t]
	\centering
		\includegraphics[scale=0.6]{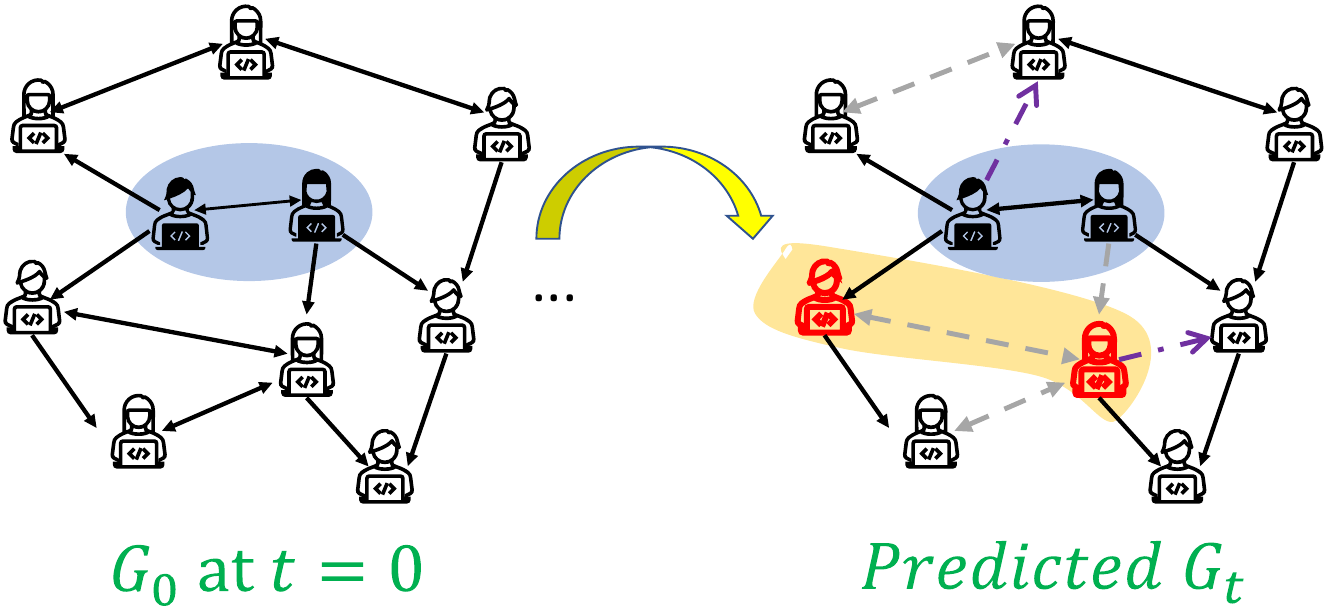}
		\caption{An example of RT$l$R query.}
		\vspace*{-3mm}
	\label{fig:Motivation}

\caption*{\small Note: the given users' group $\mathcal{U}$ are marked as black icons and covered by blue color, $G_0$ is the snapshot of the directed evolving graph $\mathcal{G} = \{G\}_0^{t-1}$ at time 0, and $G_t$ is the predicted graph snapshot of $\mathcal{G}$ at time $t$; the greyish dotted edges in $G_t$ represent the relationship between users exists in $\mathcal{G}$ while disappearing in $G_t$; the purple dotted lines represent the new adding edges in $G_t$; the edge of two red icons which covered by yellow color is the query result of RT$l$R problem.}
\end{figure}

\begin{example}[Motivation]
LinkedIn\footnote{https://www.linkedin.com/} is a business and employment oriented online social network. It provides a social network platform to allow members to create their profiles and "connect" to each other, representing real-world professional relationships. Members can also post their activity information (e.g., employment Ads) on LinkedIn. The study of RT$l$R can significantly enhance the stickiness of members in LinkedIn without any budgets paid by members or LinkedIn itself.

Figure~\ref{fig:Motivation} presents an evolving social network with ten members and their relationships. Suppose a research group (e.g., black icons)  will host an online virtual academic seminar next month. They post the seminar information on LinkedIn because they wish to attract as many researchers as possible to join their seminar in the month ahead (e.g., $G_t$).
By answering the RT$l$R query, LinkedIn can find out the optimal estranged relationships (e.g., among the greyish dotted edges), in which reconnecting them (e.g., red icons) will maximize the spread of seminar information in the coming month.
To reconnect the estranged relationships, a possible way is to send an email to the related users' platform Inbox and notify them of the recent news of their old friends.
Therefore, the study of RT$l$R query will benefit both users and the social media platform. The members will be more willing to keep active in the network platforms, which provide them a free and efficient information post service.
\end{example}

To the best of our knowledge, this is the first IM study that draws the inspiration from the intersection of (1) topology evolving prediction of social networks, and (2) no additional cost. 
As a result, the following challenges are important to be addressed. 

\vspace{1mm}
\noindent
\textbf{Challenges. }
The first challenge is how to predict the topology of social networks in a specified future period. To deal with this challenge, we adopt the link prediction method~\cite{zhang2021labeling} to predict the network structure evolution in evolving networks. 
The other challenge is the complexity of RT$l$R query problem.  
Unlike traditional IM studies that aim to find Top-$k$ influential users, our RT$l$R focuses on the edges discovery. The existing IM algorithms are not applicable to address the RT$l$R query, and a more detailed analysis is presented in Section~\ref{subsec:IManlyasis}. 
Thirdly, our RT$l$R query may return different results for different given user groups, while the IM problem only needs to be queried one time to get the most influential users.

To address these algorithmic challenges, we first propose a sketched-based greedy (SBG) algorithm to answer the RT$l$R query of a given group. Besides, a candidate edges reducing method has been proposed to boost the SBG algorithm's efficiency. Furthermore, we carefully designed a novel order-based SBG algorithm to accelerate the RT$l$R query.

\vspace{1mm}
\noindent
\textbf{Contributions. }
We state our major contributions as follows:
\begin{itemize}
    \item We introduce and formally define the problem of \textit{Reconnecting Top-$l$ Relationships} (RT$l$R) for the first time, and explain the motivation of solving the problem with real applications. We also prove that the RT$l$R query problem is NP-hard. 
    \item We propose a sketch-based greedy (SBG) approach to answer the RT$l$R queries. Besides, we present the pruning method to boost the efficiency of the SBG algorithm by reducing the number of candidate edges' probing.
    \item To further accelerate the RT$l$R query, we elaborately design a novel order-based algorithm to answer the RT$l$R query more efficiently.
    \item We conduct extensive experiments to demonstrate the efficiency and effectiveness of our proposed algorithms using real-world datasets.
\end{itemize}

\noindent
\textbf{Organization.} 
The remainder of this paper is organized as follows. First, we present the preliminaries in Section~\ref{sec:preliminaries} and formally define the RT$l$R problem in Section~\ref{sec:prob_def}. Then, we propose the sketch-based greedy approach and the  accelerate method in Section~\ref{sec:basic}. We further present a new order-based algorithm to efficiently answer the RT$l$R query in Section~\ref{sec:improved}. After that, the experimental evaluation and results are reported in Section~\ref{sec:exp}. Finally, we review the related works in Section~\ref{sec:related} and conclude this work in Section~\ref{sec:conclusion}.

\section{Preliminary} \label{sec:preliminaries}

\begin{table}[t!]
       \caption{Frequently used notations} 
       \vspace{-3mm}
  \begin{center}
    \label{tab:symbol}
    \scalebox{0.9}{
    \begin{tabular}{|c|m{6.2cm}|}
    \hline
      \textbf{Notation}         & \textbf{Definition and Description} \\
      \hline
       $\mathcal{G} = \{G\}^{t-1}_0$    & a directed evolving graph \\\hline
       $G_i$             & the snapshot graph of $\mathcal{G}$ at time point $i$ \\\hline
       $V$; $E_i$                 & the vertex set and edge set of $G_i$ \\\hline 
       $G_t$          & the predict snapshot graph of $\mathcal{G}$ at time point $t$ \\\hline
       $\mathcal{U}$          & the given users group \\\hline
       $I(\mathcal{U},G)$           & the number of activated users in graph $G$ by users in $\mathcal{U}$ \\\hline
       $E(I(\mathcal{U},G))$        & the expected number of users in graph $G$ that influenced by users set $\mathcal{U}$ \\\hline
       $\theta_1$                   & the number of generated RR sets \\\hline
       $S$ ($S_e$)                        & Candidate seed users (edges) set of IM (RT$l$R) query problem \\\hline
       $G_t \oplus S_e$           & Reconnecting the edges in $S_e$ of graph $G_t$  \\\hline
       $OPT$ ($OPT^*$)          & the maximum expected spread of any size-$k$ seed users (edges) set of IM (RT$l$R) query problem \\\hline
       $\theta_2$                 & the number of generated sketch subgraphs \\\hline
       $G_{sg} = \{G^j_{sg}\}^{\theta_2}_1$ & the sketch subgraph set \\\hline
       $\theta_3$                 & the number of generated sketch subgraphs in the SBG method \\\hline
    \end{tabular}}
  \end{center}
\end{table}

We define a directed evolving network as a sequence of graph snapshots $\mathcal{G} = \{G_{i}\}_0^{t-1}$, and $\{0,,1,..,t-1 \}$ is a set of time points. We assume that the network snapshots in $\mathcal{G}$ share the same vertex set. Let $G_i$ represent the network snapshot at timestamp $i\in [0,t-1]$, where each vertex $u$ in $V$ is a social user in $G_i$, each edge $e = (u, v)$ in $E_i$ represents a cyber link or a social relationship between users $u$ and $v$ in $G_i$. Similar to~\cite{DBLP:journals/tkde/JiaLDZGXZ21,das2019incremental}, we can create ``dummy" vertices at each time step $i$ to represent the case of vertices joining or leaving the network at time $i$ (\textit{e.g., $V = \cup_{i=1}^{t-1} V^i$ where $V^i$ is the set of vertices truly exist at $i$}). Besides, each edge $(u,v)\in E$ in $G$ is associated with a \textit{propagation probability} $p(u,v)\in [0,1]$.
Table~\ref{tab:symbol} summarizes the mathematical notations frequently used throughout this paper.

\subsection{Link Prediction}
Link prediction is an important network-related problem firstly proposed by
Liben-Nowell et al.~\cite{DBLP:conf/cikm/Liben-NowellK03}, which aims to infer the existence of new links or still unknown interactions between pairs of nodes based on their properties and the currently observed links. 

Given a directed evolving graph $\mathcal{G} = {G_i}^{t-1}_0$ with the time points set $\{0,1,..,t-1\}$, in this paper, we use the recent link prediction method~\cite{zhang2018link,zhang2021labeling}, named learning from \textbf{S}ubgraphs, \textbf{E}mbeddings, and \textbf{A}ttributes for \textbf{L}ink prediction (SEAL) method, to predict the graph structure of snapshot graph $G_t$ of $\mathcal{G}$ at the future time point $t$.
Specifically, SEAL is a graph neural network (GNN) based link prediction method that transforms the traditional link prediction problem into the subgraph classification problem. It first extracts the $h$-hop enclosing subgraph for each target link, and then applies a labeling trick, called Double Radius Node Labeling (DRNL), to add an integer label for each node relevant to the target link as its additional feature.  
Next, the above-labeled enclosing subgraphs are fed to GNN to classify the existence of links. Finally, it returns the predicted graph $G_t$ of evolving graph $\mathcal{G}$ at time point $t$. 



\subsection{Influence Maximization (IM) Problem}

To better understand the IM problem, we first introduce the influence diffusion evaluation of given users.

The independent cascade (\textit{IC}) model~\cite{IM2003} is the widely adopted stochastic model which is used for modeling the influence propagation in social networks. In the IC model, for each graph snapshot $G_i$, the \textit{propagation probability} $p(u,v)$ of an edge $(u,v)$ is used to measure the social impact from user $u$ to $v$. This probability is generally set as $p(u,v)=\frac{1}{d(v)}$, where $d(v)$ is the degree of $v$. Every user is either in an \emph{activated} state or \emph{inactive} state. $S_0$ be a set of initial \textit{activated} users, and generates the active set $S_t$ for all time step $t \geq 1$ according to the following randomized rule. At every time step $t \geq 1$, we first set $S_t$ to be $S_{t-1}$; Each user $u$ activated in time step $t$ has one chance to activate his or her neighbours $v$ with success probability $p(u,v)$. If successful, we then add $v$ into $S_t$ and change the status of $v$ to \textit{activated}. 
This process continues until no more possible user activation. Finally, $S_t$ is returned as the \textit{activated} user set of $S_0$.


Let $I(S, G_i)$ be the number of vertices that are activated by $S$ in graph snapshot $G_i$ on the above influence propagation process under the IC model. The IM problem aims to find a size-$k$ seed set $S$ with the maximum expected spread $E(I(S,G_i))$. We define the IM problem as follows:

\begin{definition}[IM problem~\cite{IM2003}]
Given a directed graph snapshot $G_i = (V, E_i)$, an integer $k$, the IM problem aims to find an optimal seed set $S^*$ satisfying,
\begin{equation}
    S^* = \mathop{\arg\max}_{S\subseteq V, |S|=k} E(I(S,G_i))
\end{equation}
\end{definition}

Let $OPT$ be the maximum expected spread of any size-$k$ seed set, then we have $OPT = E(I(S^*,G_i))$. 


\subsection{Reverse Reachable Sketch}
The \textit{Reverse Influence Set} (RIS) \cite{RIS14} sampling technique is a \textit{Reverse Reachable Sketch-based} method to solve the \textit{IM} problem. By reversing the influence diffusion direction and conducting reverse \textit{Monte Carlo} sampling~\cite{kroese2014monte}, \textit{RIS} can significantly improve the theoretical run time bound. 

\begin{definition}[Reverse Reachable Set~\cite{RIS14}] \label{def:RIS}
Suppose a user $v$ is randomly selected from $V$. The reverse reachable (RR) set of $v$ is generated by first sampling a graph $g$ from $G_i$, and then taking the set of users that can reach to $v$ in $g$. 
\end{definition}


By generating $\theta_1$ RR sets on random users, we can transform the IM problem to find the optimal seed set $S$, while $S$ can cover most RR sets. This is because if a user has a significant influence on other users, this user will have a higher probability of appearing in the RR sets. Besides, Tang et al.~\cite{DBLP:conf/sigmod/TangXS14} proved that when $\theta_1$ is sufficiently large, RIS returns near-optimal results with at least $1 - |V|^{-1}$ probability. Therefore, the process of using the RIS method to solve the IM query contains the following steps: 
\begin{itemize}
    \item [1] Generate $\theta_1$ random RR sets from $G_i$.
    \item [2] Find the optimal user set $S$ which can cover the maximum number of above generated RR sets.
    \item [3] Return the user set $S$ as the query result of IM query problem.
\end{itemize}

\begin{theorem}[Complexity of RIS~\cite{tang2014influence}] \label{theo:RIS}
If $\theta_1 \geq (8 + 2\varepsilon)\cdot |V| \cdot \dfrac{ln|V| + ln{|V|\choose k} + ln 2}{OPT\cdot e^2}$, RIS returns an $(1 - \frac{1}{e} - \varepsilon)$ approximate solution to the IM problem with at least $1 - |V|^{-1}$ probability. 
\end{theorem}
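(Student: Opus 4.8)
The plan is to follow the reverse-reachable-set framework of Borgs et al.\ as developed by Tang et al., proving the claim in three stages: a structural identity linking RR sets to the expected spread, a concentration argument over the $\theta_1$ sampled RR sets, and a transfer of the greedy max-coverage guarantee back to the influence objective. Throughout I take the bound on $\theta_1$ as given (it is stated in terms of the unknown $OPT$), so the only task is to verify that this sample size suffices.

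First I would establish the key identity: for any seed set $S$ and a node $v$ drawn uniformly at random from $V$,
\[ \Pr[\,S \cap \mathrm{RR}(v) \neq \emptyset\,] \;=\; \frac{E(I(S,G_i))}{|V|}, \]
which follows by reversing the orientation of every sampled live-edge subgraph and observing that $S$ activates $v$ in the forward IC process exactly when some node of $S$ lies in the RR set rooted at $v$; summing over all $v\in V$ gives $E(I(S,G_i)) = |V|\cdot \Pr[S\cap \mathrm{RR}(v)\neq\emptyset]$. Consequently, if $F_R(S)$ denotes the fraction of the $\theta_1$ generated RR sets that $S$ intersects, then $|V|\cdot F_R(S)$ is an unbiased estimator of $E(I(S,G_i))$, and the task of covering the maximum number of RR sets (Step~2 of the procedure) is the sample analogue of the IM objective.

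Next I would apply a Chernoff bound to control the deviation of $|V|\cdot F_R(S)$ from $E(I(S,G_i))$. Writing $\rho_S = E(I(S,G_i))/|V|$, the number of covered RR sets is $\mathrm{Binomial}(\theta_1,\rho_S)$, so a multiplicative Chernoff bound gives failure probability at most $2\exp(-\Theta(\theta_1\rho_S\varepsilon^2))$. I would split the accuracy budget $\varepsilon$ and require two events simultaneously: (i) $|V|\cdot F_R(S^*) \ge (1-\varepsilon_1)\,OPT$ for the optimal size-$k$ set $S^*$, which costs one two-sided Chernoff event (hence the $\ln 2$); and (ii) every size-$k$ set $S$ with $E(I(S,G_i)) < (1-1/e-\varepsilon)\,OPT$ satisfies $|V|\cdot F_R(S) < (1-1/e)(1-\varepsilon_1)\,OPT$, which costs a union bound over the at most $\binom{|V|}{k}$ such sets (hence the $\ln\binom{|V|}{k}$). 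Since $\rho_{S^*}\ge OPT/|V|$ can be as small as $OPT/|V|$, the sample size must scale like $|V|/OPT$ so that $\theta_1\rho_S$ is large enough; pushing the remaining slack into the exponent so that the combined failure probability is below $|V|^{-1}$ contributes the extra $\ln|V|$, and tracking the constants through the two Chernoff estimates and the union bound reproduces the stated threshold on $\theta_1$.

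Finally I would close the loop with the standard $(1-1/e)$ guarantee of the greedy algorithm for maximum coverage: the set $S^g$ returned in Step~2 satisfies $F_R(S^g)\ge (1-1/e)\,F_R(S^*)$, hence on the event in (i), $|V|\cdot F_R(S^g)\ge (1-1/e)(1-\varepsilon_1)\,OPT$; by the event in (ii) this is incompatible with $E(I(S^g,G_i)) < (1-1/e-\varepsilon)\,OPT$, so $S^g$ is an $(1-1/e-\varepsilon)$-approximate solution, and the total failure probability is at most $|V|^{-1}$. The main obstacle is exactly the bookkeeping in the third paragraph: one must choose the split $\varepsilon_1$ and the Chernoff constant so that the compounded loss — estimation error on $S^*$, the union bound over ``bad'' sets, and the $(1-1/e)$ greedy loss — lands precisely at $(1-1/e-\varepsilon)$ while keeping $\theta_1$ at the claimed value. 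The identity and the Chernoff/greedy steps themselves are routine once these parameters are fixed.
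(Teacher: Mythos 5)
Your proposal is correct and is essentially the canonical proof of this result: the paper states the theorem as an imported result of Tang et al.~\cite{tang2014influence} without proving it, and your three ingredients --- the identity $E(I(S,G_i)) = |V|\cdot \Pr[S\cap \mathrm{RR}(v)\neq\emptyset]$, Chernoff concentration of the covered fraction with a union bound over the at most ${|V|\choose k}$ size-$k$ seed sets, and the $(1-\frac{1}{e})$ greedy max-coverage guarantee --- are exactly the ingredients of that proof (the same Chernoff-plus-union-bound template is what the paper later tweaks for Theorem~\ref{theo: FIproof}). The only caveat is notational: the denominator in the stated sample-size bound should read $OPT\cdot\varepsilon^{2}$ (the ``$e^{2}$'' is a typo), which is what your bookkeeping implicitly assumes.
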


\subsection{Forward Influence Sketch} \label{subsection:FI-SKETCH} 
The Forward Influence Sketch (FI-SKETCH) method~\cite{10.1145/2661829.2662077,10.1145/2505515.2505541,ohsaka2014fast} 
constructs a sketch by extracting the subgraph induced by an instance of the influence process (\textit{e.g., the IC model}). Then, it can estimate the influence spread of a seed set $S$ using these subgraphs accurately with theoretical guarantee. The process of using the FI-SKETCH method to solve the IM query contains the following steps: 
\begin{itemize}
    \item [1] Generate $\theta_2$ sketch subgraph $G^j_{sg}$ by removing each edge $e=(u,v)$ from $G_i$ with probability $1 - P_{u,v}$.
    \item [2] Find the optimal user set $S$, while the average number of users reached by $S$ within $\theta_2$ constructed sketches graphs is maximum.
    \item [3] Return the user set $S$ as the query result of IM query problem.
\end{itemize}

\begin{theorem}[Complexity of FI-SKETCH~\cite{10.1145/2505515.2505541}] \label{theo:FI-Sketch}
If $\theta_2 \geq (8 + 2\varepsilon)\cdot |V| \cdot \dfrac{ln|V| + ln{|V|\choose k} + ln 2}{\varepsilon^2}$, FI-SKETCH returns an $(1 - \frac{1}{e} - \varepsilon)$ approximate solution to the IM problem with at least $1 - |V|^{-1}$ probability. 
\end{theorem}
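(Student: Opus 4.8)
The plan is to follow the same template as the analysis behind Theorem~\ref{theo:RIS}. First I would show that the objective FI-SKETCH maximizes is an unbiased, monotone submodular surrogate for the true expected spread, so that the greedy selection loses only a factor $1-\frac{1}{e}$ against it; then I would use Chernoff concentration together with a union bound over all size-$k$ seed sets to transfer that guarantee back to $E(I(\cdot,G_i))$, at the price of an extra $\varepsilon$ in the approximation ratio. Concretely, write $\Lambda_j(S)$ for the number of vertices reachable from $S$ in the $j$-th sketch subgraph $G^j_{sg}$ and set $\hat{I}(S)=\frac{1}{\theta_2}\sum_{j=1}^{\theta_2}\Lambda_j(S)$. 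Because deleting each edge $(u,v)$ independently with probability $1-p(u,v)$ is exactly the live-edge realisation of the IC process, $E(\Lambda_j(S))=E(I(S,G_i))$, hence $E(\hat{I}(S))=E(I(S,G_i))$ for every $S$; and $\Lambda_j(S)=|\bigcup_{u\in S}R_j(u)|$, with $R_j(u)$ the set reachable from $u$ in $G^j_{sg}$, is a coverage function and therefore monotone and submodular, a property inherited by the average $\hat{I}$. Running greedy on $\hat{I}$ under the cardinality constraint thus returns $S^\circ$ with $\hat{I}(S^\circ)\ge(1-\frac{1}{e})\max_{|S|=k}\hat{I}(S)\ge(1-\frac{1}{e})\,\hat{I}(S^*)$, where $S^*$ is the true optimum, $E(I(S^*,G_i))=OPT$.

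The heart of the proof is the concentration step. Fix a seed set $S$; the scaled samples $\Lambda_j(S)/|V|\in[0,1]$ are i.i.d.\ with mean $E(I(S,G_i))/|V|$, and since every size-$k$ set satisfies $E(I(S,G_i))\ge|S|\ge 1$, this mean is at least $1/|V|$ --- which is precisely why $|V|$, rather than $OPT$, ends up in the bound on $\theta_2$. A one-sided multiplicative Chernoff bound then gives, for a parameter $\varepsilon_1$ to be fixed as a constant fraction of $\varepsilon$, that $\hat{I}(S)\le(1+\varepsilon_1)\,E(I(S,G_i))$ (and, for $S^*$, that $\hat{I}(S^*)\ge(1-\varepsilon_1)\,OPT$) fails with probability at most $\exp(-\varepsilon_1^2\theta_2/(3|V|))$. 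Allotting a failure budget of $\frac{1}{2|V|\binom{|V|}{k}}$ to each of the at most $\binom{|V|}{k}$ relevant events and taking a union bound caps the total failure probability at $|V|^{-1}$; this makes the reciprocal log of the per-event budget equal to $\ln|V|+\ln\binom{|V|}{k}+\ln 2$, and substituting $\theta_2\ge(8+2\varepsilon)\,|V|\cdot\frac{\ln|V|+\ln\binom{|V|}{k}+\ln 2}{\varepsilon^2}$ into the Chernoff exponent makes it large enough once $\varepsilon_1$ satisfies $\varepsilon_1^2\ge\frac{3\varepsilon^2}{8+2\varepsilon}$. On the complementary event, $(1+\varepsilon_1)\,E(I(S^\circ,G_i))\ge\hat{I}(S^\circ)\ge(1-\frac{1}{e})\,\hat{I}(S^*)\ge(1-\frac{1}{e})(1-\varepsilon_1)\,OPT$, so $E(I(S^\circ,G_i))\ge\frac{(1-1/e)(1-\varepsilon_1)}{1+\varepsilon_1}\,OPT\ge(1-\frac{1}{e}-\varepsilon)\,OPT$, which is the claim.

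The unbiasedness and submodularity facts, and the union-bound skeleton, are routine and parallel the RIS argument essentially verbatim. The part that needs care is bookkeeping the constants so that a single threshold $8+2\varepsilon$ simultaneously (i) pushes the Chernoff exponent above $\ln|V|+\ln\binom{|V|}{k}+\ln 2$ for every one of the $\binom{|V|}{k}$ size-$k$ sets and (ii) leaves enough slack, after the $\frac{1-\varepsilon_1}{1+\varepsilon_1}$ loss in the last display, to still deliver $1-\frac{1}{e}-\varepsilon$; reconciling these two demands in the choice of $\varepsilon_1$ is the main obstacle, and it is where the factor $8+2\varepsilon$ (rather than the bare $3$ coming out of the Chernoff bound) originates. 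One subtlety worth noting is that requiring the relative-error bound to hold uniformly over all size-$k$ sets is slightly wasteful: a sharper argument, mirroring the $OPT$ in the denominator of Theorem~\ref{theo:RIS}, would only ask for it near the optimum and handle the remaining sets through absolute-deviation upper tails, but the uniform version already suffices for the stated $\theta_2$.
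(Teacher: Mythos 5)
The paper never actually proves this statement---it is imported by citation in the preliminaries---so the only in-paper yardstick is the analogous SBG analysis (Lemma~\ref{lem:Hoeff}, Lemma~\ref{lem:expected}, Theorems~\ref{theo: FIproof} and~\ref{theo:sbgproof}), and your reconstruction is correct and follows essentially that same skeleton: live-edge unbiasedness of the sketch estimate, a monotone submodular coverage objective yielding the $(1-\frac{1}{e})$ greedy factor, and Chernoff concentration with a union bound over the ${|V|\choose k}$ size-$k$ sets, which is exactly where the $\ln|V|+\ln{|V|\choose k}+\ln 2$ numerator and the $1-|V|^{-1}$ success probability come from. The only cosmetic difference is that you track multiplicative $(1\pm\varepsilon_1)$ deviations and lose the factor $\frac{1-\varepsilon_1}{1+\varepsilon_1}$ (using $E(I(S,G_i))\geq 1$ to replace $OPT$ by $|V|$ in the sample bound), whereas the paper's own Theorem~\ref{theo: FIproof} phrases the concentration as an additive $\varepsilon/2$ deviation before composing with $1-\frac{1}{e}$; both are standard and your constant bookkeeping for $\varepsilon_1$ is consistent with the stated threshold $(8+2\varepsilon)$.
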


\section{Problem Definition}
\label{sec:prob_def}
In this section, we formulate the \textit{Reconnecting Top-$l$ Relationships} (RT$l$R) query problem and analyze 
its complexity.


\begin{definition}[RT$l$R Problem] \label{def:prob}
Given a directed evolving graph $\mathcal{G} = \{G_i\}^{t-1}_0$, the parameter $l$, and a group of users $\mathcal{U}$, the problem of \textit{Reconnecting Top-$l$ Relationships} (RT$l$R) asks for finding an optimal edge set $S$ with size $l$ in predicted graph snapshot $G_t$ of $\mathcal{G}$ at time $t$, where the expected spread of $\mathcal{U}$ will be maximized while reconnecting edges of $S_e$ in $G_t$ (\textit{e.g., $ \widehat{G_t} = G_t \oplus S_e$}). Formally, 

\begin{equation}
    \widehat{S_e} = \mathop{\arg\max}_{S_e\subseteq \mathcal{G}\setminus G_t} E(I(\mathcal{U},\widehat{G_t}) \\ \\
\end{equation}
\end{definition}

In the following, we conduct a theoretical analysis on the hardness of the RT$l$R problem.

\begin{theorem}[Complexity] \label{theo:problem_complexity} \label{theo:NPhard}
The RT$l$R problem is NP-hard. 
\end{theorem}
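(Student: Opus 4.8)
The plan is to establish NP-hardness by a polynomial-time reduction from a known NP-hard problem. The natural candidate is the classical \emph{Influence Maximization} (IM) problem under the IC model, which is NP-hard via a reduction from Set Cover / Maximum Coverage (as established in the foundational work cited as \cite{IM2003}). The goal is to show that an arbitrary IM instance can be embedded into an RT$l$R instance so that solving the latter solves the former.

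First I would take an arbitrary IM instance: a directed graph $G_i = (V, E_i)$ with propagation probabilities and an integer $k$, asking for the size-$k$ seed set maximizing expected spread. I would construct an RT$l$R instance as follows. Let $G_t$ be essentially a copy of $G_i$ augmented with one extra ``source'' vertex $s$; place $s$ in the query group $\mathcal{U} = \{s\}$. For every vertex $v \in V$, add a ``candidate'' edge $(s, v)$ with propagation probability $1$, and declare all these edges to be estranged (i.e., make them edges that existed at some earlier snapshot $G_j$ in the evolving graph $\mathcal{G}$ but are absent from $G_t$). Set $l = k$. Then reconnecting a size-$l$ set $S_e$ of these candidate edges is exactly equivalent to choosing a seed set of $l$ vertices in $G_i$: activating $s$ deterministically activates precisely the chosen $l$ vertices (via the probability-$1$ edges), after which the diffusion proceeds inside the copy of $G_i$ identically to the original IM process. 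Hence $E(I(\mathcal{U}, G_t \oplus S_e)) = E(I(S, G_i)) + 1$ where $S$ is the corresponding seed set, so a maximizer of the RT$l$R objective yields a maximizer of the IM objective. Since the reduction is clearly polynomial and IM is NP-hard, RT$l$R is NP-hard.

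The key steps, in order, are: (1) recall that IM under the IC model is NP-hard; (2) describe the gadget construction (the source vertex $s$, the probability-$1$ candidate edges, marking them estranged, setting $l = k$); (3) verify the correspondence between size-$l$ reconnection sets and size-$k$ seed sets, including the argument that the extra edges and vertex do not perturb the downstream cascade; (4) conclude that an oracle for RT$l$R solves IM, establishing NP-hardness, and note the reduction runs in polynomial time.

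I expect the main obstacle to be a careful treatment of step (3): one must argue that the evolving-graph structure constraint (that reconnected edges must have genuinely existed in some $G_j$) is satisfied by the construction — this is handled by simply declaring those snapshots to contain the needed edges, which is legitimate since we are free to build the whole instance $\mathcal{G}$ — and that the transformation is faithful, i.e., that no better RT$l$R solution can ``cheat'' by using the probability-$1$ edges in a way that has no IM analogue. Since $\mathcal{U} = \{s\}$ and $s$'s only out-edges are the candidate edges, every influence path from $s$ must pass through exactly the reconnected vertices, so the equivalence is exact and there is no room for such cheating. A secondary point worth a sentence is that $s$ itself contributes a fixed $+1$ to the count that is independent of $S_e$, so it does not affect which $S_e$ is optimal.
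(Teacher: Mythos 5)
Your proof is correct, but it takes a genuinely different route from the paper's. The paper argues via the maximum coverage (MC) problem: each candidate edge $e \in \mathcal{G}\setminus G_t$ is identified with the set of additional users $I(\mathcal{U}, G_t\oplus e) - I(\mathcal{U},G_t)$ that reconnecting $e$ would let $\mathcal{U}$ reach, choosing $l$ edges is then viewed as choosing $l$ such sets to cover the most elements, and NP-hardness is claimed by appeal to MC; this edge-to-set correspondence is essentially the same one that later underlies the greedy $(1-1/e)$ analysis. You instead reduce from influence maximization itself, via an explicit gadget: a fresh source $s$ as the query group, probability-$1$ estranged edges $(s,v)$ for every $v$ placed in an earlier snapshot so that $\mathcal{G}\setminus G_t$ is exactly this edge set, and $l=k$, so that size-$l$ reconnection sets correspond bijectively to size-$k$ seed sets and the two objectives differ by a constant. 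Your construction is more self-contained and, in one respect, cleaner: the instance is explicitly constructible in polynomial time and hardness is inherited directly from the known IC-model hardness of IM, whereas the paper's mapping as written builds the MC sets from an RT$l$R instance (and would require evaluating influence spreads to do so), so it reads more like a reduction to MC than a gadget in the hardness-establishing direction. Two points you should state explicitly: your probability-$1$ candidate edges rely on the general IC model with arbitrary $p(u,v)\in[0,1]$ (the paper's $p(u,v)=1/d(v)$ is only described as the usual convention, so this is legitimate), and, exactly as the paper does, you must treat $G_t$ as part of the constructed instance rather than as the output of the link-prediction step.
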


\begin{proof}
We prove the hardness of RT$l$R problem by a reduction from the decision version of the maximum coverage (MC) problem~\cite{karp1972reducibility}. Given an integer $l$ and several sets where the sets may have some elements in common, the maximum coverage problem aims to select at most $l$ of these sets to cover the maximum number of elements. Furthermore, we need to discuss the existence of a solution that the MC problem is reducible to the RT$l$L problem in polynomial time.

Given a directed evolving graph $\mathcal{G}$, a group of users $\mathcal{U}$, and the predicted snapshot graph $G_t$ from $\mathcal{G}$, we reduce the MC problem to RT$l$L with the following process: (1) For a given group $\mathcal{U}$, we compute the influence users set of $\mathcal{U}$ as $I(\mathcal{U},G_t)$; (2) $\forall e\in \mathcal{G} \setminus G_t$, we create a set $S_e$ with the elements 
collected from the influenced users $I(\mathcal{U}, \widehat{G_t}) - I(\mathcal{U},G_t)$ while $\widehat{G_t} = G_t\oplus e$; (3) We set the reconnecting edges of RT$l$L as $l$, which is the same as the input of $MC$.
The above reduction can be done in polynomial time. Since the \textit{Maximum Coverage} problem is NP-hard, so is the RT$l$L problem.
\end{proof}

\begin{theorem}[Influence Spread] \label{lemma:greedy}
The influence spread function $I(.)$ under the RT$l$R problem is \textit{monotone} and \textit{submodular}. 
\end{theorem}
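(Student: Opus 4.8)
The plan is to use the \emph{live-edge} (possible-world) formulation of the IC model, so that the probabilistic statement reduces to a deterministic, per-world statement, and then to recover the claim by linearity of expectation. Concretely, I would fix a single source of randomness once and for all: for every edge $e$ that lies in $G_t$ or in the candidate pool $\mathcal{G}\setminus G_t$, toss an independent coin that makes $e$ \emph{live} with probability $p(e)$. This one coupling simultaneously realizes, for every candidate set $S_e$, the random subgraph $g(S_e)$ consisting of the live edges of $G_t\oplus S_e$, and it gives $E(I(\mathcal{U},G_t\oplus S_e)) = \mathbb{E}_g\!\left[\,|R_{g(S_e)}(\mathcal{U})|\,\right]$, where $R_H(\mathcal{U})$ denotes the set of vertices reachable from $\mathcal{U}$ in a deterministic digraph $H$. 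Monotonicity then drops out immediately: if $S_e\subseteq T_e$, then under the coupling $g(S_e)$ is an edge-subgraph of $g(T_e)$, hence $R_{g(S_e)}(\mathcal{U})\subseteq R_{g(T_e)}(\mathcal{U})$ in every world, and averaging over worlds preserves $E(I(\mathcal{U},G_t\oplus S_e))\le E(I(\mathcal{U},G_t\oplus T_e))$.

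For submodularity I would again argue world by world: it suffices to show that, for each fixed live-edge world $g$, the set function $\phi_g(S_e):=|R_{g(S_e)}(\mathcal{U})|$ is submodular, since $E(I(\mathcal{U},G_t\oplus\cdot))=\sum_g \Pr[g]\,\phi_g(\cdot)$ is then a nonnegative convex combination of submodular functions. So fix $S_e\subseteq T_e$ and a candidate edge $e^{*}=(u,v)\notin T_e$; the target inequality is
\[
\phi_g(S_e\cup\{e^{*}\})-\phi_g(S_e)\ \ge\ \phi_g(T_e\cup\{e^{*}\})-\phi_g(T_e).
\]
If $e^{*}$ is not live in $g$, both sides vanish. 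Otherwise the marginal gain of adding $e^{*}$ to a set $X$ equals the number of vertices reachable from $v$ inside $g(X\cup\{e^{*}\})$ that are not already in $R_{g(X)}(\mathcal{U})$, and this marginal set is empty unless $u\in R_{g(X)}(\mathcal{U})$. As $X$ grows from $S_e$ to $T_e$, two competing effects occur: the already-covered set $R_{g(X)}(\mathcal{U})$ grows, which \emph{removes} vertices from the marginal set, while the set reachable from $v$ in $g(X\cup\{e^{*}\})$ may also grow, which could \emph{add} some. The crux is to show the net change is a shrinkage, e.g.\ by taking any vertex $w$ counted in the marginal gain at $T_e$, following a witnessing path from $v$ to $w$ in $g(T_e\cup\{e^{*}\})$, and arguing that either $w$ was already covered at $S_e$ or that same path already lies in $g(S_e\cup\{e^{*}\})$ — using that the seed group $\mathcal{U}$ is fixed, that $I(\cdot)$ only sees reachability (not path multiplicities), and that the reconnected edges are previously-existing edges of $\mathcal{G}$.

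That last step is where I expect the real difficulty to sit: reconciling the two opposing effects — newly covered targets shrinking the marginal set versus newly opened $v$-rooted paths enlarging it — so that the former always dominates. This is precisely the point at which the naive ``adding an edge can only help'' intuition is not enough, and it is where one must lean on the structural specifics of the reconnection setting (and, if needed, on a mild restriction on which estranged edges are admissible as candidates). An alternative, cleaner route, consistent with the reduction used for Theorem~\ref{theo:NPhard}, is to work directly with the coverage-style objective induced by per-edge influenced sets $I(\mathcal{U},G_t\oplus e)\setminus I(\mathcal{U},G_t)$, for which monotonicity and submodularity are immediate since coverage functions are monotone and submodular. In either case, once the per-world (or coverage) submodularity is in hand the theorem is complete, because both monotonicity and submodularity are preserved under the nonnegative convex combination $\sum_g\Pr[g]\,\phi_g$.
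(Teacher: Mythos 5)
Your monotonicity half is sound: the live-edge coupling makes the inequality for $S_e\subseteq T_e$ immediate, and it is in fact more rigorous than the paper's own proof, which simply asserts both the monotonicity and the submodularity inequalities without any coupling or per-world argument. The genuine gap is the submodularity half, which you set up but explicitly leave open — and that step does not just "sit where the difficulty is," it fails in general. Take a single deterministic world (all coins live), $\mathcal{U}=\{s\}$, a base graph $G_t$ in which a vertex $b$ already reaches a large set $C$ but neither $a$ nor $b$ is reachable from $s$, and candidate edges $e_1=(s,a)$, $e_2=(a,b)$ (both may well appear in earlier snapshots, so nothing in the RT$l$R definition excludes them from $\mathcal{G}\setminus G_t$). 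Then the marginal gain of $e_2$ at $\emptyset$ is $0$, while its marginal gain at $\{e_1\}$ is $1+|C|$, violating submodularity: two reconnected edges that complete a directed path from the seed act as complements, not substitutes. Hence $\phi_g$ is not submodular per world, the convex-combination step has nothing to average, and no amount of path-chasing in your "crux" paragraph can rescue it without an extra structural assumption (e.g., every admissible candidate edge leaving a vertex already reached by $\mathcal{U}$ in $G_t$ alone), which neither you nor the paper imposes at the level of the theorem statement.

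Your proposed fallback — working with the coverage-style objective built from the per-edge sets $I(\mathcal{U},G_t\oplus e)\setminus I(\mathcal{U},G_t)$ — is monotone and submodular, but it is a different set function from $E(I(\mathcal{U},G_t\oplus S_e))$: it discards exactly the interactions among simultaneously added edges (both the synergies in the counterexample above and overlaps beyond a union), so proving its submodularity does not establish the stated theorem about $I(\cdot)$, only about a surrogate. So the honest comparison is this: the paper claims the result and "proves" it by restating the two defining inequalities; you attempt a real proof along the standard Kempe--Kleinberg--Tardos route, succeed for monotonicity, and stall at precisely the point where the claim itself is untrue without further restrictions on the candidate edge set.
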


\begin{proof}
Given a snapshot graph $G_t$, and a group $\mathcal{U}\in V(G_t)$, $I(\mathcal{U}, G_t)$ represents the influenced user set of $\mathcal{U}$. For two edge sets $S_e \subseteq T_e$, we have $I(\mathcal{U}, G_t \oplus S_e) \leq I(\mathcal{U}, G_t \oplus T_e)$. Then, we have verified that $I(.)$ is \textit{monotone}. Besides, for a new reconnecting edge $e$, the marginal contribution when added to set $S_e$ and $T_e$ respectively satisfies $I(\mathcal{U},G_t \oplus (S_e\cup e)) - I(\mathcal{U}, G_t \oplus S_e) \geq I(\mathcal{U},G_t \oplus (T_e\cup e)) - I(\mathcal{U}, G_t \oplus T_e)$. Therefore, we have proved that $I(.)$ is \textit{submodular}. Thus, we can conclude that the influence spread function $I(.)$ of RT$l$L problem is \textit{monotone} and \textit{submodular}.   
\end{proof}

\section{Sketch based Greedy Algorithm} \label{sec:basic}
To answer the RT$l$R query problem, we first predict the graph structure of the given evolving graph $\mathcal{G}$ 
at 
$t$ by using the link prediction method~\cite{zhang2021labeling}. According to Theorem~\ref{lemma:greedy}, the influence spread function of 
RT$l$R 
is \textit{submodularity} and \textit{monotonicity}. Therefore, one possible solution of the RT$l$L problem is to use the greedy approach to iteratively find out the most influential edge $e$, in which reconnecting $e$ in predicted snapshot graph $G_t$ will maximize the influence spread of given users group $\mathcal{U}$ in $\widehat{G_t}$ (\textit{e.g., $\widehat{G_t} = G_t \oplus e$}). 
So far, the remaining challenge of RT$l$R query is to evaluate the effect of a reconnected edge $e$ on the influence spread of $\mathcal{U}$ in $G_t$. 

\subsection{Existing IM Approaches Analysis}\label{subsec:IManlyasis}
As mentioned in~\cite{IM2003}, we can estimate the influence spread of given users by using the \textit{Monte Carlo} simulation. Specifically, given users group $\mathcal{U}$, we simulate the randomized diffusion process with $\mathcal{U}$ in $G_t$ for $\mathcal{R}$ times. Each time we count the number of active users after the diffusion ends, and then we take the average of these counts over the $\mathcal{R}$ times as the estimated number of influenced users of $\mathcal{U}$. However, the \textit{Monte Carlo} simulation method is much time-consuming and cannot be used in the large graph. Later on, Borgs et al.~\cite{RIS14} proposed a \textit{Reverse Reachable Sketch-based} method to the \textit{IM} problem, named \textit{Reverse Influence Set} (RIS) sampling, and the extended versions of the RIS method~\cite{tang2014influence,nguyen2016stop,nguyen2017importance} were widely used to answer the IM problem as the state-of-the-art IM query methods.  
The \textit{Reverse Influence Set} (RIS) sampling technique is a \textit{Reverse Reachable Sketch-based} method to the IM problem. By reversing the influence diffusion direction and conducting reverse \textit{Monte Carlo} sampling, \textit{RIS} can significantly improve the theoretical run time bound of the IM problem. 

Unfortunately, the RIS sampling method is not suitable for answering our RT$l$R query. That is because the RIS sampling is designed to find the Top-$k$ most influential users in a graph, but our RT$l$R query focuses on reconnecting several optimal edges to enhance a given user group's influence spread. In particular, the RIS sampling method transforms the IM problem to find the optimal seed set $S$ by generating $\theta_1$ RR sets, while $S$ can cover most RR sets. The RR sets only contain the user's information while discarding the graph sketch (\textit{e.g., the edge's information}). Therefore, if we use the RIS sampling to answer the RT$l$R query, we have to recompute the RR sets for each edge insertion during the RT$l$R query process, which is time-consuming and unrealistic in large graphs.

\subsection{FI-Sketch based Greedy Algorithm} \label{subsec:sbg}
Facing the challenges mentioned above, we propose a sketch-based greedy (SBG) method to answer the RT$l$R query. Precisely, we first set $\theta_3$ as a sufficient number of generated sketch subgraphs in our SBG method to theoretically ensure the quality of the returned results for the RT$l$R query (\textit{i.e., the details of how $\theta_3$ should be set will further discuss in Section~\ref{subsec:theta_3}}). Then, we use the FI-SKETCH to evaluate the effect of a new adding edge $e$ on the influence spread of a given users group $\mathcal{U}$ based on the $\theta_3$ generated sketch subgraphs. Compared with the RIS approach, the graph structure information was contained in the generated $\theta_3$ sketch subgraphs during the process of the FI-SKETCH approach (refer to Section~\ref{subsection:FI-SKETCH}), so that we do not need to recompute the sketches while the edges update.
\setlength{\textfloatsep}{0.2cm}
\begin{algorithm}[t!] 
	\caption{\textbf{RT$l$R:} SBG} 
	\label{alg:greedy_basic}
	\KwIn{$\mathcal{G}=\{G_i\}^{t-1}_0:$ an evolving graph, $l$: the number of selected edges, and $\mathcal{U}$: a group of users}
	\KwOut{ $\widehat{S_e}:$ the optimal reconnecting edge set} 
	\BlankLine
	Predict the snapshot graph $G_t$ from $\mathcal{G}$~\cite{zhang2021labeling}; \\  \label{alg1:1}
	Generate $\theta_3$ sketch subgraph $G_{sg} = \{G^j_{sg}\}_1^{\theta_3}$; \\
	Initialize $\widehat{S_e}\leftarrow \emptyset$,
	Candidate edges set $CE\in \{\mathcal{G}\setminus G_t\}$; \\ \label{alg1:2}
	\For {$i = 1$ to $l$}{  \label{alg1:greedy1}
	    $\widehat{e} \leftarrow \mathop{\arg\max}_{e\in CE}$ FI-SKETCH$(\mathcal{U},e)$; \\
        $\widehat{S_e}\leftarrow \widehat{S_e}\cup \widehat{e}$; \\ \label{alg1:greedy2} 
    }
\Return{$\widehat{S_e}$}	 \label{alg1:result}

\SetKwFunction{FMain}{FI-SKETCH} 
    \SetKwProg{Fn}{Function}{:}{}
    \Fn{\FMain{$\mathcal{U},e$}}{  \label{alg1:FI1}
    $count \leftarrow 0$; \\ 
    \For{$j = 1$ to $\theta_3$}{
       $\widehat{G^j_{sg}} \leftarrow G^j_{sg} \oplus \{\widehat{S_e} \cup e\}$; \\
       $n_a\leftarrow$ the number of vertexes reached by $\mathcal{U}$ in $\widehat{G^j_{sg}}$; \\ 
       $count \leftarrow count + n_a$; \\
    }
   \Return{$\dfrac{count}{\theta_3}$}\label{alg1:FI2}
}
\textbf{End Function}
\end{algorithm}
\setlength{\floatsep}{0.2cm}

The details of the SBG method are described in Algorithm~\ref{alg:greedy_basic}. In the pre-computing phase (Lines~\ref{alg1:1}-\ref{alg1:2}), we predict the snapshot graph $G_t$ using the link prediction method~\cite{zhang2021labeling}, and then generate $\theta_3$ random sketch graphs by removing each edge $e = (u,v)$ from $G_t$ with probability $1 - P_{u,v}$. Besides, based on Definition~\ref{def:prob}, we initialize $CE\in \{\mathcal{G} \setminus G_t\}$ as the candidate edges set of the RT$l$R query. 
In the main body of SBG (Lines~\ref{alg1:greedy1}-\ref{alg1:greedy2}), we use the greedy method to iteratively find the $l$ number of optimal reconnecting edges. Specifically, in each iterative, we call the \textit{FI-SKETCH Function} to find out the optimal edge $\widehat{e}$ from the candidate edge set $CE$ and add $\widehat{e}$ into set $\widehat{S_e}$ , while reconnecting the selected edge can maximize the influence diffusion of given users group $\mathcal{U}$. Meanwhile, given an edge $e$, the \textit{FI-SKETCH Function} returns back the influenced users evaluation results by using the \textit{Forward Influence Sketch} method mentioned in Section~\ref{subsection:FI-SKETCH} (Lines~\ref{alg1:FI1}-\ref{alg1:FI2}).     
Finally, we return edges set $\widehat{S_e}$ as the result of RT$l$R query (Line~\ref{alg1:result}).

\vspace{1mm}
\noindent
\textbf{Complexity.} 
The time complexity of calling the FI-SKETCH function for each candidate edges is $\mathcal{O}(\theta_3 \cdot |E_t|)$, while the space complexity is $\mathcal{O}(\theta_3 \cdot (|V| + |E_t|))$. Hence, the time complexity and space complexity of SBG algorithm are $\mathcal{O}(l \cdot |CE| \cdot \theta_3 \cdot |E_t|$) and $\mathcal{O}(\theta_3 \cdot (|V| + |E_t|))$, respectively. 

\subsection{Theoretical Analysis of SBG} \label{subsec:theta_3}
In this part, we will establish our theoretical claims for SBG. Specifically, we analyze how $\theta_3$ should be set to ensure our SBG method returns near-optimal results to RT$l$R query with high probability.
Our analysis highly relies on the \textit{Chernoff bounds}~\cite{Hoeffdinginequality}. 

\begin{lemma} \label{lem:Hoeff} 
Let $X_1$,...,$X_r$ be $r$ number of independent random variables in $[0,1]$ and $X = $ with a mean $\mu$. For any $\sigma >0$, we have
\begin{equation}
\begin{split}
    & Pr[X - r\mu \geq \sigma\cdot r\mu] \leq exp(-\frac{\sigma^2}{2 + \sigma}r\mu), \\
    & Pr[X - r\mu \leq -\sigma\cdot r\mu] \leq exp(-\frac{\sigma^2}{2}r\mu).
\end{split}
\end{equation}
\end{lemma}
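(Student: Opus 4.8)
The plan is to derive both inequalities by the standard exponential-moment (Chernoff) technique applied to $X=\sum_{i=1}^{r}X_i$, where the $X_i$ are the independent $[0,1]$-valued variables and $E[X]=r\mu$. For the upper tail, fix any $t>0$ and apply Markov's inequality to $e^{tX}$, giving $\Pr[X\geq(1+\sigma)r\mu]\leq e^{-t(1+\sigma)r\mu}\,E[e^{tX}]$. By independence $E[e^{tX}]=\prod_{i=1}^{r}E[e^{tX_i}]$, and since $x\mapsto e^{tx}$ is convex and $X_i\in[0,1]$ we have the chord bound $e^{tx}\leq 1+(e^{t}-1)x$ on $[0,1]$, so $E[e^{tX_i}]\leq 1+(e^{t}-1)E[X_i]\leq\exp\!\big((e^{t}-1)E[X_i]\big)$ and hence $E[e^{tX}]\leq\exp\!\big((e^{t}-1)r\mu\big)$. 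Substituting and minimizing the exponent $(e^{t}-1)-t(1+\sigma)$ over $t>0$ at $t=\ln(1+\sigma)$ yields $\Pr[X\geq(1+\sigma)r\mu]\leq\exp\!\big(r\mu\,[\sigma-(1+\sigma)\ln(1+\sigma)]\big)$.

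It then remains to show $\sigma-(1+\sigma)\ln(1+\sigma)\leq-\sigma^{2}/(2+\sigma)$ for all $\sigma>0$; this is a standard one-variable inequality (it even follows from the sharper textbook bound $(1+\sigma)\ln(1+\sigma)-\sigma\geq\sigma^{2}/(2+2\sigma/3)\geq\sigma^{2}/(2+\sigma)$), and it can be checked directly by setting $f(\sigma)=(1+\sigma)\ln(1+\sigma)-\sigma-\sigma^{2}/(2+\sigma)$, noting $f(0)=0$, and verifying $f'(\sigma)\geq0$. This gives the first bound. For the lower tail I would mirror the computation with the multiplier $e^{-tX}$, $t>0$: the same chord bound gives $E[e^{-tX}]\leq\exp\!\big((e^{-t}-1)r\mu\big)$, Markov gives $\Pr[X\leq(1-\sigma)r\mu]\leq\exp\!\big((e^{-t}-1)r\mu+t(1-\sigma)r\mu\big)$, and minimizing at $t=\ln\frac{1}{1-\sigma}$ (for $0<\sigma<1$; the boundary behaviour $\sigma\geq1$ is covered by letting $t\to\infty$) yields $\Pr[X\leq(1-\sigma)r\mu]\leq\exp\!\big(r\mu\,[-\sigma-(1-\sigma)\ln(1-\sigma)]\big)$. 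Finally $-\sigma-(1-\sigma)\ln(1-\sigma)\leq-\sigma^{2}/2$ follows from the series identity $(1-\sigma)\ln(1-\sigma)=-\sigma+\sum_{k\geq2}\frac{\sigma^{k}}{k(k-1)}\geq-\sigma+\frac{\sigma^{2}}{2}$, which rearranges exactly to the desired inequality.

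The routine exponential-moment steps are essentially bookkeeping; the only delicate points are the two elementary calculus/series inequalities relating $(1\pm\sigma)\ln(1\pm\sigma)$ to the rational functions $\sigma^{2}/(2+\sigma)$ and $\sigma^{2}/2$, since these are precisely where the constants in the statement come from. If brevity is preferred, the whole lemma can simply be cited as the standard Chernoff bound for sums of independent $[0,1]$ variables (cf.~\cite{Hoeffdinginequality}), but I would keep the short self-contained derivation above so the roles of independence and of the range $[0,1]$ remain explicit.
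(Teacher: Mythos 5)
Your derivation is correct: the exponential-moment argument with the chord bound $e^{tx}\leq 1+(e^{t}-1)x$ on $[0,1]$, optimization at $t=\ln(1+\sigma)$ (resp.\ $t=\ln\frac{1}{1-\sigma}$), and the elementary comparisons $(1+\sigma)\ln(1+\sigma)-\sigma\geq\sigma^{2}/(2+2\sigma/3)\geq\sigma^{2}/(2+\sigma)$ and $-\sigma-(1-\sigma)\ln(1-\sigma)\leq-\sigma^{2}/2$ deliver exactly the two stated tails. For comparison, the paper gives no proof at all: it states the lemma as a quoted Chernoff/Hoeffding-type bound with a citation and immediately uses it in the proof of its Theorem on the approximation ratio, so your self-contained derivation is not a different route so much as the standard argument behind the cited result. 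One point worth making explicit (which you implicitly and correctly resolved) is that the lemma's statement is garbled --- it says ``$X=$ with a mean $\mu$'' --- and the bounds only make sense under your reading $X=\sum_{i=1}^{r}X_i$ with $E[X]=r\mu$ (i.e., $\mu$ the common mean, or $r\mu=\sum_i E[X_i]$ in the non-identically-distributed case); stating that interpretation up front, as you do, is exactly what a written-out proof should add to what the paper provides.
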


Let $\mathcal{U}$ be a group of users, $S_e$ be the selected reconnecting edges, $\mathcal{R}_2$ be the number of generated sketch subgraphs in the SBG algorithm (\textit{Algorithm~\ref{alg:greedy_basic}}), and $F_R(\mathcal{U},S_e)$ be the total number of additional reached users by $\mathcal{U}$ in each sketch subgraph after reconnecting edges in $S_e$. From~\cite{ohsaka2014fast}, the expected value of $\frac{F_R(\mathcal{U}, S_e)}{\mathcal{R}_2}$ equals the expected influence diffusion enhance by reconnecting edges of $S_e$ in $G_t$.  
Then, we have the following lemma.

\begin{lemma} \label{lem:expected}
$E[\frac{F_R(\mathcal{U},S_e)}{\mathcal{R}_2}] = E[I(\mathcal{U}, G_t\oplus S_e) - I(\mathcal{U}, G_t)]$
\end{lemma}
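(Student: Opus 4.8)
The plan is to reduce the claimed identity to the standard equivalence between the expected influence spread under the IC model and the expected reachability in a random forward‑influence sketch (the live‑edge correspondence of~\cite{IM2003}, as also used in the FI‑SKETCH analysis of Theorem~\ref{theo:FI-Sketch} and~\cite{10.1145/2505515.2505541,ohsaka2014fast}), and then to apply linearity of expectation. First I would unfold the definition of $F_R(\mathcal{U},S_e)$: by the construction in Algorithm~\ref{alg:greedy_basic}, it is the sum over the $\mathcal{R}_2$ independently generated sketch subgraphs $G^j_{sg}$ of $n_a(G^j_{sg}\oplus S_e) - n_a(G^j_{sg})$, where $n_a(H)$ is the number of vertices reachable from $\mathcal{U}$ in $H$. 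Dividing by $\mathcal{R}_2$, taking expectations, and using that the $G^j_{sg}$ are i.i.d., the left‑hand side collapses to $E[n_a(G^1_{sg}\oplus S_e)] - E[n_a(G^1_{sg})]$.

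Next I would invoke the forward‑influence/live‑edge correspondence: deleting each edge $(u,v)$ of a graph $H$ independently with probability $1-P_{u,v}$ yields a random subgraph in which the set of vertices reachable from $\mathcal{U}$ has the same distribution as the set of vertices activated from $\mathcal{U}$ under the IC model on $H$; in particular $E[n_a(\text{sketch of }H)] = E[I(\mathcal{U},H)]$. Applying this with $H=G_t$ gives $E[n_a(G^1_{sg})] = E[I(\mathcal{U},G_t)]$. For the reconnected edges I would exhibit a coupling: since $S_e \subseteq \mathcal{G}\setminus G_t$, the edges of $S_e$ are disjoint from $E_t$, so generating a sketch of $G_t$ and then inserting $S_e$ produces a random subgraph distributed exactly as a sketch of $G_t\oplus S_e$; hence $E[n_a(G^1_{sg}\oplus S_e)] = E[I(\mathcal{U},G_t\oplus S_e)]$. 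Subtracting the two identities gives the lemma.

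The hard part will be making this second equivalence fully rigorous, i.e. arguing that $G^1_{sg}\oplus S_e$ genuinely has the law of a forward‑influence sketch of $G_t\oplus S_e$, because the sketch of $G_t$ is sampled before $S_e$ is chosen and the pseudocode inserts $S_e$ deterministically. This needs a short argument that the edge‑survival coin flips on $E_t$ are independent of (and consistent with) those governing $S_e$ under the reconnection semantics, together with monotonicity of reachability in the edge set, so that the coupling is valid and the marginal law over reachable sets is the intended one. Once this coupling is established, the remainder is just linearity of expectation and the cited IC/live‑edge equivalence, both routine.
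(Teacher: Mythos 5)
Your route is essentially the paper's: the paper's own proof is a two-line appeal to the forward-sketch/IC equivalence of~\cite{ohsaka2014fast} (the expected average number of vertices reached from $\mathcal{U}$ over the sketch subgraphs equals $E[I(\mathcal{U},G_t)]$) followed by "we can easily deduce" the claimed identity; your first two steps (unfolding $F_R$ as a sum of per-sketch differences and using linearity plus i.i.d.\ sketches) are just a more explicit version of that, and they are fine.

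The one place you go beyond the paper is exactly the place you flag as "the hard part," and it deserves a sharper statement than your proposed coupling gives. Your claim that $G^1_{sg}\oplus S_e$ "has the law of a forward-influence sketch of $G_t\oplus S_e$" is not automatic: Algorithm~\ref{alg:greedy_basic} inserts the edges of $S_e$ deterministically, whereas a genuine FI-sketch of $G_t\oplus S_e$ under the IC model would delete each $e\in S_e$ independently with probability $1-p(e)$ (and the paper's default $p(u,v)=1/d(v)$ gives $p(e)<1$). Under that reading, monotonicity of reachability gives $E[n_a(G^1_{sg}\oplus S_e)]\ge E[I(\mathcal{U},G_t\oplus S_e)]$ with strict inequality in general, so the coupling you describe (independence of the coin flips on $E_t$ from those on $S_e$, plus monotonicity) does not by itself close the gap. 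What closes it is a modeling convention: either reconnected edges are deemed live with probability one in the diffusion on $G_t\oplus S_e$ (so deterministic insertion is the correct sketch of that model), or $F_R$ must be computed by also flipping $p(e)$-coins on the edges of $S_e$ in each sketch. The paper's proof silently assumes one of these and never says which, so your write-up is actually more careful than the original; just replace the "exact law" assertion with the explicit convention (or the extra coin flips), after which the rest is, as you say, linearity of expectation plus the cited equivalence.
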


\begin{proof}
Each sketch subgraph in the SBG algorithm is generated by removing each edge $e$ with $1 - p(e)$ probability. From~\cite{ohsaka2014fast}, we can observe that the expected value of the average number of reached users to $\mathcal{U}$ in all sketch subgraphs is equal to the expected spread of $\mathcal{U}$ in $G_t$.   
From the above relation of equality, we can easily deduce that $E[\frac{F_R(\mathcal{U},S_e)}{\mathcal{R}_2}] = E[I(\mathcal{U}, G_t\oplus S_e) - I(\mathcal{U}, G_t)]$.
\end{proof}

\begin{theorem}[Approximate ratio] \label{theo: FIproof}
By generating $\theta_3$ sketch subgraphs with $\theta_3 \geq (8 + 2\varepsilon)\cdot |V| \cdot \dfrac{ln|V| + ln{|V|\choose l} + ln 2}{\varepsilon^2}$, we have $|\frac{F(\mathcal{U}, S_e)}{\theta_3} - (E[I(\mathcal{U}, G_t\oplus S_e) - I(\mathcal{U}, G_t)])| <\frac{\varepsilon}{2}$ holds with probability $1- |V|^{-l}$ simultaneously for all selected edges set $S$ (i.e., $|S| = l$). 
\end{theorem}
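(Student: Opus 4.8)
The plan is to reproduce, in the RT$l$R setting, the two--phase argument behind Theorem~\ref{theo:FI-Sketch}: first establish the additive concentration bound for one fixed size--$l$ edge set, then make it uniform over all such sets by a union bound. Fix an arbitrary edge set $S_e$ with $|S_e|=l$. For $j=1,\dots,\theta_3$, let $X_j$ be the number of users that $\mathcal{U}$ \emph{additionally} reaches in the $j$-th sketch subgraph once the edges of $S_e$ are reinserted, i.e.\ the reach of $\mathcal{U}$ in $G^j_{sg}\oplus S_e$ minus its reach in $G^j_{sg}$. Because the $\theta_3$ sketch subgraphs are generated independently and identically (each edge $e$ being kept with probability $p(e)$, cf.\ Algorithm~\ref{alg:greedy_basic}), the $X_j$ are i.i.d., take values in $[0,|V|]$, and satisfy $F(\mathcal{U},S_e)=\sum_{j=1}^{\theta_3}X_j$. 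By Lemma~\ref{lem:expected}, $\mathbb{E}[X_j]=\mu_{S_e}:=\mathbb{E}[I(\mathcal{U},G_t\oplus S_e)-I(\mathcal{U},G_t)]$, so the claim for a fixed $S_e$ is exactly a statement about the deviation of the empirical mean $\tfrac{1}{\theta_3}\sum_j X_j$ from $\mu_{S_e}$.

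Next I would rescale $Y_j=X_j/|V|\in[0,1]$ so that Lemma~\ref{lem:Hoeff} applies with $r=\theta_3$ and per-variable mean $\mu_{S_e}/|V|$, and then convert the target \emph{additive} gap $\tfrac{\varepsilon}{2}$ on $\mu_{S_e}$ into the relative form the lemma requires: choosing the deviation parameter $\sigma=\tfrac{\varepsilon}{2\mu_{S_e}}$ makes $\sigma\cdot\theta_3\cdot\tfrac{\mu_{S_e}}{|V|}$ equal to $\tfrac{\varepsilon\theta_3}{2|V|}$, which is precisely the event $\big|\tfrac{F(\mathcal{U},S_e)}{\theta_3}-\mu_{S_e}\big|\ge\tfrac{\varepsilon}{2}$ expressed on the $Y$-scale. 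Adding the two tail inequalities of Lemma~\ref{lem:Hoeff} and using $\exp(-\tfrac{\sigma^2}{2}r\mu')\le\exp(-\tfrac{\sigma^2}{2+\sigma}r\mu')$ to merge them, the failure probability for this fixed $S_e$ is at most $2\exp\!\big(-\tfrac{\sigma^2}{2+\sigma}\,\theta_3\,\tfrac{\mu_{S_e}}{|V|}\big)$. Substituting $\sigma$, simplifying the exponent, and invoking the hypothesis $\theta_3\ge(8+2\varepsilon)\,|V|\cdot\frac{\ln|V|+\ln{|V|\choose l}+\ln 2}{\varepsilon^2}$ should drive this per-set failure probability down to at most $\big({|V|\choose l}\cdot|V|^{l}\big)^{-1}$; a union bound over the (at most ${|V|\choose l}$, in the counting convention of Theorem~\ref{theo:FI-Sketch}) size--$l$ edge sets then yields the stated guarantee, simultaneously for all $S$ with $|S|=l$, with probability at least $1-|V|^{-l}$.

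The step I expect to be the main obstacle is the exponent simplification in the second paragraph. A per--sketch contribution $X_j$ can be as large as $|V|$, so a black--box additive (Hoeffding--style) estimate over $[0,|V|]$ would demand $\theta_3=\Omega(|V|^2\cdot\varepsilon^{-2}\ln(\cdot))$; reaching the \emph{linear--in--$|V|$} threshold asserted here relies on using the relative--deviation form of Lemma~\ref{lem:Hoeff} and on measuring the sum $F(\mathcal{U},S_e)$ itself — calibrating $\sigma$ against $\mu_{S_e}$ so that the factor $|V|$ appearing in the exponent's denominator is absorbed by the factor $|V|$ carried inside $\theta_3$, in the same way $OPT$ is absorbed in Theorem~\ref{theo:RIS}. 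Making this cancellation go through with the exact constant $8+2\varepsilon$, and checking that the resulting per--set bound really is tight enough for the ${|V|\choose l}$ union bound to leave slack $|V|^{-l}$, is the delicate bookkeeping; once $\theta_3$ is fixed as in the statement the remaining computation is routine. Monotonicity/submodularity (Theorem~\ref{lemma:greedy}) is not needed here — it enters only afterwards, when this uniform estimate is combined with the greedy selection of Algorithm~\ref{alg:greedy_basic} to obtain the $(1-\tfrac1e-\varepsilon)$ approximation for RT$l$R.
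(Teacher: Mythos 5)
Your plan takes essentially the same route as the paper's proof, which likewise tweaks Tang et al.'s argument: it identifies the mean via Lemma~\ref{lem:expected}, applies the Chernoff bound of Lemma~\ref{lem:Hoeff} with exactly the same calibration (your $\sigma=\varepsilon/(2\mu_{S_e})$ is the paper's $\sigma=\varepsilon/(2|V|\rho)$ since $\mu_{S_e}=|V|\rho$), and then lets the $\theta_3$ hypothesis absorb the $|V|$ factor in the exponent. The only difference is presentational: you make the i.i.d.\ per-sketch variables and the union bound over the ${|V|\choose l}$ size-$l$ edge sets explicit, whereas the paper handles that step implicitly through the $ln{|V|\choose l}$ term inherited from Theorem~\ref{theo:RIS}.
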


\begin{proof}
We can prove 
Theorem~\ref{theo: FIproof} by tweaking the proof in Theorem~\ref{theo:RIS} of~\cite{tang2014influence}. Let $\rho$ be the probability of $\mathcal{U}$ can activate a fixed user $v$ after reconnecting edges in $S_e$ in $G_t$. Based on Lemma~\ref{lem:expected}, 
\begin{equation}
    \rho = E[\frac{F_R(\mathcal{U},S_e)}{\mathcal{R}_2}] / |V| = (E[I(\mathcal{U}, G_t\oplus S_e) - I(\mathcal{U}, G_t)]) / |V| 
\end{equation}
Then, we have
\begin{equation} \label{eq:prop}
    \begin{split}
        Pr & [|\frac{F_R(\mathcal{U},S_e)}{\theta_3} - (E[I(\mathcal{U}, G_t\oplus S_e) - I(\mathcal{U}, G_t)]) |] \geq \frac{\varepsilon}{2} \\
        & = Pr[|\frac{F_R(\mathcal{U},S_e)}{|V|} - \rho \theta_3|] \geq \frac{\varepsilon \theta_3}{2|V|}
    \end{split}
\end{equation}
Let $\sigma = \frac{\varepsilon}{2|V|\rho}$. Based on Lemma~\ref{lem:Hoeff}, we have
\begin{equation}
    \begin{split}
        Equation~(\ref{eq:prop})  < & 2 \cdot exp(-\frac{\sigma^2}{2 + \sigma}\cdot \rho\cdot \theta_3) \\
                                  & = 2 \cdot exp(\frac{\varepsilon^2}{8|V|^2\rho + 2|V|\varepsilon}\cdot \theta_3) \\
                                  & \leq 2 \cdot exp(-\frac{\varepsilon^2}{8|V| + 2\varepsilon|V|}\cdot \theta_3) \\
                                  & \leq \frac{1}{|V|^l}.
    \end{split}
\end{equation}
Thus, Theorem~\ref{theo: FIproof} is proved.
\end{proof}

\begin{theorem}[Complexity of SBG] \label{theo:sbgproof}
With a probability of $1 - |V|^{-l}$, the SBG method for solving the RT$l$R query problem requires $\theta_3 \geq (8 + 2\varepsilon)\cdot |V| \cdot \dfrac{ln|V| + ln{|V|\choose k} + ln 2}{\varepsilon^2}$ number of sampling sketch subgraphs so that an $(1 - \frac{1}{e} -\varepsilon)$ approximation ration is achieved. 
\end{theorem}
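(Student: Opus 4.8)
The plan is to combine the per-solution concentration bound of Theorem~\ref{theo: FIproof} with the standard greedy-on-submodular-coverage argument, exactly in the style used for the RIS method (Theorem~\ref{theo:RIS}) in~\cite{tang2014influence}. First I would fix $\theta_3 \geq (8 + 2\varepsilon)\cdot |V| \cdot \frac{\ln|V| + \ln\binom{|V|}{l} + \ln 2}{\varepsilon^2}$ and invoke Theorem~\ref{theo: FIproof}: with probability at least $1 - |V|^{-l}$, for \emph{every} candidate edge set $S_e$ of size $l$ we have $\big|\frac{F_R(\mathcal{U},S_e)}{\theta_3} - \big(E[I(\mathcal{U},G_t\oplus S_e)] - E[I(\mathcal{U},G_t)]\big)\big| < \frac{\varepsilon}{2}$. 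Call this the ``good event''; condition on it for the rest of the argument. The union bound over the $\binom{|V|}{l}$ size-$l$ edge sets is already baked into the $\ln\binom{|V|}{l}$ term in $\theta_3$, so this step is essentially a citation.

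Next I would argue that the greedy rule in Algorithm~\ref{alg:greedy_basic} is exactly greedy maximum coverage on the sketch objective $\hat F(S_e) := F_R(\mathcal{U},S_e)/\theta_3$. By Theorem~\ref{lemma:greedy}, the true influence-spread function $g(S_e) := E[I(\mathcal{U},G_t\oplus S_e)] - E[I(\mathcal{U},G_t)]$ is monotone and submodular; and the empirical sketch objective $\hat F$, being an average over $\theta_3$ fixed deterministic reachability-in-a-fixed-sketch functions (each of which is monotone submodular in $S_e$ for the same reason reachability is), is itself monotone and submodular. Hence the classical Nemhauser--Wolsey--Fisher guarantee applies: the greedy set $\widehat{S_e}$ returned by SBG satisfies $\hat F(\widehat{S_e}) \geq (1 - 1/e)\,\hat F(S_e^{*})$, where $S_e^{*}$ is the size-$l$ maximizer of $\hat F$.

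Then I would transfer this back to the true objective using the good event twice. Let $S_e^{\mathrm{OPT}}$ be the true optimum, i.e.\ the maximizer of $g$. On the good event, $\hat F(\widehat{S_e}) \geq (1-1/e)\hat F(S_e^{*}) \geq (1-1/e)\hat F(S_e^{\mathrm{OPT}}) \geq (1-1/e)\big(g(S_e^{\mathrm{OPT}}) - \tfrac{\varepsilon}{2}\big)$, and also $g(\widehat{S_e}) \geq \hat F(\widehat{S_e}) - \tfrac{\varepsilon}{2}$. Chaining these, and using $g(S_e^{\mathrm{OPT}}) = OPT^{*}$ together with $(1-1/e)\tfrac{\varepsilon}{2} + \tfrac{\varepsilon}{2} \leq \varepsilon$, yields $g(\widehat{S_e}) \geq (1 - 1/e)\,OPT^{*} - \varepsilon$, i.e.\ the claimed $(1 - \tfrac{1}{e} - \varepsilon)$-approximation (after the usual normalization that writes the additive $\varepsilon OPT^{*}$ slack as a multiplicative $\varepsilon$, exactly as in~\cite{tang2014influence}; this is why the statement is phrased with the same $\theta_3$ bound as Theorems~\ref{theo:RIS} and~\ref{theo:FI-Sketch}).

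The main obstacle I expect is not any single inequality but making the ``transfer'' bookkeeping honest: Theorem~\ref{theo: FIproof} controls the \emph{additive} deviation $\varepsilon/2$ uniformly, so I must be careful that the greedy comparison set $S_e^{*}$ and the true optimum $S_e^{\mathrm{OPT}}$ are \emph{both} covered by the uniform good event (they are, since it quantifies over all size-$l$ sets), and that the two uses of the $\varepsilon/2$ slack plus the $(1-1/e)$ factor combine to at most $\varepsilon\cdot OPT^{*}$ rather than a bare additive $\varepsilon$ — this requires either assuming $OPT^{*}\geq 1$ (true, since reconnecting any useful edge reaches at least one extra user in expectation, or the query is trivial) or, as in the cited RIS analysis, folding the normalization into the constant. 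I would also flag that submodularity of the empirical sketch objective should be stated explicitly rather than inherited verbatim from Theorem~\ref{lemma:greedy}, since the latter is phrased for the expected spread, not for a single fixed sketch; a one-line remark that reachability in any fixed subgraph is a coverage function suffices.
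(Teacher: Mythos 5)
Your proposal takes essentially the same route as the paper's proof: invoke Theorem~\ref{theo: FIproof} for the uniform $\frac{\varepsilon}{2}$ concentration under the stated $\theta_3$, apply the $(1-\frac{1}{e})$ greedy maximum-coverage guarantee to the sketch objective, and combine the two error terms to obtain the $(1-\frac{1}{e}-\varepsilon)$ ratio with probability $1-|V|^{-l}$. Your write-up is in fact more careful than the paper's three-step sketch---in particular about the uniform good event covering both the empirical maximizer and the true optimum, the monotonicity and submodularity of the empirical sketch objective itself, and the additive-to-multiplicative conversion of the $\frac{\varepsilon}{2}$ slack---but the underlying argument is the same.
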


\begin{proof}
The proof of Theorem~\ref{theo:sbgproof} is summarized as following three steps. 
Firstly, based on the property in Theorem~\ref{theo: FIproof}, if the number of generated sampling sketch subgraphs $\theta_3 \geq (8 + 2\varepsilon)\cdot |V| \cdot \dfrac{ln|V| + ln{|V|\choose k} + ln 2}{\varepsilon^2}$, then we have $|\frac{F(\mathcal{U},S_e)}{\theta_3} - (E[I(\mathcal{U}, G_t\oplus S_e) - I(\mathcal{U}, G_t)])| < \frac{\varepsilon}{2}$ holds with probability $1 - |V|^{-l}$. Secondly, the SBG method we proposed in this paper to solve the RT$l$R problem by utilizing the greedy algorithm of \textit{maximum coverage} problem~\cite{karp1972reducibility}, which produces a $(1 - \frac{1}{e})$ approximation solution (\textit{mentioned in Theorem~\ref{theo:NPhard}}). 
Finally, by combining the above two approximation ration $\frac{\varepsilon}{2}$ and $(1 - \frac{1}{e})$, we can conclude the final approximation ration of our SBG method for solving RT$l$R query problem is $(1 - \frac{1}{e} - \varepsilon)$ with at least $1 - |V|^{-l}$ probability.
\end{proof}

\subsection{Reducing \# Candidate Edges} \label{subsec:optim1} 
Since the SBG algorithm's time complexity is cost-prohibitive, which would hardly be used for dealing with the sizeable evolving graph. 
In this subsection, we present our optimization method by pruning the unnecessary potential edges in candidate edge set $CE$. The core idea behind this optimization strategy is to eliminate the edges in $CE$ which will not have any benefit to expend the influence spread of given users group $\mathcal{U}$ while reconnecting it.

We use the symbol $u\leftsquigarrow \mathcal{U}$ to denote that $u$ can be reached by $\mathcal{U}$. 
In order to reduce the size of $CE$, we present the below theorem to identify the quality reconnecting edge candidates (denote as $\widehat{CE}$) from $CE$.  

\begin{theorem}[Reachability] \label{theo_edges}
Given a directed snapshot graph $G_t$ and a users group $\mathcal{U}$, if an edge $e = (u,v)$ is selected to reconnect, one of its related users (i.e., $u$ or $v$) requires to be reached by $\mathcal{U}$ in $G_t$; that is $e\in\widehat{CE}$ implies $u\leftsquigarrow \mathcal{U}$ or $v \leftsquigarrow \mathcal{U}$ in $G_t$.
\end{theorem}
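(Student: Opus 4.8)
The plan is to prove the theorem by contraposition: I will show that if \emph{neither} endpoint of a candidate edge $e=(u,v)$ can be reached from $\mathcal{U}$ in $G_t$, then reconnecting $e$ produces no new activation in any realization of the diffusion, so its marginal contribution to the expected spread of $\mathcal{U}$ is exactly $0$ and $e$ can be pruned, i.e.\ $e\notin\widehat{CE}$. This is precisely the contrapositive of the claimed implication, so it suffices.

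First I would replace the stochastic statement by a deterministic one. Under the IC model, $E(I(\mathcal{U},H))$ equals the average, over sampled graphs $g$ drawn from $H$ (equivalently, over the sketch subgraphs constructed in Algorithm~\ref{alg:greedy_basic}, cf.\ Lemma~\ref{lem:expected}), of the number of vertices reachable from $\mathcal{U}$ in $g$. Every such $g$ drawn from $G_t\oplus e$ is a sampled graph of $G_t$ with the single edge $e$ present or absent, and every sampled graph of $G_t$ is a spanning subgraph of $G_t$, so a vertex unreachable from $\mathcal{U}$ in $G_t$ stays unreachable in it. Hence it is enough to prove the following purely combinatorial claim: for an arbitrary directed graph $H$ and an edge $(u,v)$ with $u,v\notin R_H(\mathcal{U})$, where $R_H(\mathcal{U})$ denotes the set of vertices reachable from $\mathcal{U}$ in $H$, one has $R_{H\oplus(u,v)}(\mathcal{U})=R_H(\mathcal{U})$.

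Proving this reachability invariance is the step I expect to demand the most care: it is intuitively obvious (the new arc can only be traversed from $u$, and $u$ is unreachable from $\mathcal{U}$), but it must be argued without circularity. I would take any $w\in R_{H\oplus(u,v)}(\mathcal{U})$ and a simple directed path $P$ from some $s\in\mathcal{U}$ to $w$ in $H\oplus(u,v)$ (every reachable vertex is reachable by a simple path). If $P$ used the new arc $(u,v)$, then $u$ would lie on $P$ strictly before its last vertex; since $P$ is simple, the portion of $P$ preceding that unique occurrence of $u$ uses only arcs of $H$ and would certify $u\in R_H(\mathcal{U})$, a contradiction. Hence $P$ avoids $(u,v)$, so $P$ lies entirely in $H$ and $w\in R_H(\mathcal{U})$; the reverse inclusion is trivial, giving $R_{H\oplus(u,v)}(\mathcal{U})=R_H(\mathcal{U})$.

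Finally I would assemble the pieces: since the set of vertices reachable from $\mathcal{U}$ is unchanged in every sampled (sketch) subgraph after adding $e$, the number of vertices activated by $\mathcal{U}$ is unchanged in each, whence $E(I(\mathcal{U},G_t\oplus e))=E(I(\mathcal{U},G_t))$; thus $e$ brings no benefit and does not belong to $\widehat{CE}$, which establishes the theorem. I would close with the remark that the same combinatorial claim, instantiated with $H=G_t\oplus S_e$ for the partially constructed solution $S_e$, shows that at each greedy step of Algorithm~\ref{alg:greedy_basic} it is safe to probe only the edges having an endpoint currently reachable from $\mathcal{U}$ — exactly the way the pruning of Section~\ref{subsec:optim1} is exploited.
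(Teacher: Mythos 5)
Your proposal is correct and follows essentially the same route as the paper: the paper also argues by contradiction that if neither $u$ nor $v$ is reachable from $\mathcal{U}$ in $G_t$, no pathway from $\mathcal{U}$ to $e$ exists, so reconnecting $e$ yields no gain in influence spread and $e$ can be pruned. Your version simply makes that argument rigorous (the reduction to live-edge/sketch realizations and the simple-path argument for reachability invariance), which the paper leaves at the level of intuition.
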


\begin{proof}
We prove the correctness of this theorem by contradiction. The intuition is that at least one pathway exists from a user to all of its influenced users in social networks. For the selected edge $e = (u,v)$, if both 
the user $u$ and $v$ are not reached by the users group $\mathcal{U}$, then 
the pathway between $\mathcal{U}$ and $e$
does not exist.
Therefore, reconnecting the edge $e$ does not bring any benefits to the expansion of influence spread starting from $\mathcal{U}$, which contradicts with Definition~\ref{def:prob}. Thus, the theorem is proved.   
\end{proof}

\begin{figure}[!h]
    \centering
    \includegraphics[scale=0.55]{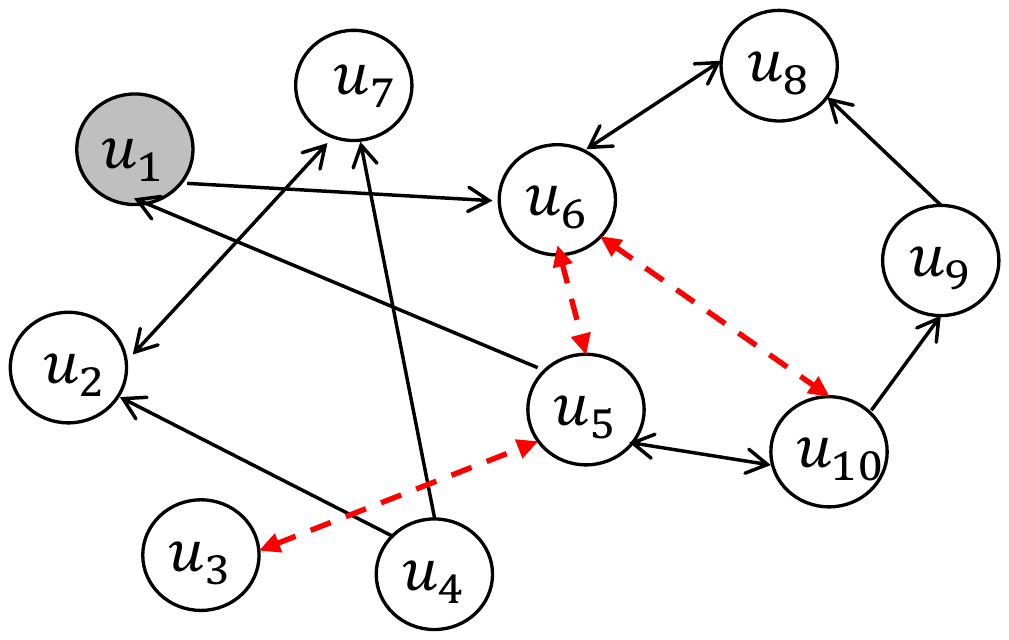}
    \caption{Running Example}
    \label{fig:CE-SBG2}
\end{figure}

\begin{example}
Figure~\ref{fig:CE-SBG2} shows a snapshot graph $G_t$ with $10$ nodes and $9$ edges. The candidate edges set of RT$l$R is $CE = \{(u_3,u_4), (u_5,u_6),\\ (u_6, u_{10})\}$. For a given user group $\mathcal{U} = \{u_1\}$, the pruned candidate edge set would be $\widehat{CE}=\{(u_5,u_6), (u_6, u_{10})\}$ due to $u_6\leftsquigarrow \mathcal{U}$.
\end{example}

Based on Theorem~\ref{theo_edges}, we present a BFS-based method for pruning the candidate edge set $CE$ in graph $G_t$ with a given users group $\mathcal{U}$. The core idea of the BFS-based algorithm is to traverse the graph $G_t$ starting from the nodes in $\mathcal{U}$ by performing breadth-first search (BFS). For edges in $CE$, if both of its related nodes are not visited in the above BFS process, then we directly prune it.

In Algorithm~\ref{alg:BFS_prunning_edges}, we outline the major steps of the BFS-based method for processing the $CE$ pruning. Initially, each user $u$ in graph $G_t$ are marked a visiting status as FALSE (Line~\ref{label:alg2_1}). Then, for the users in a given group $\mathcal{U}$, we update its visiting status as TRUE (Lines~\ref{label:alg2_2}-\ref{label:alg2_3}). Further, we process a BFS search starting from root user $v\in \mathcal{U}$, and update the status of each visited users as TRUE (Lines~\ref{label:alg2_4}-\ref{label:alg2_5}). Next, based on Theorem~\ref{theo_edges}, we reduce all candidate edges $e=(u,v)$ from $CE$ while both $u$ and $v$ have the FALSE visited status (Lines~\ref{label:alg2_6}-\ref{label:alg2_7}), and finally, we return the pruned candidate edges set $\widehat{CE}$ (Line~\ref{label:alg2_8}). 


\begin{algorithm}[t] 
	\caption{Reducing $CE$ $\#$ \textit{BFS} $(CE, \mathcal{U})$} 
	\label{alg:BFS_prunning_edges}
	\BlankLine
    Initialize set $\widehat{CE}\leftarrow \emptyset$, an empty Queue $Q$; \\
    Initialize visited array $A$ with size $|V|$ as FALSE; \label{label:alg2_1} \\ 

   \For{each $u\in U$}{ \label{label:alg2_2}
      $A[u] = TRUE$; \\
      Enqueue $u$ into $Q$; \\ \label{label:alg2_3}
    }
    \While {$Q$ is not empty}{ \label{label:alg2_4}
         Dequeue $v$ from $Q$; \\
         \For{each neighbor $v'\in nbr(v,G_t)$ in $G_t$}{
            \If{$A[v'] = FALSE$}{
                $A[v'] = TRUE$, 
                Enqueue $v'$ into $Q$; \\
            }
            \Else{
                Continue; \\ \label{label:alg2_5}
            }
         }
    }
   \For{$e = (u,v)\in CE$}{\label{label:alg2_6}
       \If{$A[u] = TRUE$ or $A[v] = TRUE$}{
        add $e$ into $\widehat{CE}$
       }
       \Else{
          continue; \label{label:alg2_7} 
       }
   }
\Return {$\widehat{CE}$} \label{label:alg2_8}
\end{algorithm}

\vspace{1mm}
\noindent
\textbf{Complexity. } Obviously, for a given group $\mathcal{U}$, the time complexity of Algorithm~\ref{alg:BFS_prunning_edges} is 
$\mathcal{O}(|V| + |E_t| + |CE|)$, and the space complexity is $\mathcal{O}(|V|)$. Furthermore, the occupied space by Algorithm~\ref{alg:BFS_prunning_edges} will be released after the pruned candidate edges $\widehat{CE}$ is returned. For each RT$l$R query with a new given users group $\mathcal{U}$ as input, we need to recall the BFS-based pruning method 
to reduce the size of candidate set $CE$ with time cost $\mathcal{O}(|V| + |E_t| + |CE|)$, which is the main drawback of the BFS-based pruning method.

\section{The Improvement Algorithm} \label{sec:improved}

Although the SBG algorithm and its optimization method can successfully answer the RT$l$R query problem, it is still time-consuming to handle the sizeable social networks. To address this limitation, in this section, we propose an ordered sketch-based greedy algorithm, which can significantly reduce the number of edges influence probing at each iterative of RT$l$R query process, so as to answer the RT$l$R query more efficiently.  


\subsection{Algorithm Overview}
Let $\mathcal{G} = \{G_0,G_1,...,G_{t-1}\}$ be an evolving graph. We first use the temporal link prediction method~\cite{7511675} to predict the future snapshot of graph $G_t$, and the potential reconnecting edges will be selected from candidate edges set $CE = \{\mathcal{G} \setminus G_t \}$. 
Before introducing the core idea of our Order-based SBG algorithm, we first briefly review using the SBG algorithm to answer the RT$l$R query and analyze the bottleneck of the SBG algorithm.

For each given users group $\mathcal{U}$, the SBG algorithm aims to find $l$ reconnecting edges by iteratively probing each edge in $CE$ to find out the edge $\widehat{e}$ in which reconnecting $\widehat{e}$ will bring the maximum benefits to the influence spread of $\mathcal{U}$. The time complexity of influence spread by reconnection of an edge is $\mathcal{O}(\theta_3\cdot |E_t|)$, which is the bottleneck of the SBG algorithm. 

To deal with the above limitation of the SBG method, we propose an Order-based SBG algorithm, which focuses on reducing the number of edges probing in each iteration by using our elaboratively designed two-step bounds approach together with the order-based probing strategy. Specifically, we first generate a label index (UBL) to store the first step upper bound of influence spread expansion for each candidate edge $e\in CE$ \textit{w.r.t $UB_1(e)$} (in Section~\ref{subsec:label}). Then, we generate the initial second-step upper bound ($UB_2$) for $e$ (\textit{i.e., $UB_2.e$}) from $UB_1(e)$ of the UBL index. Next, in the influence spread expansion estimation query processing of each given users group $\mathcal{U}$ and probing edge $e$, we narrow the second-step upper bound of $e$ and update the $UB_1(e)$ value of UBL index, while the narrowed second-step upper bound will be served the optimal edge finding in the following iterations (in Section~\ref{subsec:estimation}).   
Finally, we order the candidate edges by their $UB_2$ values. The edge probing at the current iteration will be early terminated while the second upper bound of probing edge $e$ is less than the present influence spread expansion estimation value (in Section~\ref{subsec:order}).

\subsection{Upper Bound Label  (UBL) Construction} \label{subsec:label}
This section introduces how to build the label index (UBL) for each candidate edge. The UBL index contains two parts, including (1) the $\theta_3$ sketch subgraphs $G_{sg}$; (2) the first-step bound $UB_1(e)$ of each candidate edge $e$ and its updating status. The details of UBL construction procedure is shown in Algorithm~\ref{alg:label}.

\begin{algorithm}[t!] 
	\caption{Build UBL($\mathcal{L}$, $G_{sg}$)}
	\label{alg:label}
	\BlankLine
	($\mathcal{L}$, $G_{sg}$) $\leftarrow (\emptyset, \emptyset)$; \\ 
	Generate $\theta_3$ sketch subgraph $G_{sg} = \{G^j_{sg}\}_1^{\theta_3}$; \label{index:1} \\ 
	\For {each edge $e=(u,v)\in CE$}{ \label{index:2}
        $UB_1(e)\leftarrow$ the number of vertices that can be reached from $v$ in $G_t$; \\
        $flag(e) \leftarrow 0$; \\
        add $(UB_1(e),flag(e))$ into $\mathcal{L}$; \label{index:3}\\
    }
    Store ($\mathcal{L}$, $G_{sg}$)   \label{index:4}
\end{algorithm}

From Section~\ref{subsection:FI-SKETCH}, we first generate $\theta_3$ sketch subgraphs from the predicted snapshot graph $G_t$ that will be used for the future influence spread estimation (Line~\ref{index:1}). Then, for each candidate edge $e$ in $CE$, we initialize its updating mark (\textit{i.e.,} $flag(e)$) as $0$. Meanwhile, we compute the number of vertices in $G_t$ that can be reached from $e$ as the first step upper bound of $e$, denoted as $UB_1(e)$ (Lines~\ref{index:2} - \ref{index:3}).  
Finally, we store the Labeling Scheme ($\mathcal{L}$, $G_{sg}$) for RT$l$R query processing (Line~\ref{index:4}). 

\vspace{1mm}
\noindent
\textbf{Complexity.} The time complexity of sketch subgraphs generation is $\mathcal{O}(\theta_3\cdot|E_t|)$, and the $UB_1$ labeling construction of all candidate edges in $CE$ is $O(|CE|\cdot |E_t|)$. Therefore, the time complexity of UBL construction is $\mathcal{O}(\theta_3\cdot|E_t| + |CE|\cdot |E_t|)$. 
Besides, the space complexity of UBL index construction is $\mathcal{O}(\theta_3\cdot (|V|+|E_t|) + |CE|)$, while storage sketch subgraphs $G_{sg}$ has space complexity $O(\theta_3\cdot (|V|+|E_t|))$ and generating $UB_1$ labeling of edges in $CE$ has space complexity of $\mathcal{O}(|CE|)$.

\subsection{Influence Spread Expanding Estimation}  \label{subsec:estimation}
Here, we present the influence spread expansion estimation of given users group $\mathcal{U}$ and edge $e$. Further, we also introduce the strategies of narrowing the two-step upper bounds of $e$ (i.e., $UB_1(e)$ and $UB_2(e)$) during the above estimation process. 


\begin{algorithm}
	\BlankLine  
	\caption{Sketch-Estimate Function} \label{alg:estimation}
\SetKwFunction{FMain}{Sketch-Estimate} 
    \SetKwProg{Fn}{Function}{:}{}
    \Fn{\FMain{$\mathcal{U},e$}}{
    $count \leftarrow 0$, $count_R \leftarrow 0$, $e = (u,v)$;  \\ \label{alg:fun1}
    \For{$k = 1$ to $\theta_3$}{  \label{alg:fun2}
         \While{$SG[k][u]==1$ $\&\&$ $SG[k][v]==0$}{
       $n_a\leftarrow |\{u'\in V|u'\leftsquigarrow e$ in $G^k_{sg}  \ \wedge$  \ $SG[k][u']==0 \}|$; \\
       $count \leftarrow count + n_a$; \\ \label{alg:fun+}
       }
       \If{$\mathcal{L}.flag(e) == 0$}{ \label{alg:fun++}
            $n_R\leftarrow |\{u'\in V|u'\leftsquigarrow e$ in $G^k_{sg}\}|$ ; \\
            $count_R \leftarrow count_R + n_R$; \\
       }
       \Else{
            continue; \\     \label{alg:fun3}
       }
    }
    update $(e, UB_2.e) \leftarrow (e, count / \theta_3)$ of $Q$; \\ \label{alg:fun4}
    \If{$\mathcal{L}.flag(e) == 0$}{ \label{alg:fun5}
       update $(UB_1(e),flag(e))\leftarrow  (count_R / R, 1)$ of $\mathcal{L}$; \\ 
       $\mathcal{L}.flag(e)\leftarrow 1$; \\   \label{alg:fun6} 
    }
   \Return{$count / \theta_3$}     \label{alg:fun7}
}
\textbf{End Function}
\end{algorithm}

The details of the influence spread expansion estimation are described in Algorithm~\ref{alg:estimation}. For a given users group $\mathcal{U}$ and edge $e = (u,v)$, the Sketch-Estimate Function aims to compute the incremental of $\mathcal{U}$'s influence spread while reconnecting edge $e$ in graph $G_t$. It takes sketch subgraphs $G_{sg}$, query edge $e$ and group $\mathcal{U}$, influenced marking array $SG$, two-step bound $UB_1$ and $UB_2$, and returns the influence spread expansion value of $e$ to $\mathcal{U}$. We initialize two variable $count$ and $count_R$ as $0$ (Line~\ref{alg:fun1}). Then, an inner loop fetches the total number of the reached nodes $v$ for $e$ in each sketch subgraph $G^k_{sg}\in G_{sg}$ but not be reached by $\mathcal{U}$ (\textit{i.e.,} $SG[k][v] == 0$), and we use $count$ to record it (Lines~\ref{alg:fun2}-\ref{alg:fun+}). Meanwhile, if the first step upper bound of $e$ is never updated (\textit{i.e.,} $\mathcal{L}.flag(e) == 0$), we further compute the total number of nodes reached by $e$ in each sketch graph of $G_{sg}$, and store the result in $count_R$ (Lines~\ref{alg:fun++} - \ref{alg:fun3}). Next, we update the value of $UB_2$ as $count / \theta_3$, which is also the influence spread expansion value of $e$ (Line~\ref{alg:fun4}); we also update $UB_1(e)$ and remark $flag(e) = 1$ of $\mathcal{L}$ when the original mark $\mathcal{L}.flag(e) = 0$ (Line~\ref{alg:fun5} - \ref{alg:fun6}).      
Finally, the influence spread expansion of $e$ is returned (Line~\ref{alg:fun7}).
It is remarkable that with the increasing number of RT$l$R queries for different given users group $\mathcal{U}$, the more edges' first-step upper bound $UB_1$ will be narrowed, so as to the performance of the later RT$l$R query with new users group will increase with no additional cost. 

\noindent
\textbf{Complexity.} It is easy for us to derive that the time complexity and space complexity of Algorithm~\ref{alg:estimation} are $\mathcal{O}(\theta_3\cdot |E_t|)$ and $\mathcal{O}(\theta_3 \cdot |V|)$, respectively. 

\begin{figure}[!t]
    \centering
    \includegraphics[scale=0.53]{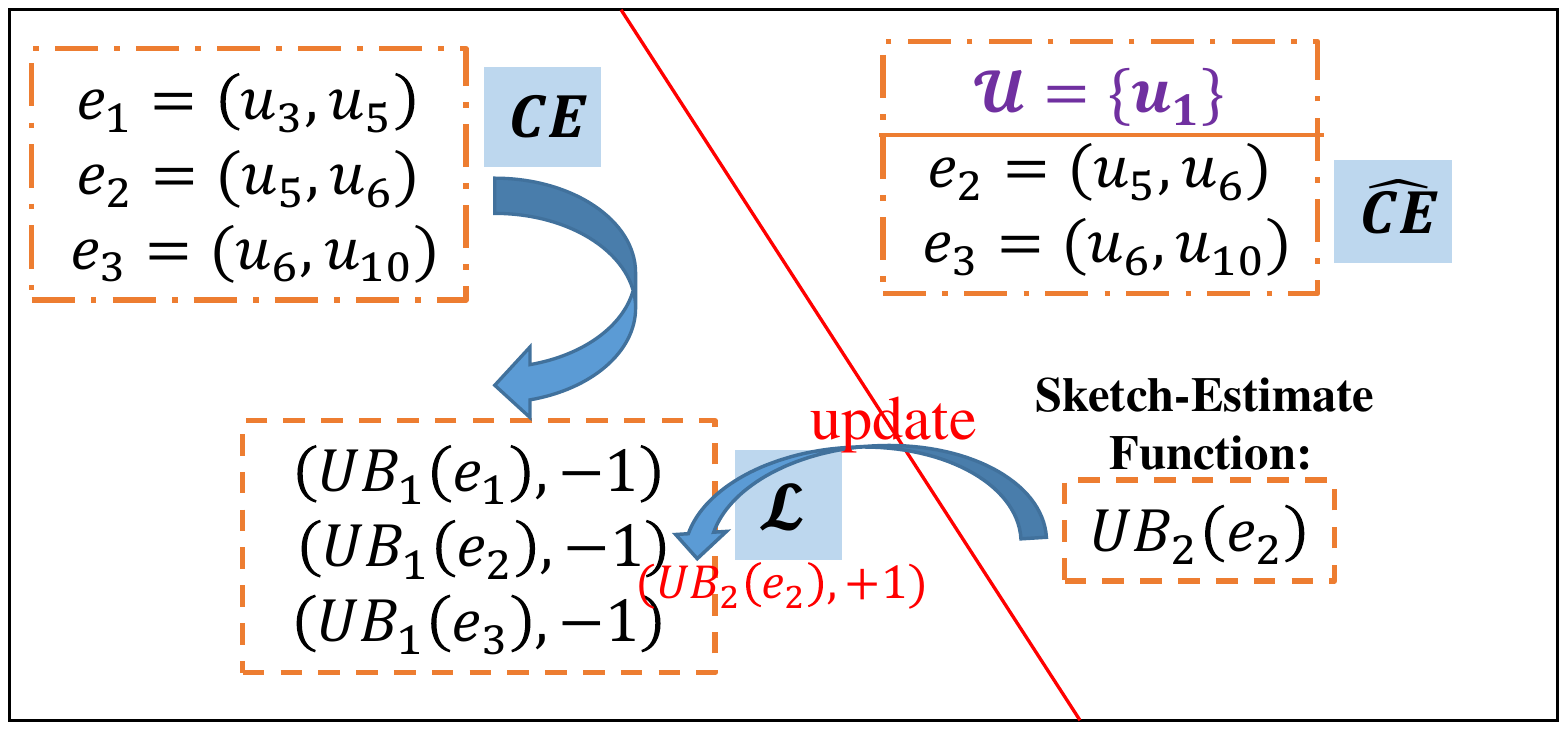}
    \caption{The Two-Step-Bounds Example}
    \label{fig:bounds}
\end{figure}

\begin{example}
\textit{
Figure~\ref{fig:bounds} shows a running example of our two step bounds generation. For a given graph $G_t$ in Figure~\ref{fig:CE-SBG2}, we first identify the candidate edges set $CE = \{(u_3, u_5), (u_5, u_6), (u_6,u_{10})\}$. Then, we compute the first-step bound of each edge in $CE$ (\textit{e.g., $UB_1(u_3,u_5)$}) and set its initial flag as $-1$. During the process of each RT$l$L query with different given users group $\mathcal{U}$, we will prune the candidate edges set from $CE$ to $\widehat{CE}$, and call the Sketch-Estimate Function (\textit{e.g., Algorithm~\ref{alg:estimation}}) to estimate the influence spread expansion of each probing edge (e.g., $e_2 = (u_5,u_6)$) from $\widehat{CE}$. Meanwhile, during the above process, we get a byproduct of $e_2$, the second step upper bound $UB_2(e_2)$, which can be used to narrow the first upper bound of $e_2$ (\textit{e.g., $UB_1(e_2) \leftarrow UB_2(e_2)$}). Once $UB_1(e_2)$ is updated, $e_2$'s flag also needs to be changed to $+1$.  }
\end{example}

\subsection{Order-based SBG for RT$l$R Query Processing} \label{subsec:order}

\begin{algorithm}[t!] 
	\caption{\textbf{RT$l$R:} Order-based SBG} 
	\label{alg:orderSBG}
	\KwIn{$l$: the number of selected edges, $\mathcal{U}$:  users group, $CE = \mathcal{G} \setminus G_t$: candidate edges, and ($\mathcal{L}$, $G_{sg}$): UBL}
	\KwOut{ $\widehat{S}:$ the optimal Reconnecting edge set} 
	\BlankLine

	Initialize $\widehat{S}\leftarrow \emptyset$, Priority queue $Q$, and Array $SG[\theta_3][|V|]$; \\  \label{alg:order1}
    $\widehat{CE}\leftarrow$ Reducing CE \# BFS $($CE$, \mathcal{U})$; /*using Algorithm~\ref{alg:BFS_prunning_edges} */ \\  \label{alg:order2}
	\For {each edge $e=(u,v)\in \widehat{CE}$}{ \label{alg:order3}
	    $UB_2.e \leftarrow \mathcal{L}.{UB_1(e)}$; \ \ \ 
        push $(e, UB_2.e)$ into $Q$; \\ \label{alg:order4}
    }
    \For {$i = 1$ to $\theta_3$}{  \label{alg:order5}
         \For{each $u \leftsquigarrow \mathcal{U}$ in $G^i_{sg}$}{
               $SG[i][u]\leftarrow 1$; \\ \label{alg:order6}
         }
    }

	\For {$j = 1$ to $l$}{   \label{alg:order7}
	    $(e',UB_2.{e'}) \leftarrow Q.front$ ; \ \ \
	    $I_{max} \leftarrow 0$; \ \ \ $\widehat{e} \leftarrow e'$; \\
	    \While{$I_{max} < UB_2.{e'}$}{
	          $I_{max} \leftarrow$ Sketch-Estimate$(\mathcal{U}, e')$; \\
	          $\widehat{e} \leftarrow e'$; 
	          $(e',UB_2.{e'}) \leftarrow Q.front$ ; \\  
	    }
        $\widehat{S}\leftarrow \widehat{S}\cup \widehat{e}$;\\  \label{alg:order8}
        \For{each edge $e_{ce}=(u,v)\in CE \setminus \widehat{CE}$ $\&\&$ $e_{ce} \leftsquigarrow \widehat{e}$}{
             $\widehat{CE} \leftarrow \widehat{CE} \cup e_{ce}$; 
              $UB_2.e_{ce} \leftarrow \mathcal{L}.{UB_1(e_{ce})}$; \\ 
              push $(e_{ce}, UB_2.e_{ce})$ into $Q$; \\ \label{alg:order4}
        }
        \For{$m = 1$ to $\theta_3$}{  \label{alg:order9}
            $\widehat{e}=(\widehat{u},\widehat{v})$; \\
            \If{$SG[m][\widehat{u}]==1$ $\&\&$ $SG[m][\widehat{v}]==0$}{
            $SG[m][\widehat{v}]\leftarrow 1$; \\
            \For{each $u' \leftsquigarrow \widehat{e}$ in $G^m_{sg}$}{
                $SG[m][u']\leftarrow 1$; \\
            }
            }
            \Else{
                 continue; \\ \label{alg:order10}
            }
        }
    }
\Return{$\widehat{S}$}  \label{alg:order11}
\end{algorithm}
In the previous parts of this section, we have overviewed the main idea of our order-based SBG algorithm. We also have introduced the details of two essential blocks of our Order-based SBG algorithm: (i) the UBL construction and (ii) the Sketch-Estimation Function.
In the rest of this section, we will discuss the details of the Order-based SBG algorithm.

The details of the Order-based SBG algorithm are described in Algorithm~\ref{alg:orderSBG}. It takes an integer $l$, a users group $\mathcal{U}$, the candidate edges $CE$, and UBL index $(\mathcal{L}, G_{sg})$ as inputs, and returns a set $\widehat{S}$ of $l$ optimal reconnecting edges that maximizes the influence spread of $\mathcal{U}$. We initialize a set $\widehat{S}$ as empty, an empty Priority queue $Q$ that will be used to 
store
the $UB_2$ information of candidate edges related to $\mathcal{U}$, and an array $SG$ to mark whether a node can be reached by $\mathcal{U}$ or edges in $\widehat{S}$ at each sketch subgraphs $G_{sg}$ (Line~\ref{alg:order1}). Then, we reduce the candidate edges from $CE$ by using Algorithm~\ref{alg:BFS_prunning_edges}, and record the reduced candidate edges into set $\widehat{CE}$ (Line~\ref{alg:order2}). For each edge $e$ in $\widehat{CE}$, we get $e$'s first-step upper bound $UB_1(e)$ from UBL index, and set $UB_1(e)$ as the initial second-step upper bound value of $e$ (\textit{i.e.,} $UB_2(e) = \mathcal{L}.UB_1(e)$), and then push $(e, UB_2(e))$ into priority queue $Q$ (Lines~\ref{alg:order3} - \ref{alg:order4}). Next, we mark the nodes which are reached by $\mathcal{U}$ in each sketch subgraphs of $G_{sg}$ (Lines~\ref{alg:order5} - \ref{alg:order6}). Further, in each iteration, we probe the candidate edges in priority queue $Q$ in order based on their $UB_2$ value, and then call Sketch-Estimation Function to compute the influence spread expansion of the probing edge $e$, the edge probing in this iteration will be early terminated once the front edge from Q is less than the currently maximum influence spread expansion value (Line~\ref{alg:order7}). After finding out the optimal edge $\widehat{e}$, we update the mark of nodes reached by $\widehat{e}$ in each sketch subgraphs (Lines~\ref{alg:order8} -\ref{alg:order9}). Finally, it returns the optimal reconnecting edge set $\widehat{S}$ having maximum influence spread expansion of $\mathcal{U}$ (Line~\ref{alg:order11}).

\vspace{1mm}
\noindent
\textbf{Complexity.} The time complexity of Algorithm~\ref{alg:orderSBG} is $\mathcal{O}(|\widehat{CE}| + \theta_3\cdot |E_t| + l\cdot|\widehat{CE}|\cdot\theta_3\cdot |E_t|)$. Besides, the space complexity is $\mathcal{O}(|CE| + \theta_3\cdot |E_t|)$. Although the time complexity of Algorithm~\ref{alg:orderSBG} is not significantly better than the SBG algorithm, it can greatly reduce the number of candidate edges probing for influence spread estimation, which is the bottleneck of the SBG algorithm.

\section{Experimental Evaluation}\label{sec:exp}
In this section, we present the experimental evaluation of our proposed approaches for the RT$l$L queries: the sketch based greedy algorithm (\textbf{SBG}) in Section~\ref{subsec:sbg}; the candidate edges pruning method to accelerate SBG (\textbf{CE-SBG}) in Section~\ref{subsec:optim1}; and the Order-based SBG solution (\textbf{O-SBG}) in Section~\ref{subsec:order}. 

\subsection{Experimental Setting}
We implement the algorithms using Python 3.6 on Windows environment with 2.90GHz Intel Core i7-10700 CPU and 64GB RAM. 

\vspace{1mm}
\noindent
\textbf{Baseline. } To the best of our knowledge, no existing work investigates the RT$l$R problem. To further validate, we use our SBG algorithm as the baseline algorithm to compare with CE-SBG and O-SBG. This is because the well-known RIS based IM methods~\cite{tang2014influence,nguyen2016stop} are hardly used in the RT$l$R query (\textit{i.e., mentioned in Section~\ref{subsec:IManlyasis}}). Meanwhile, our SBG algorithm is extended from the FI-sketch IM method (\textit{i.e., SG algorithm~\cite{10.1145/2661829.2662077}}), while the SG algorithm performs well within the existing IM efforts, which has been validated in the state-of-the-art IM benchmark study~\cite{arora2017debunking}.    

\begin{table}[t!]
	\caption{The Description of Dataset \label{tab:datasets}}
	\vspace{-2mm}
	\begin{center}
		\scalebox{0.9}{
			\begin{tabular}{|c|c|c|c|c|c|c|} \hline
				{\bf Dataset}   & Nodes       & Temporal Edges    & $d_{avg}$     & Days    & Type      \\\hline
				\hline 
				eu-core         & 986         & 332,334           & 25.28      & 803     & Directed \\\hline
				CollegeMsg      & 1,899       & 59,835            & 10.69      & 193     & Directed \\\hline
				mathoverflow    & 21,688      & 107,581           & 4.17       & 2,350   & Directed  \\\hline
				ask-ubuntu      & 137,517     & 280,102           & 1.91       & 2,613   & Directed  \\\hline   
				stack-overflow  & 2,464,606   &17,823,525         & 6.60       & 2,774   & Directed  \\\hline
			\end{tabular}
		}
	\end{center}
\end{table}

\vspace{1mm}
\noindent
\textbf{Datasets.} We conduct the experiments using five publicly available datasets from the \textit{Large Network Dataset Collection}~\footnote{http://snap.stanford.edu/data/index.html}: \textit{eu-core}, \textit{CollegeMsg}, \textit{mathoverflow}, \textit{ask-ubuntu}, and \textit{stack-overflow}. The statistics of the datasets are shown in Table~\ref{tab:datasets}. We have averagely divided all datasets into $T$ graph snapshots (\textit{e.g., $G_t = (V, E_t)$, $t\in [1,T]$}), where $V$ is the node and $E_t$ is the edges appearing in the time period of $t$ in each dataset.  


\vspace{1mm}
\noindent
\textbf{Parameter Configuration. }
Table~\ref{tab:parameter} presents the parameter settings. We consider four parameters in our experiments: the number of queries $Q$, the size of given users group $|\mathcal{U}|$, reconnecting edges size $l$, and the number of snapshots $T$. Besides, the near future snapshot $G_{T+1}$ is generated by using the recent link prediction method~\cite{zhang2021labeling} 
In each experiment, if one parameter varies, we use the default values for the other parameters. 
Besides, we set $\theta_3 = 200$, which is consistent with~\cite{arora2017debunking}. 

\begin{table}[t!]
	\caption{Parameters and their values \label{tab:parameter}}
	\vspace{-2mm}
	\begin{center}
		\scalebox{0.9}{
			\begin{tabular}{|c|p{4.0cm}<{\centering}|c|} \hline
				{\bf Parameter}    & Values                             & Default \\\hline 
				\hline
				$Q$                & $[20, 40, 60, 80, 100]$            & 80      \\\hline   
				$|\mathcal{U}|$    & $[1, 2, 4, 6, 8]$                  & 6       \\\hline
				$l$                & $[1, 10, 20, 40]$ or $[1,2,3,4]$   & 10 or 2     \\\hline
				$T$                & $[20, 40, 60, 80, 100]$            & 100     \\\hline		
			\end{tabular}
		}
	\end{center}
\end{table}

\subsection{Efficiency Evaluation} 
We study the efficiency of the approaches for the RT$l$L problem regarding running time under different parameter settings.

\subsubsection{Varying Reconnecting Edges Set Size $l$ }

\begin{figure}[t!]
	\centering
	\subfigure[mathoverflow]{\label{R3:exp:vary_L1}
		\includegraphics[scale=0.30]{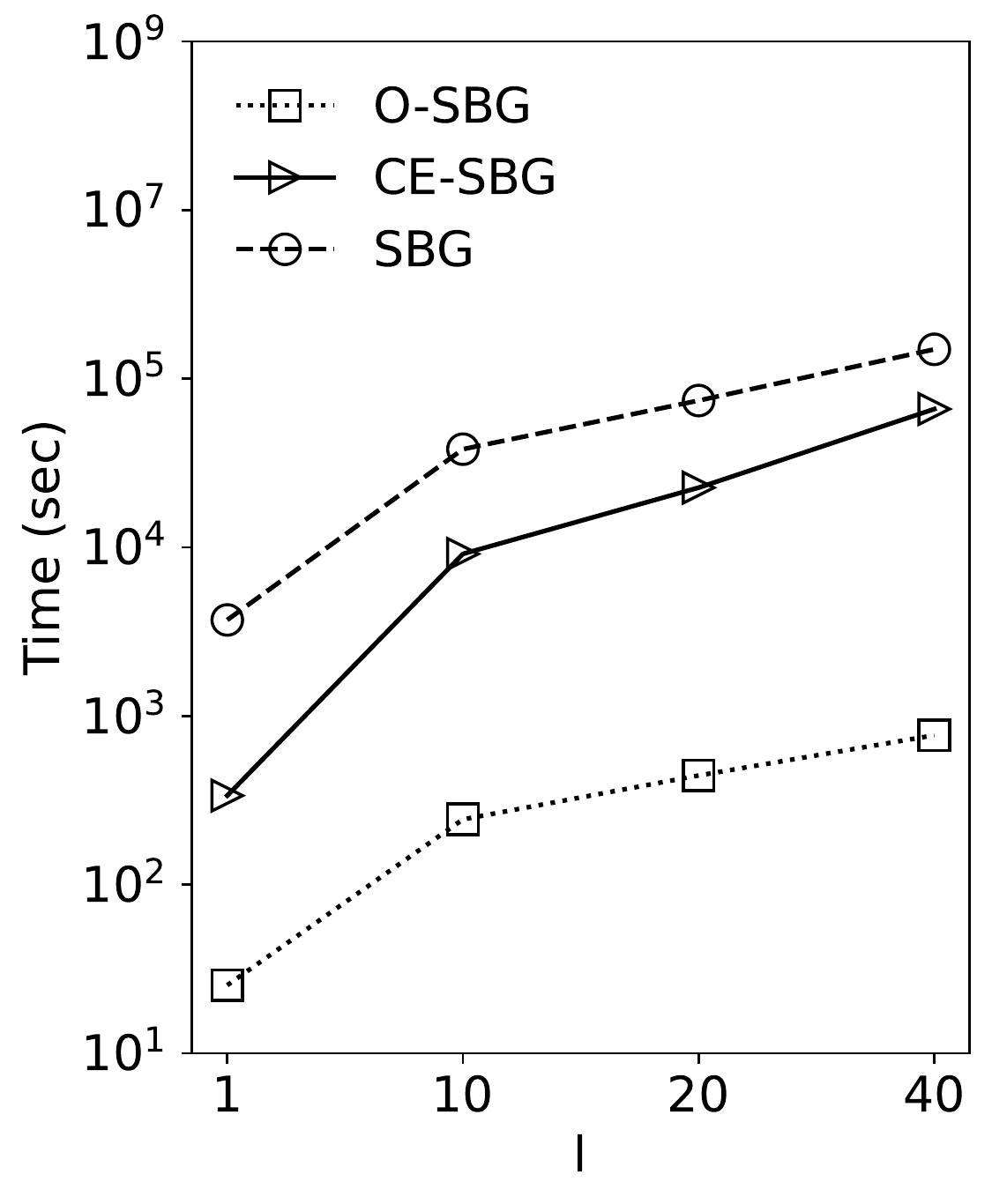}
	}
	\subfigure[ask-ubuntu]{\label{R3:exp:vary_L2}		
		\includegraphics[scale=0.30]{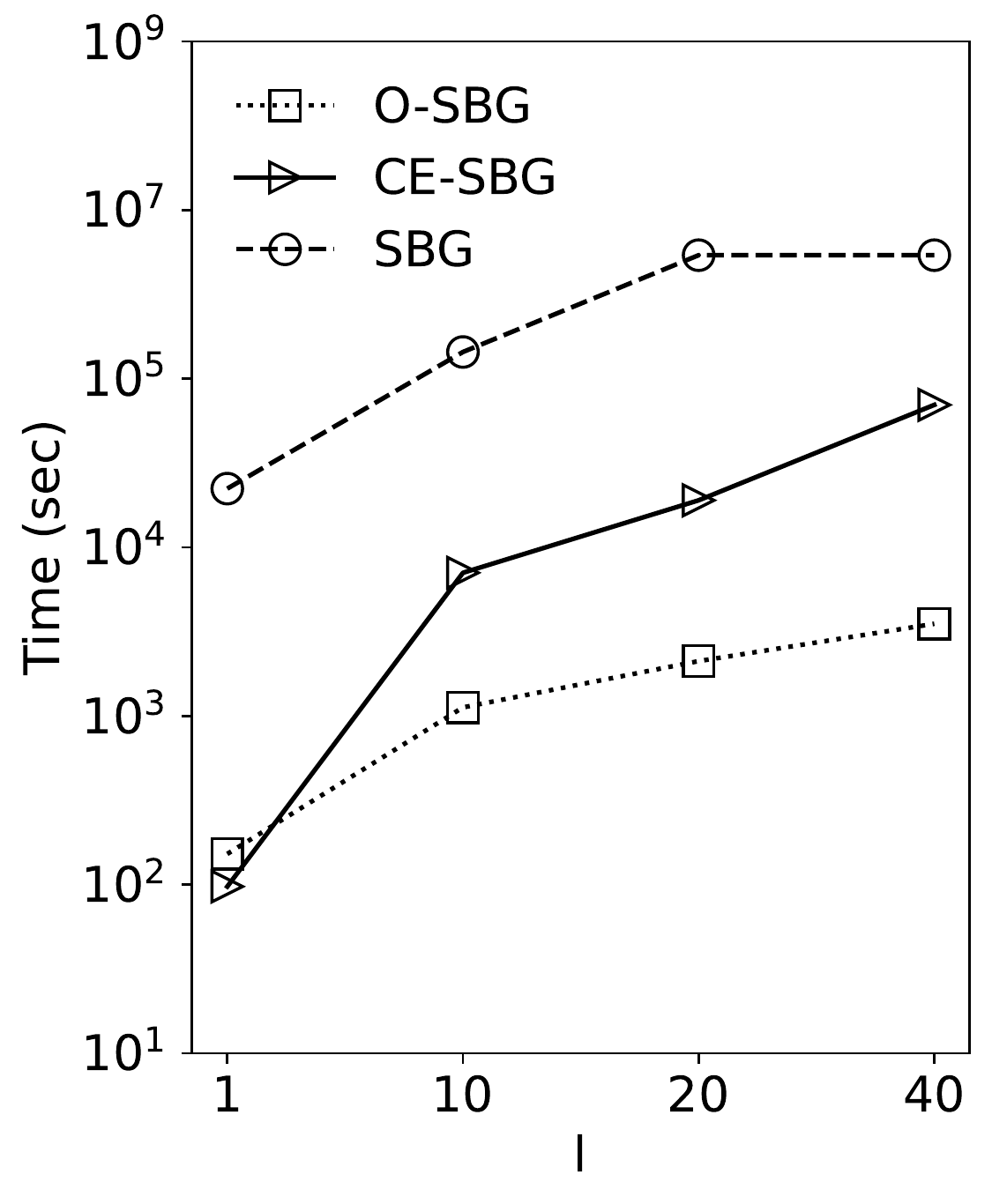}
	}
	\subfigure[CollegeMsg]{\label{R3:exp:vary_L3}		
	\includegraphics[scale=0.30]{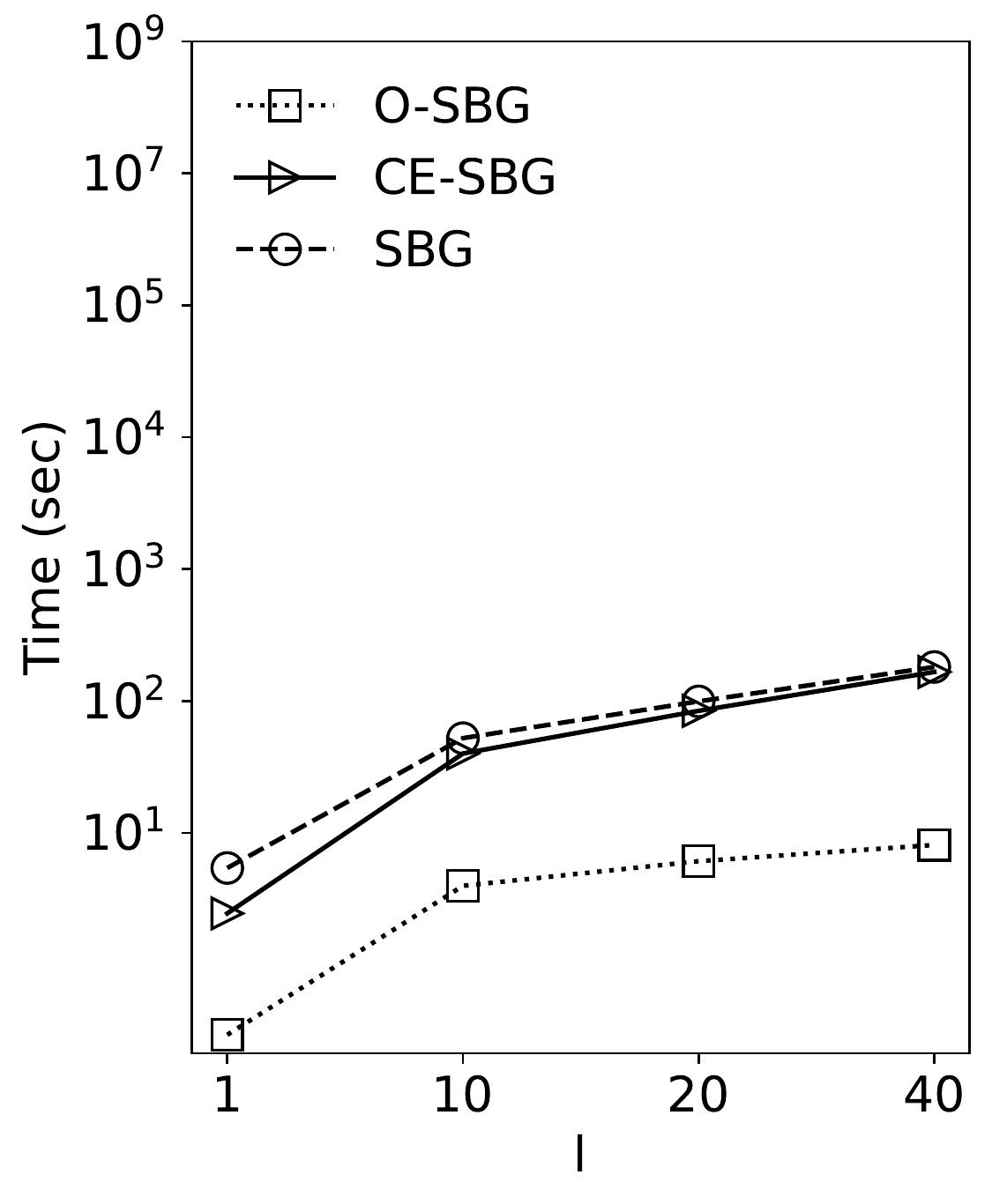}
	}
	\subfigure[eu-core]{\label{R3:exp:vary_L4}		
	\includegraphics[scale=0.30]{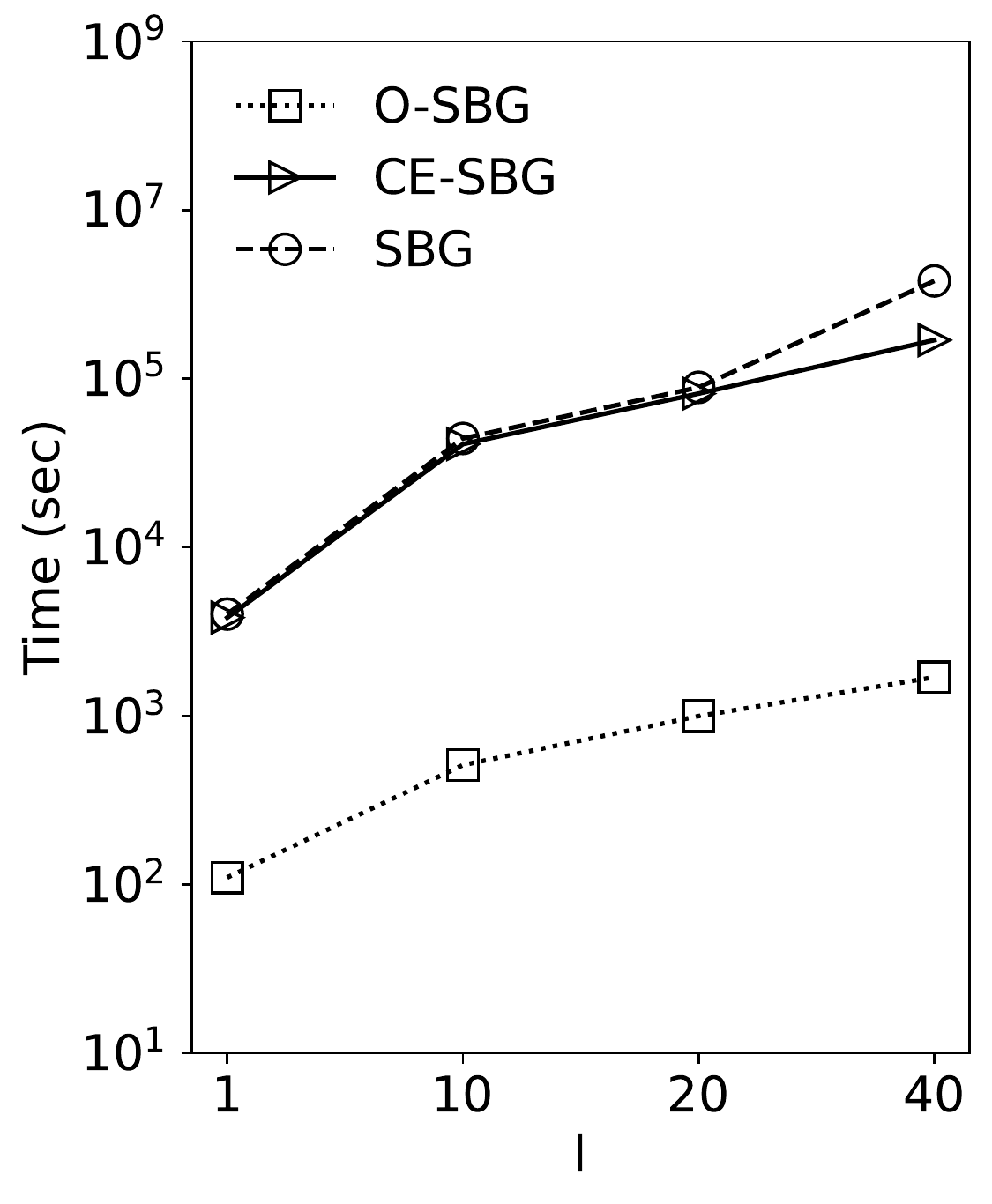}
	}
	\vspace{-3mm}
	\caption{Time cost of algorithms with varying $l$}
	\label{fig:running_time_L}
\end{figure}

Figure~\ref{fig:running_time_L} shows the average running time of our proposed methods by varying $l$ between $1$ to $40$. The running time of the algorithms follows similar trends, where \textit{SBG} consumes maximum time to process an RT$l$R query. On average, \textit{O-SBG} is $65$ to $99$ times faster than \textit{CE-SBG}, and 90 to 167 times faster than \textit{SBG}. Also, \textit{CE-SBG} is about $3$ to $11$ times faster than \textit{SBG} in different datasets when $l$ varies from $2$ to $40$. Notably, when $l$ is 
larger than 
$30$, the \textit{SBG} algorithm fails to return the result of the RT$l$R query within one day. As expected, the running time of both three approaches significantly increases when $l$ is varied from $1$ to $40$. Besides, the growth of running time 
in \textit{O-SBG} is much slower than the other two algorithms. This is because the probing candidate edges will increase in all three approaches when $l$ increases, and \textit{O-SBG} has the smallest number of probing candidate edges 
among 
the three approaches (refer to Figure~\ref{fig:Visited_edges_l}). 

\begin{figure}[t!]
	\centering
	\subfigure[mathoverflow]{\label{R3:exp:visited_edges_l1}
		\includegraphics[scale=0.28]{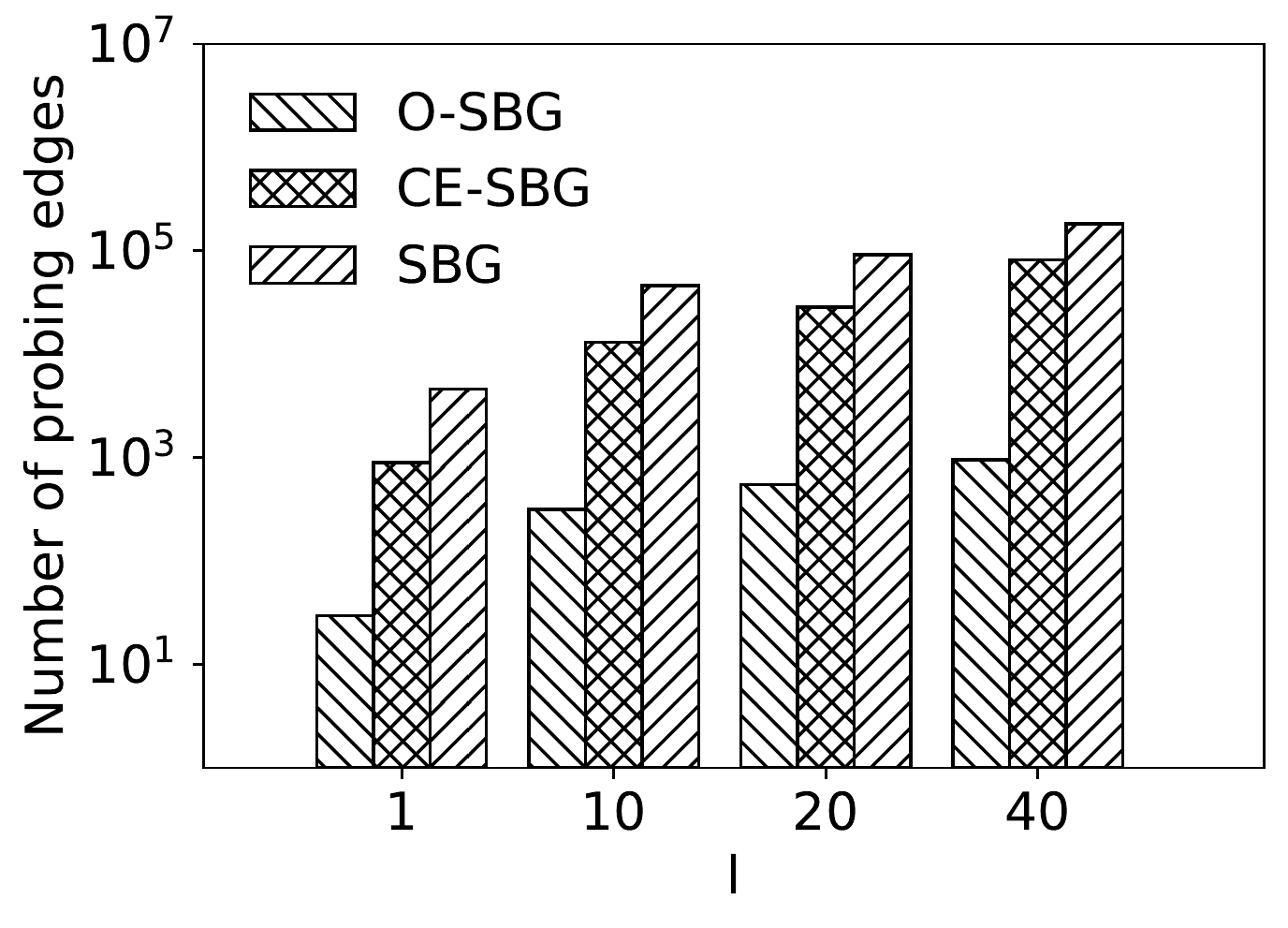}
	}
	\subfigure[ask-ubuntu]{\label{R3:exp:visited_edges_l2}		
		\includegraphics[scale=0.28]{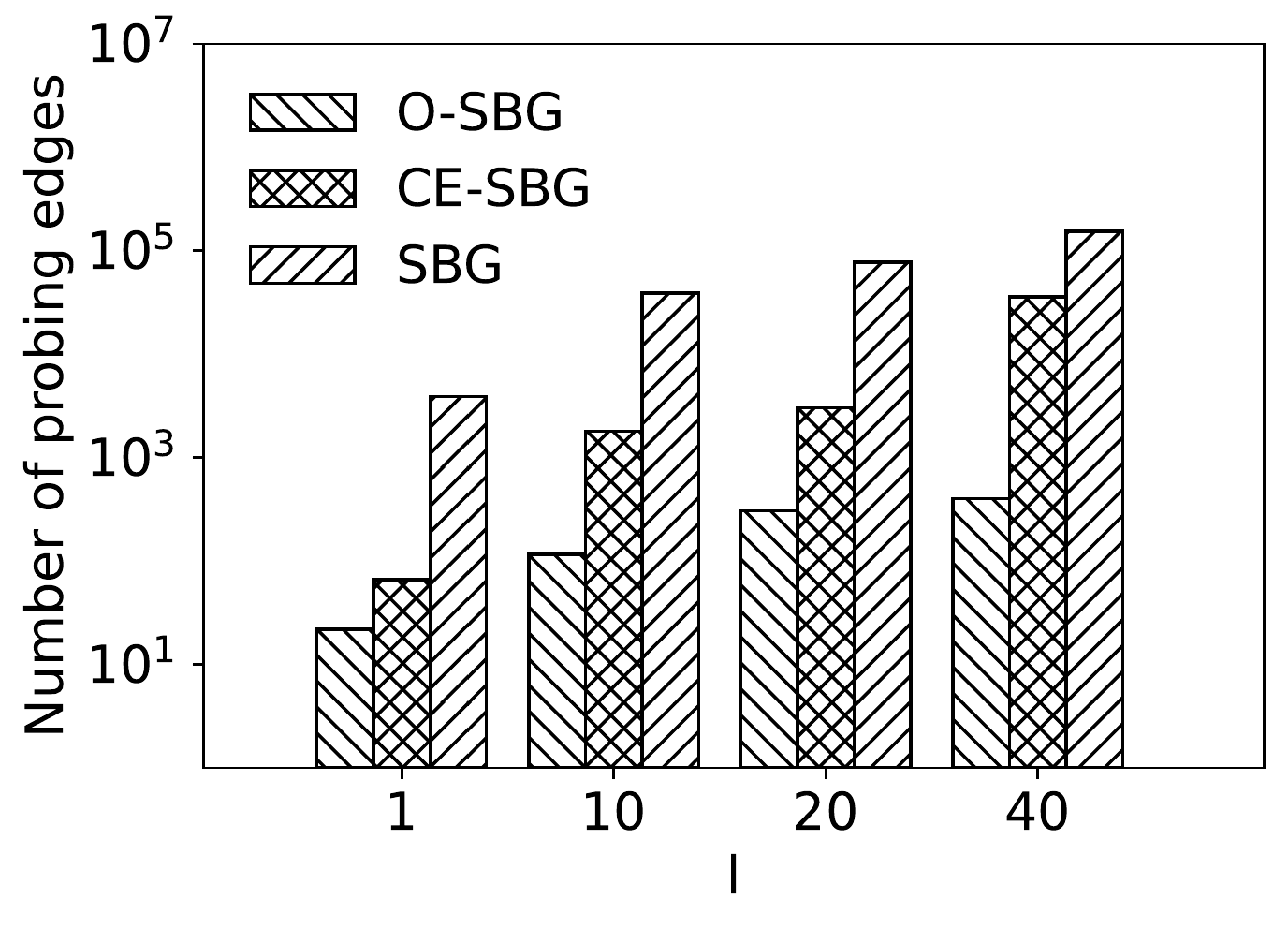}
	}
	\subfigure[CollegeMsg]{\label{R3:exp:visited_edges_l3}		
	\includegraphics[scale=0.28]{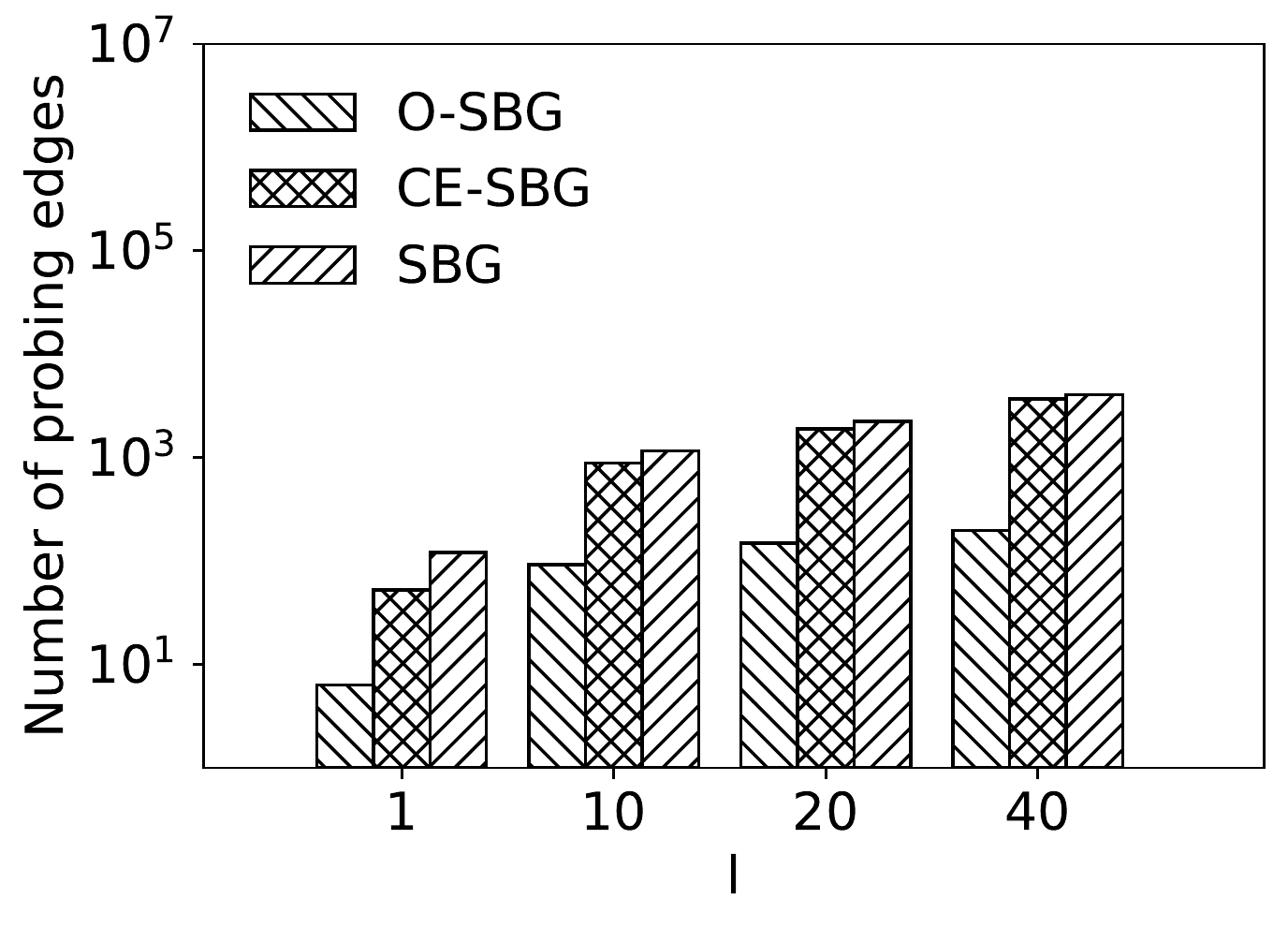}
	}
	\subfigure[eu-core]{\label{R3:exp:visited_edges_l4}		
	\includegraphics[scale=0.28]{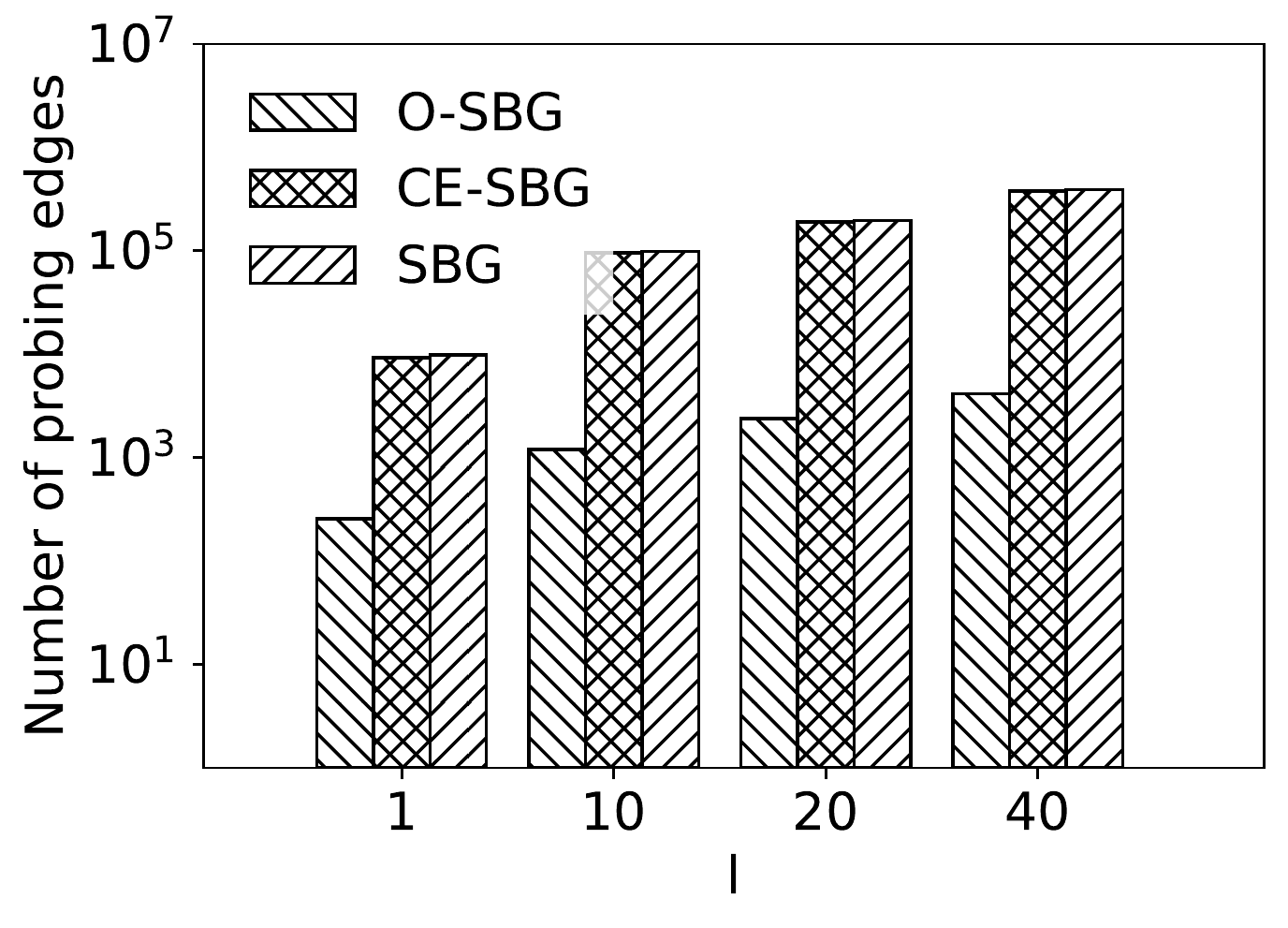}
	}
	\vspace{-3mm}
	\caption{Number of probing edges of algorithms}
	\label{fig:Visited_edges_l}
\end{figure}

The number of probing candidate edges of \textit{SBG}, \textit{CE-SBG}, and \textit{O-SBG} with varying $l$ are presented in Figure~\ref{R3:exp:visited_edges_l1}-\ref{R3:exp:visited_edges_l4}. As can be seen, the probing candidate edges of \textit{O-SBG} is much less than \textit{SBG} and \textit{CE-SBG} for all values of $l$. For example, when $l=20$, the probing candidate edges of \textit{SBG}, \textit{CE-SBG}, and \textit{O-SBG} in \textit{mathoverflow} are $91,850$, $28,588$, and $547$, respectively.
Besides, the number of probing candidate edges increases in all three approaches with the increase of $l$, and \textit{O-SBG} probing the least number of candidate edges in all three approaches.
This result has verified the above explanation about why \textit{O-SBG} performs better than the other two approaches with varying $l$.

\subsubsection{Varying Number of Queries $Q$}

\begin{figure}[t!]
	\centering
	\subfigure[mathoverflow]{\label{R3:exp:vary_Q1}
		\includegraphics[scale=0.30]{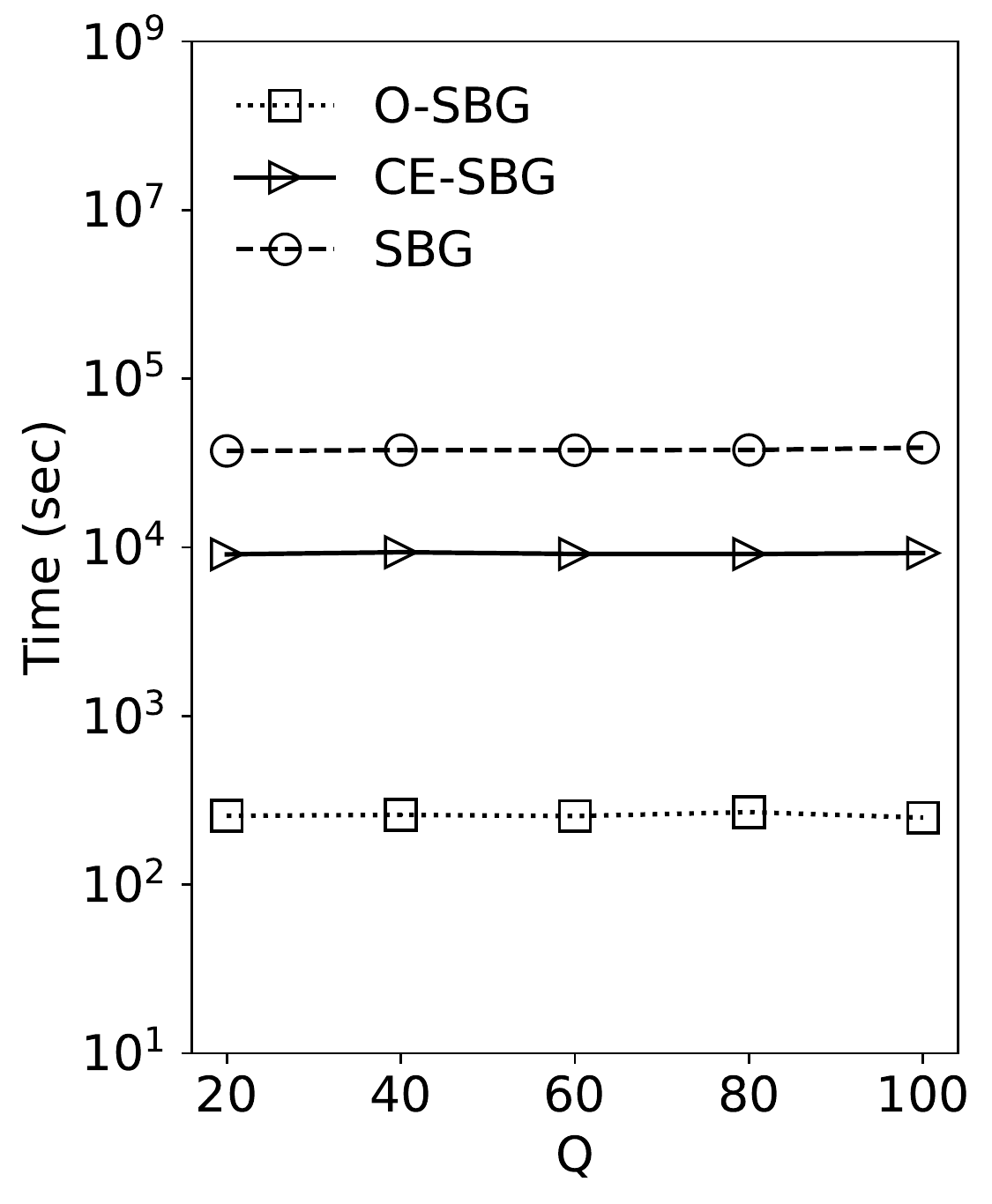}
	}
	\subfigure[ask-ubuntu]{\label{R3:exp:vary_Q2}		
		\includegraphics[scale=0.30]{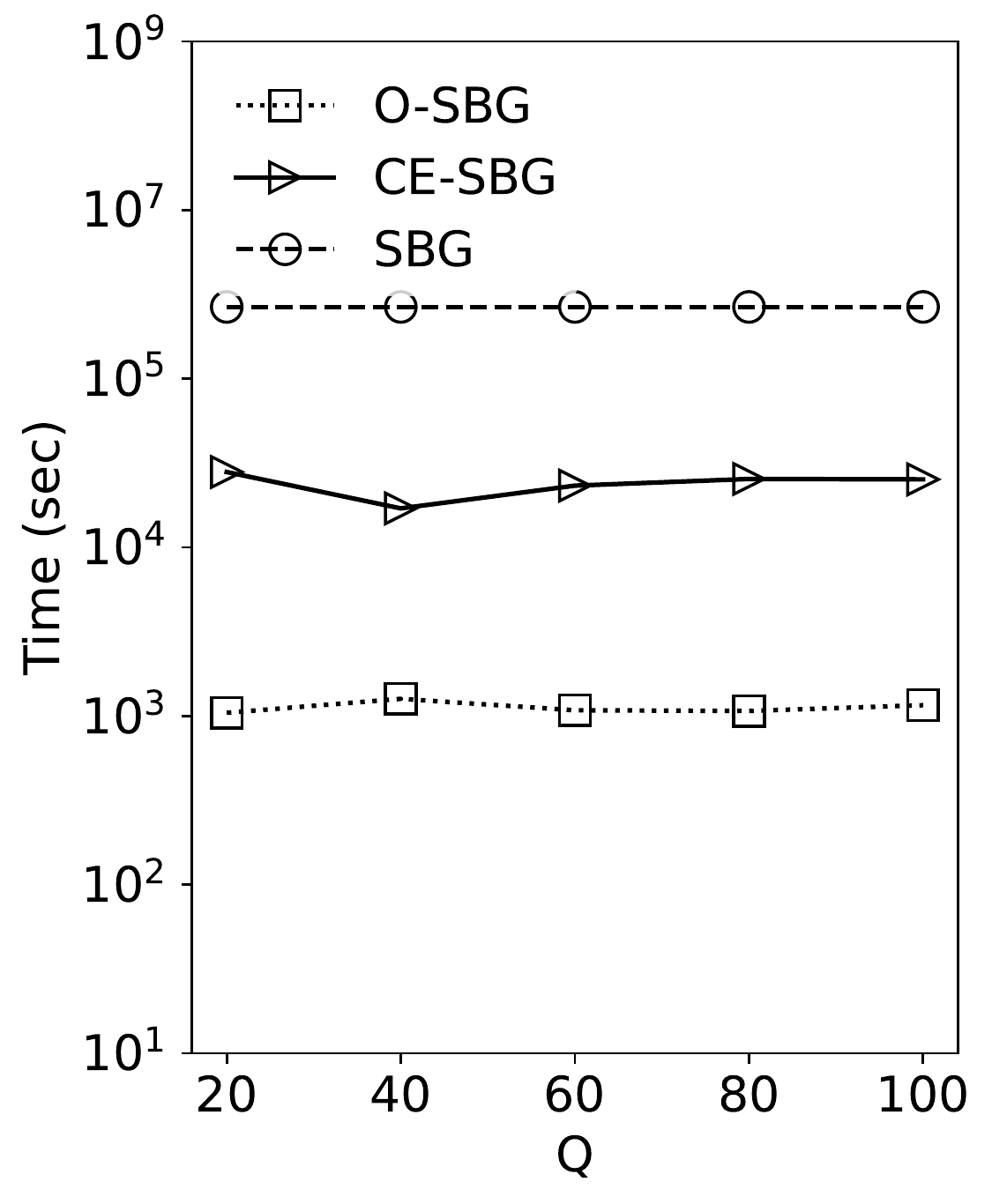}
	}
	\subfigure[CollegeMsg]{\label{R3:exp:vary_Q3}		
	\includegraphics[scale=0.30]{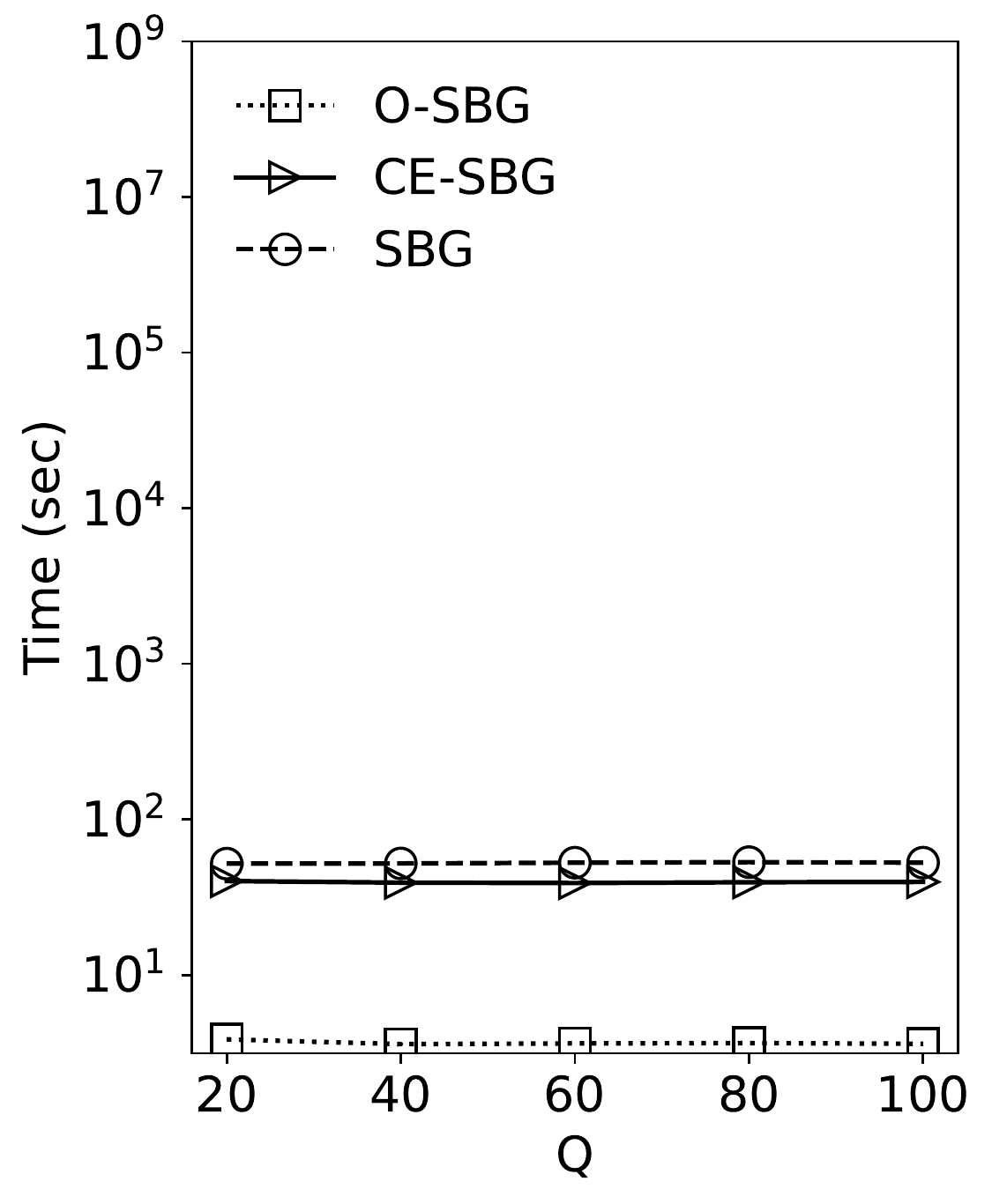}
	}
	\subfigure[eu-core]{\label{R3:exp:vary_Q4}		
	\includegraphics[scale=0.30]{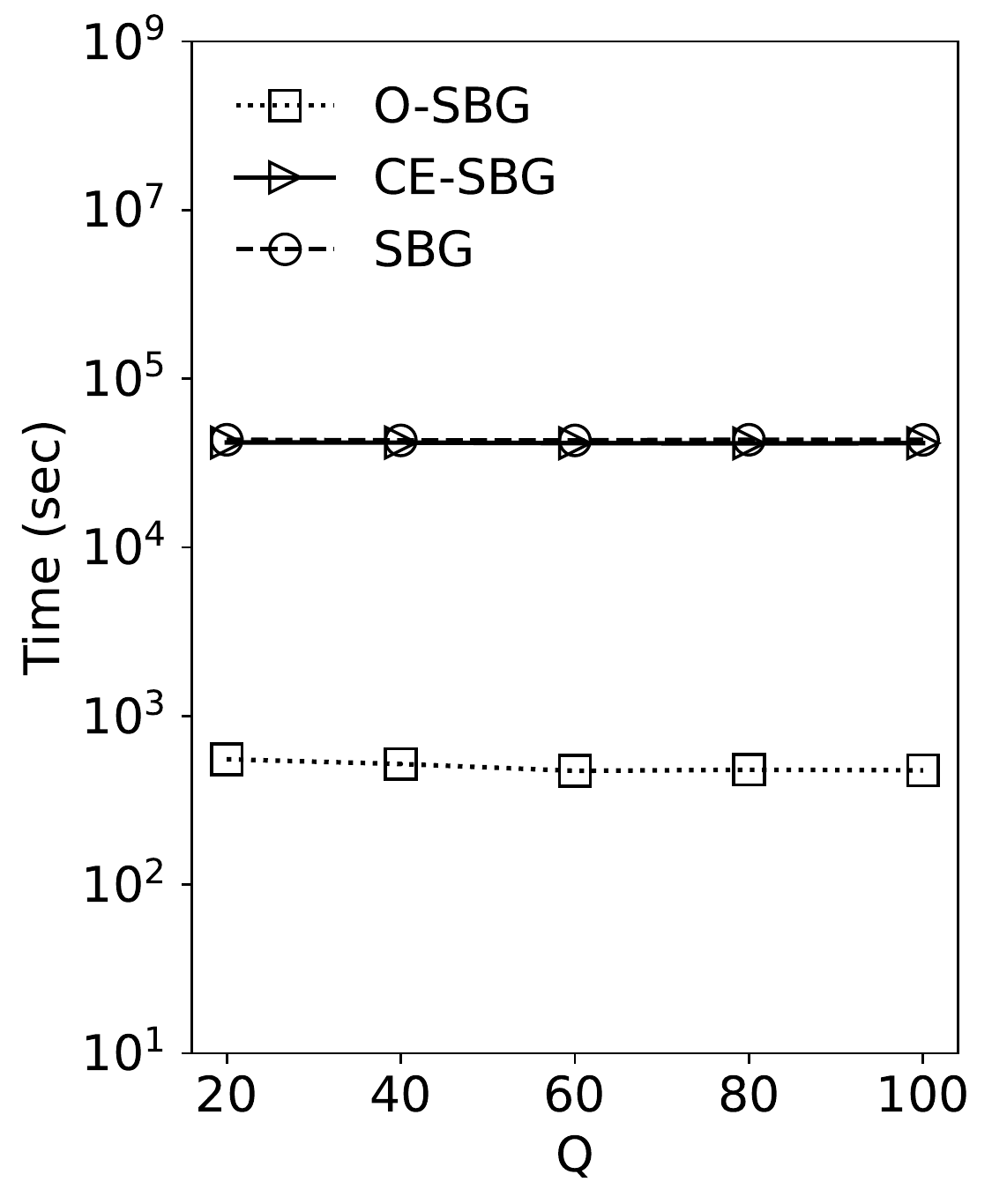}
	}
	\vspace{-3mm}
	\caption{Time cost of algorithms with varying $Q$}
	\label{fig:running_time_Q}
\end{figure}

We compare the performance of different approaches by varying the number of RT$l$R queries from $20$ to $100$. Figure~\ref{fig:running_time_Q} shows the average running time of \textit{SBG}, \textit{CE-SBG}, and \textit{O-SBG} on the four datasets. As we can see, \textit{O-SBG} is significantly efficient than \textit{SBG} and \textit{CE-SBG}. Specifically, \textit{O-SBG} performs two to three orders of magnitude faster than \textit{SBG} 
and one to two orders of magnitude faster than \textit{CE-SBG} in all datasets, respectively. 

\subsubsection{Varying Users Group Size $|\mathcal{U}|$ }

\begin{figure}[ht]
	\centering
	\subfigure[mathoverflow]{\label{R3:exp:vary_U1}
		\includegraphics[scale=0.30]{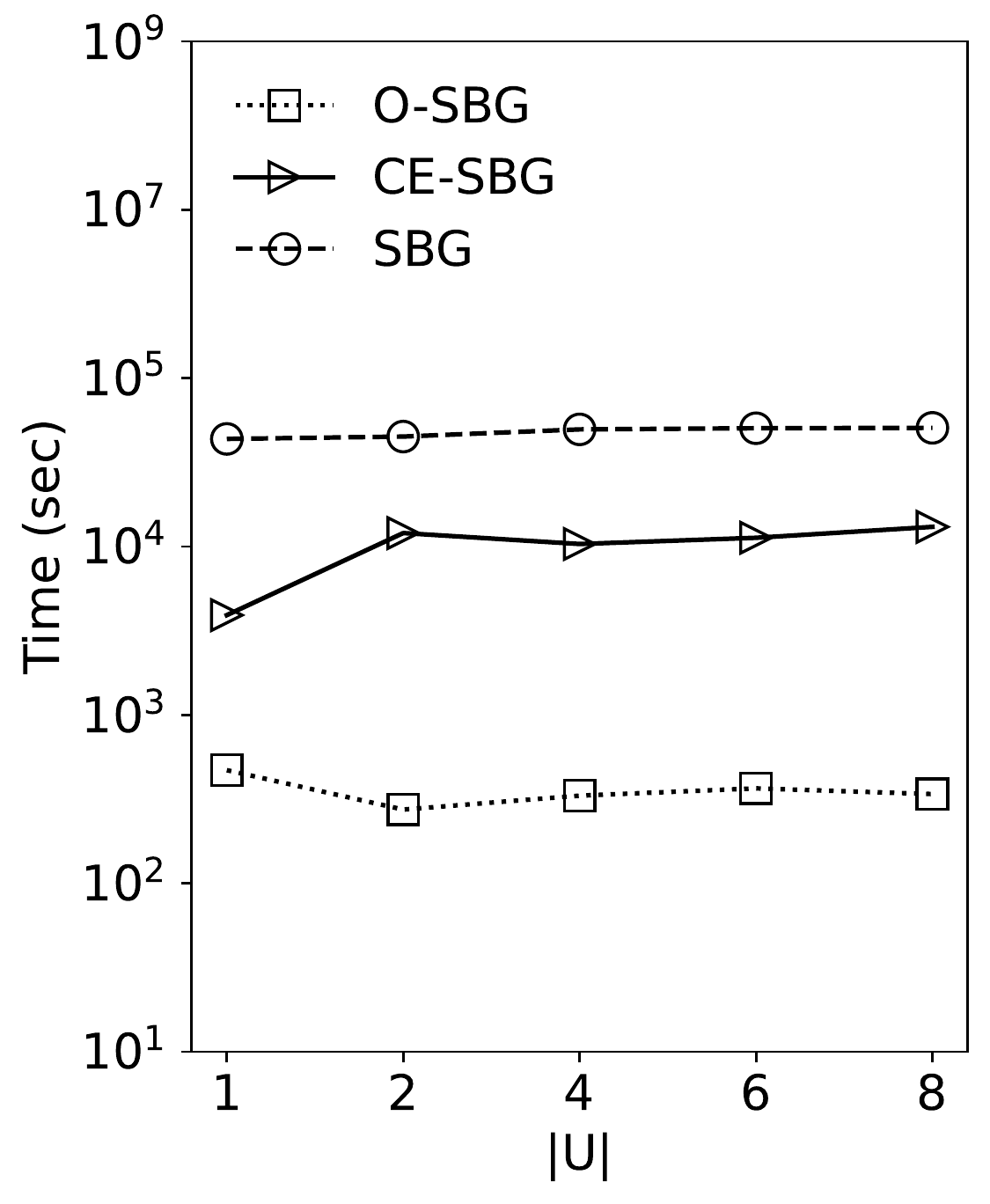}
	}
	\subfigure[ask-ubuntu]{\label{R3:exp:vary_U2}		
		\includegraphics[scale=0.30]{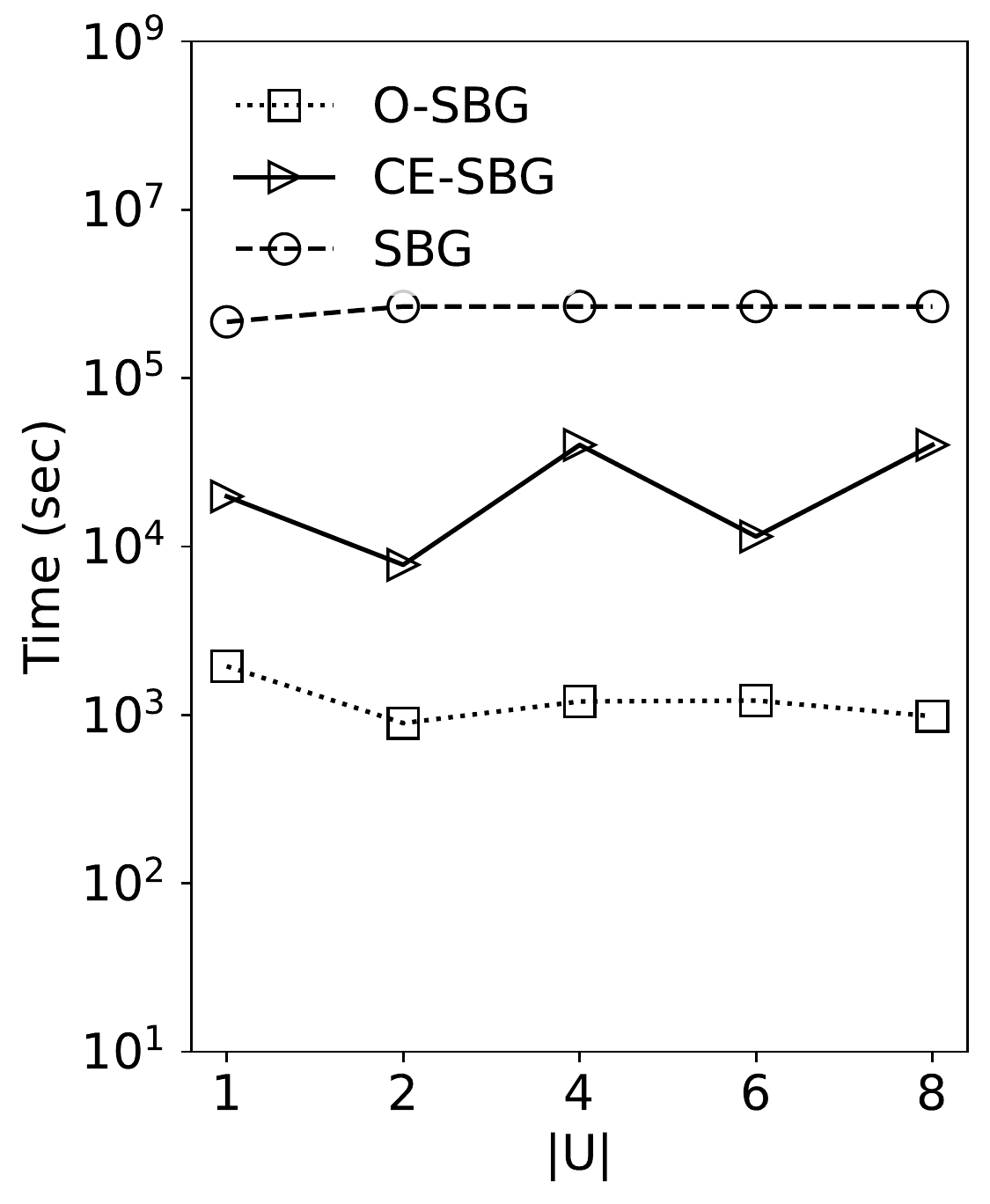}
	}
	\subfigure[CollegeMsg]{\label{R3:exp:vary_U3}		
	\includegraphics[scale=0.30]{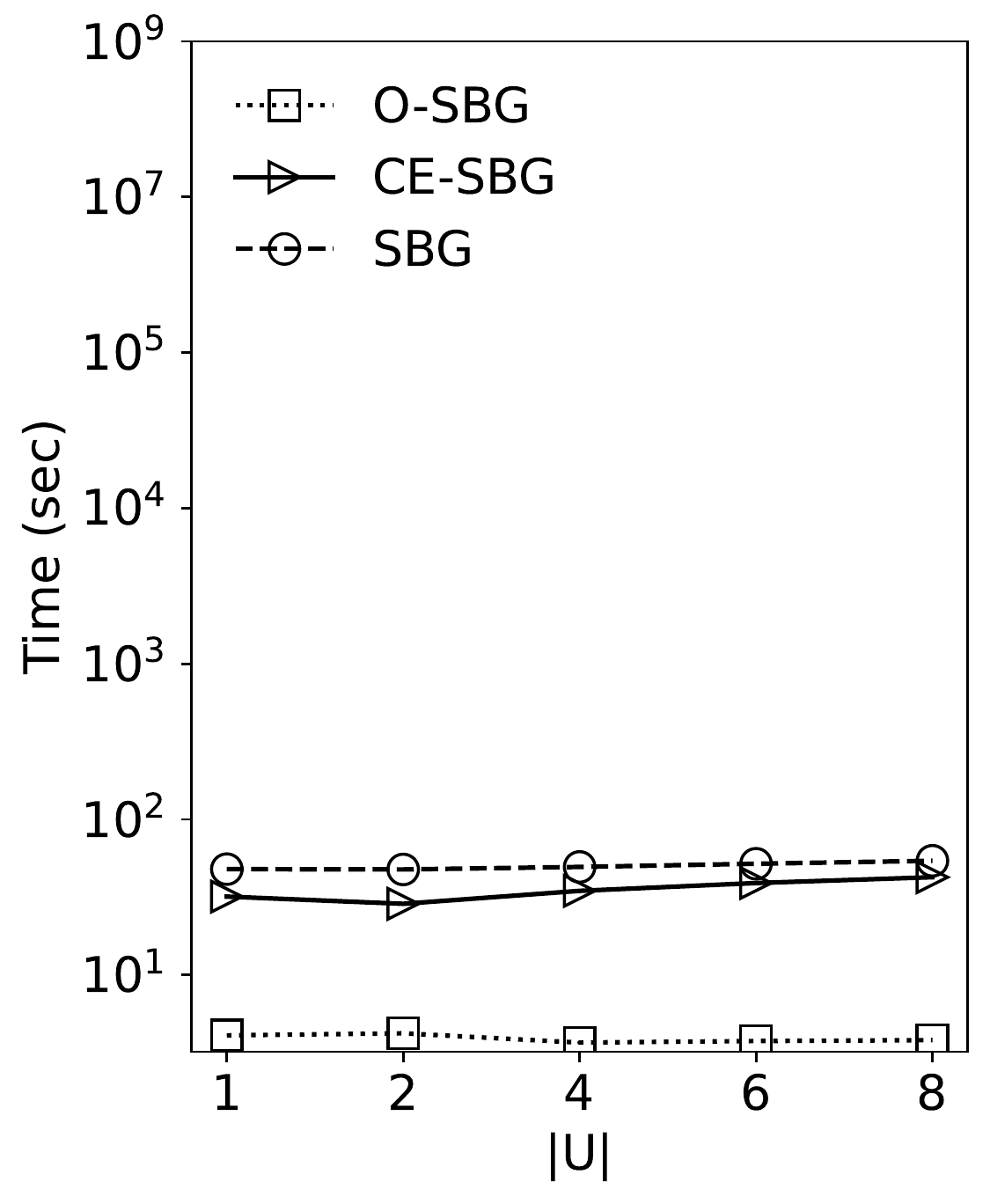}
	}
	\subfigure[eu-core]{\label{R3:exp:vary_U4}		
	\includegraphics[scale=0.30]{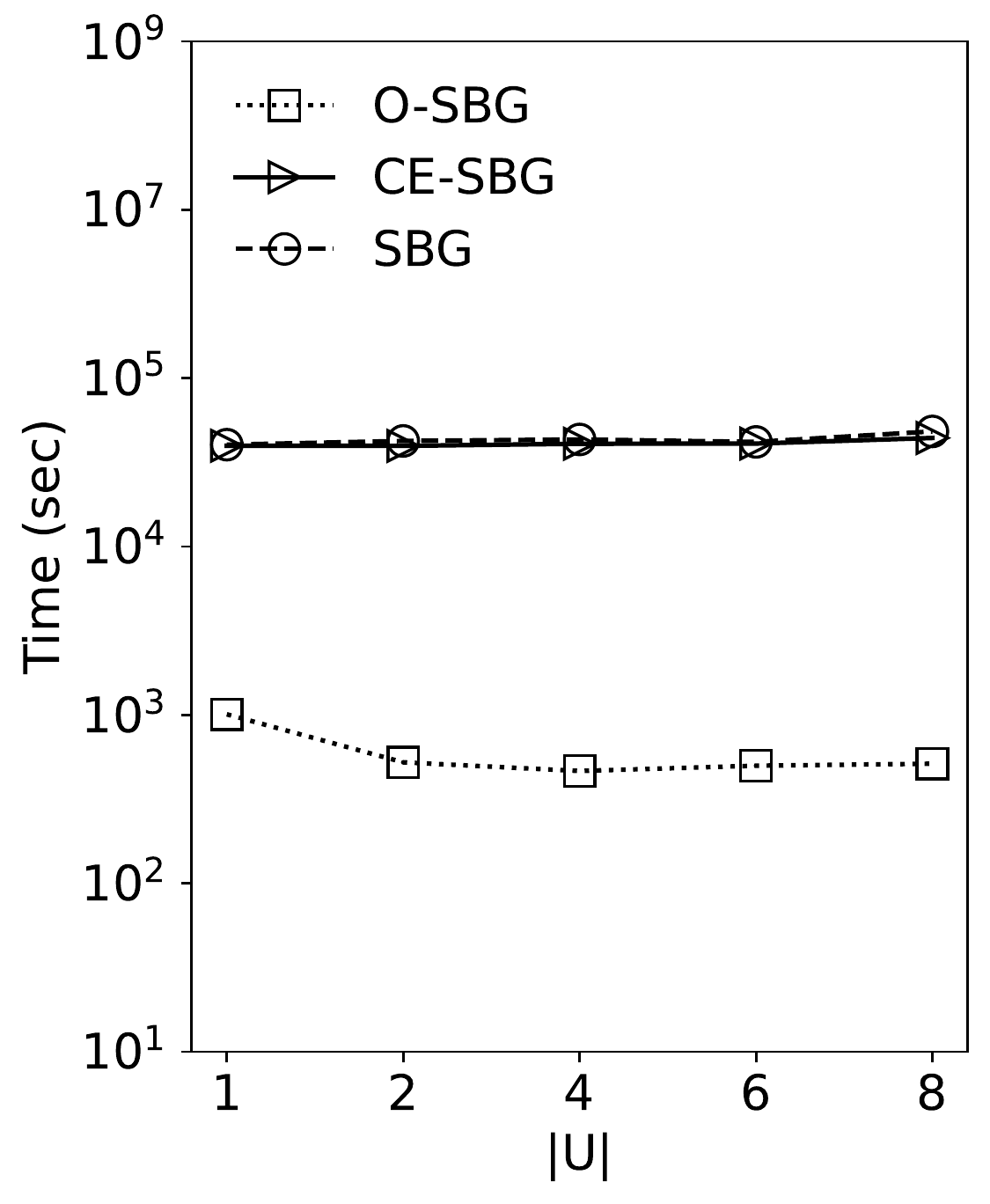}
	}
	\vspace{-3mm}
	\caption{Time cost of algorithms with varying $|\mathcal{U}|$}
	\label{fig:running_time_U}
\end{figure}

Figure~\ref{fig:running_time_U} shows the running time of the approaches by varying the size of users group $\mathcal{U}$ from $1$ to $8$. The results show similar findings that \textit{O-SBG} outperforms \textit{CE-SBG} and \textit{SBG} as it utilizes the two step bounds to significantly reduce the probing candidate edges. For example, \textit{O-SBG} can reduce the running time by around $150$ times and $31$ times compared with \textit{SBG} and \textit{CE-SBG} respectively under different $|\mathcal{U}|$ settings on the \textit{mathoverflow} dataset.  


\subsubsection{Varying Snapshot Size $T$}

\begin{figure}[!tb]
	\centering
	\subfigure[mathoverflow]{\label{R3:exp:vary_T1}
		\includegraphics[scale=0.30]{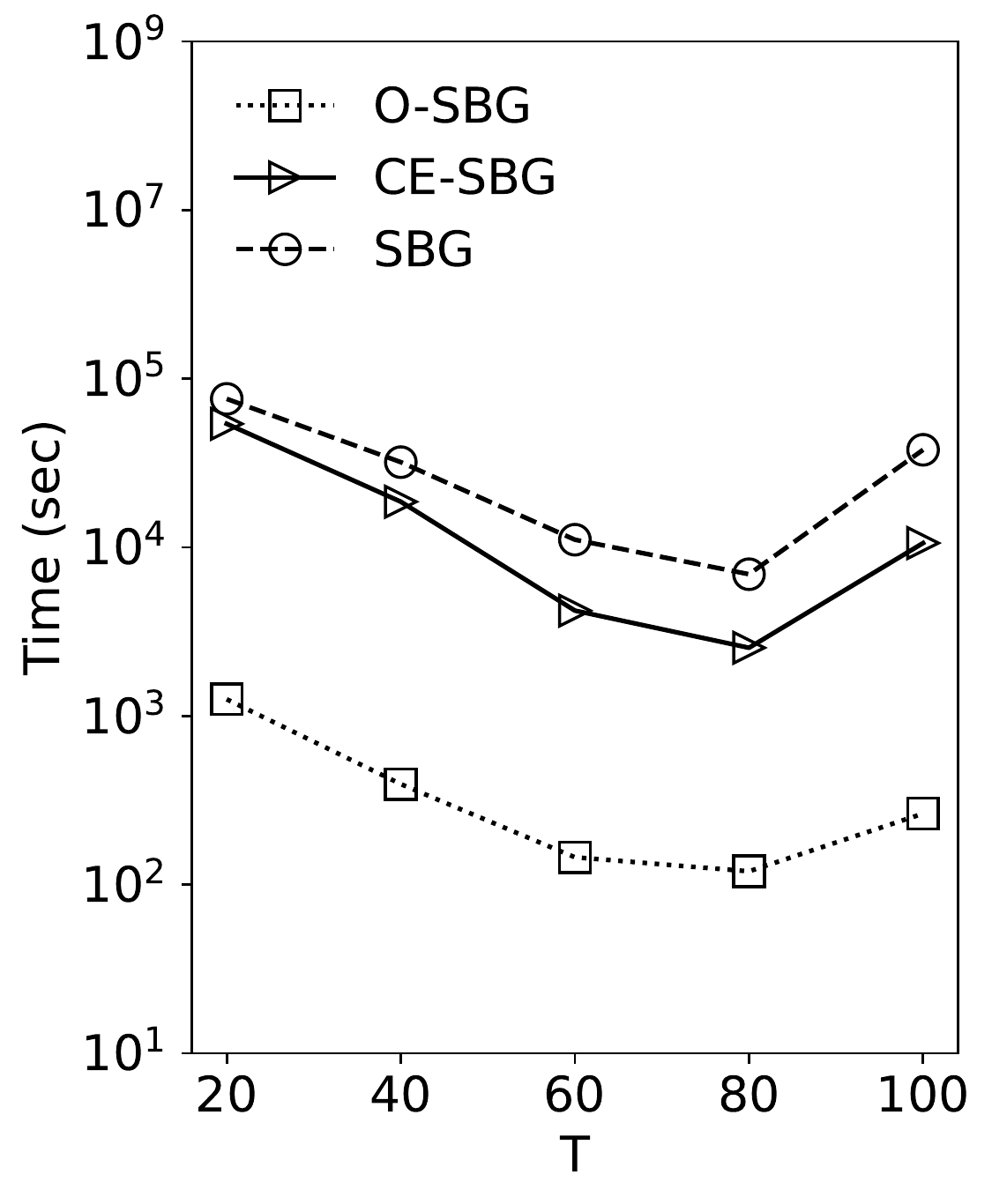}
	}
	\subfigure[ask-ubuntu]{\label{R3:exp:vary_T2}		
		\includegraphics[scale=0.30]{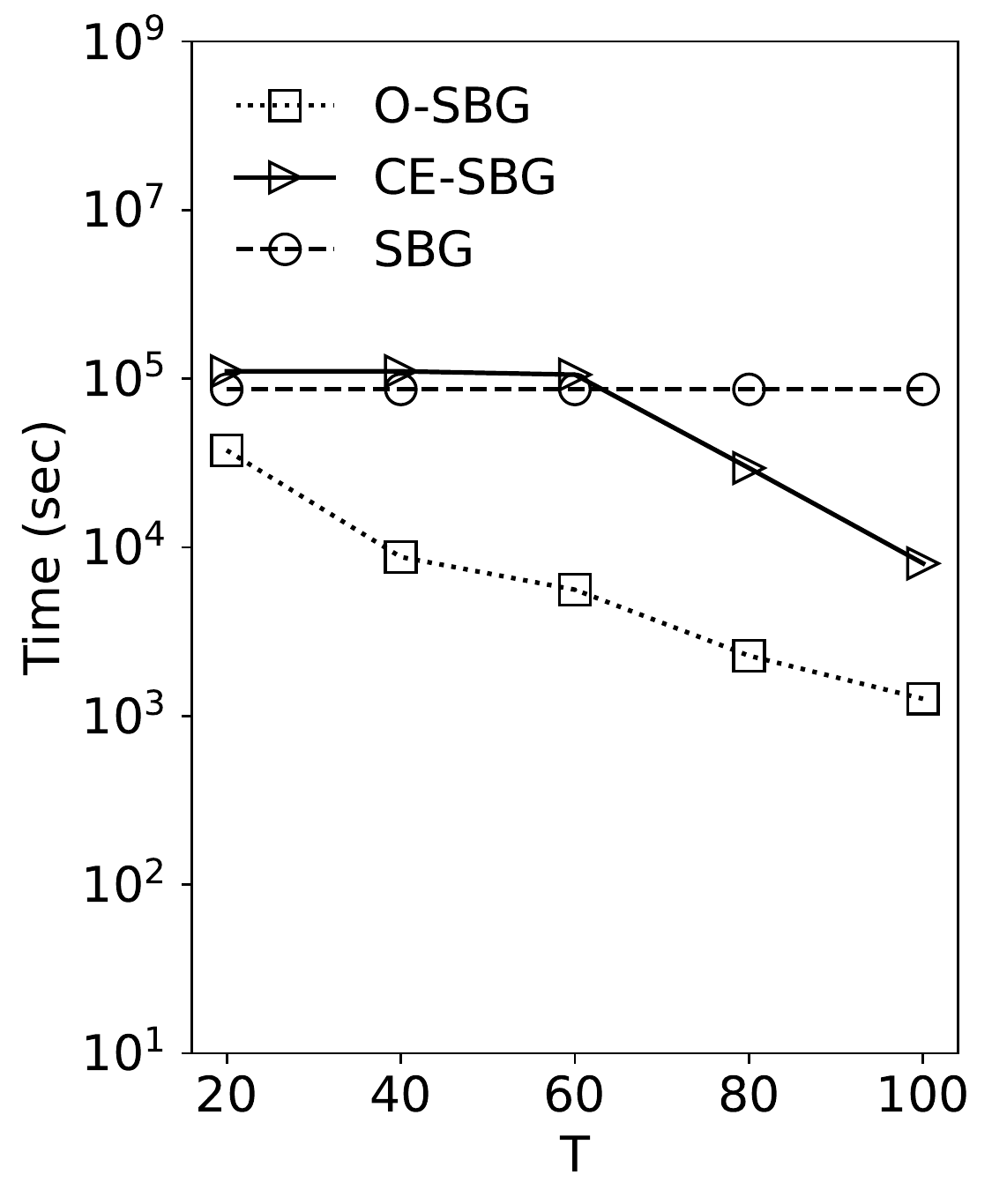}
	}
	\subfigure[CollegeMsg]{\label{R3:exp:vary_T3}		
	\includegraphics[scale=0.30]{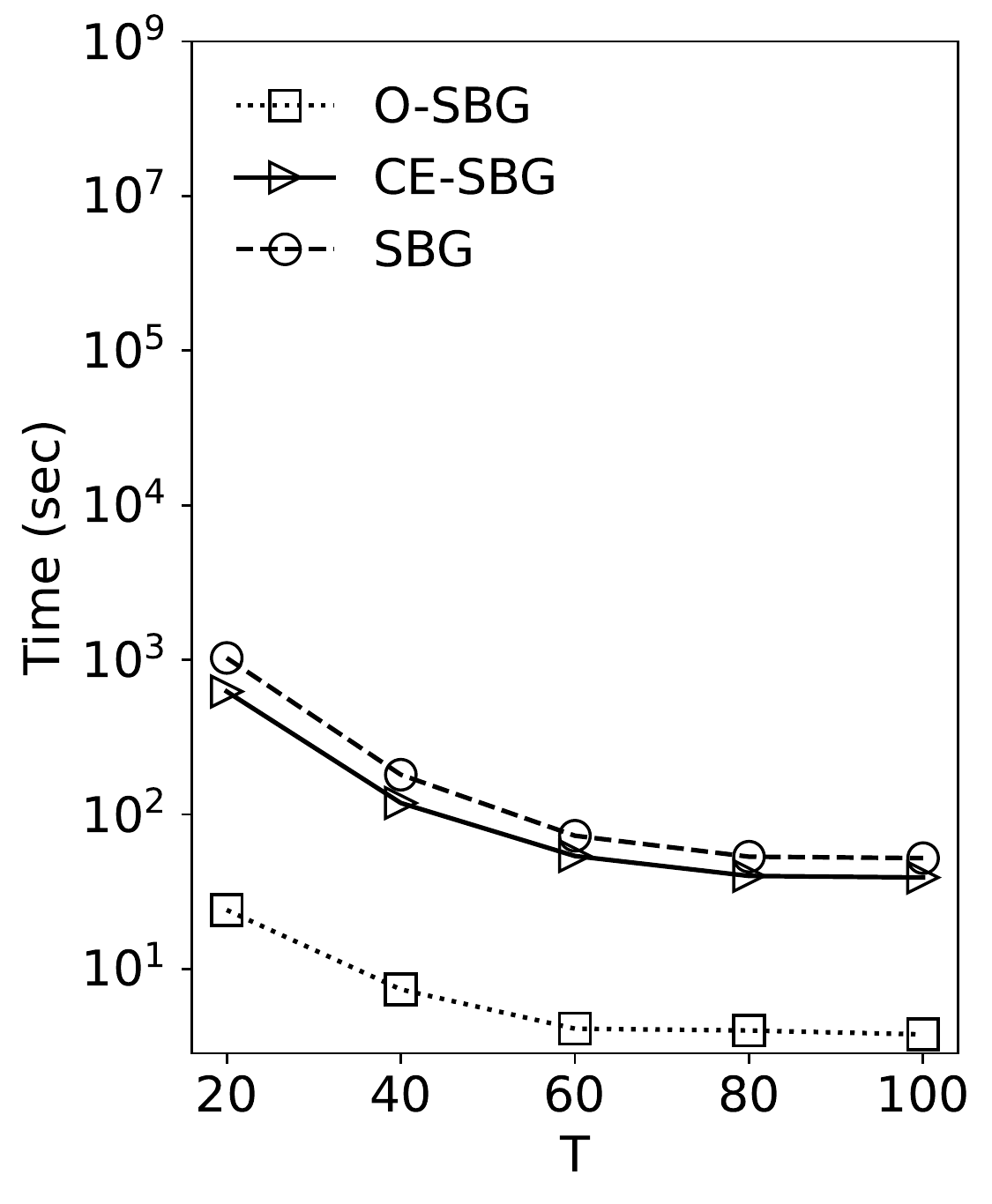}
	}
	\subfigure[eu-core]{\label{R3:exp:vary_T4}		
	\includegraphics[scale=0.30]{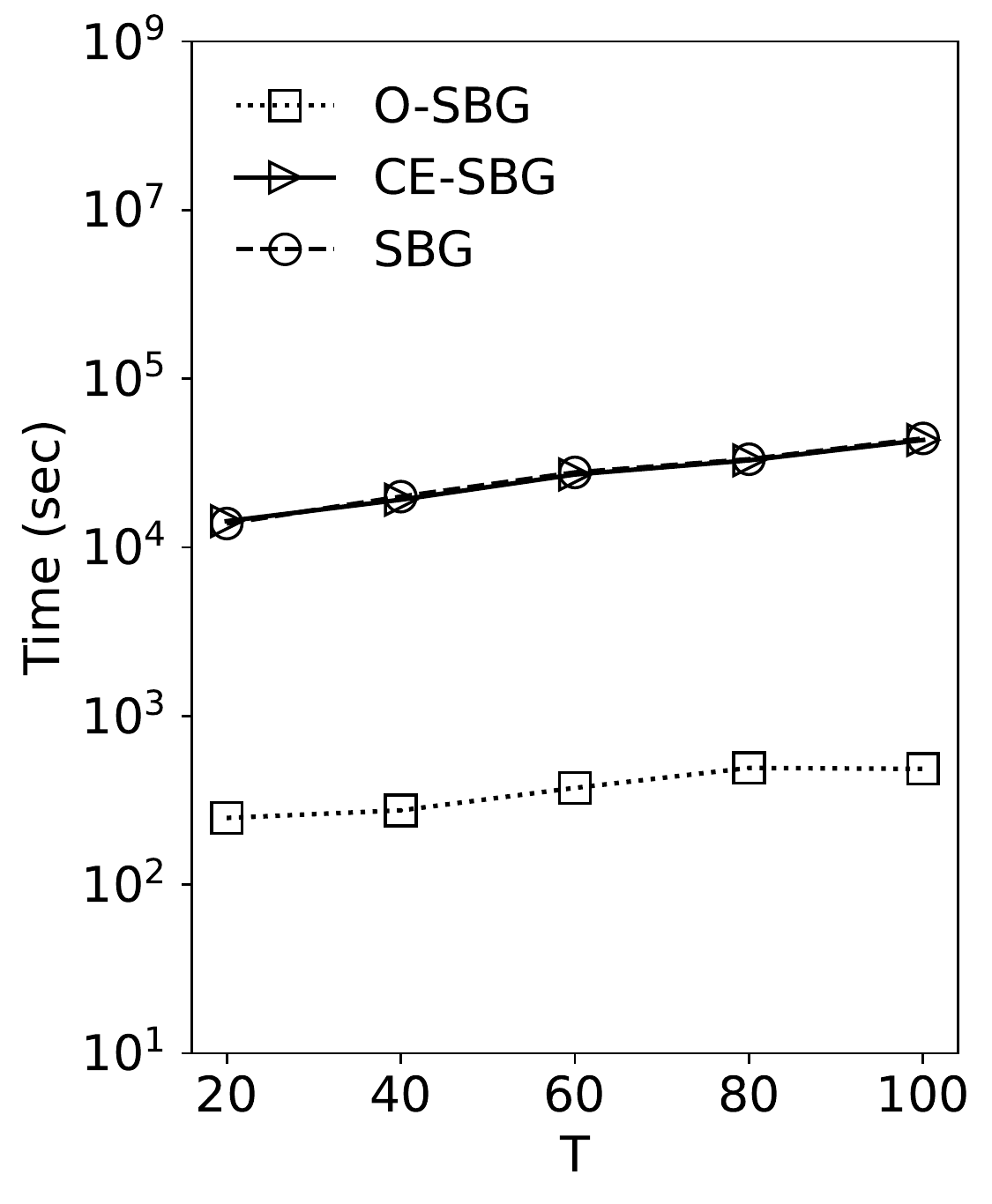}
	}
	\vspace{-3mm}
	\caption{Time cost of algorithms with varying $T$}
	\label{fig:running_time_T}
\end{figure}

\begin{figure}[!tbp]
    \centering
    \includegraphics[scale=0.65]{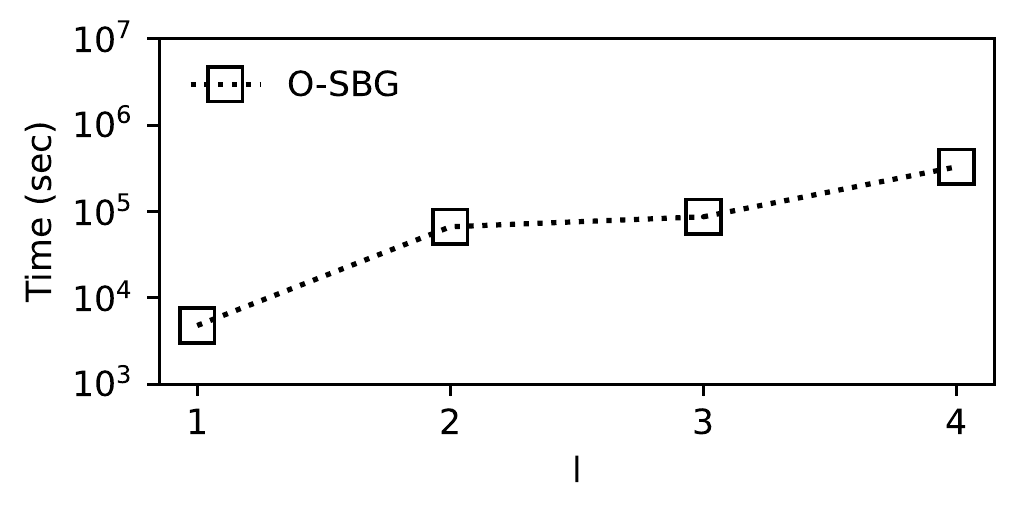}
  \vspace{-3mm}
    \caption{Performance of O-SBG on stack-overflow}
    \label{fig:Scal}
\end{figure}
We compare the efficiency of our proposed algorithms by varying the graph snapshots size $T$ from $20$ to $100$. Figure~\ref{fig:running_time_T} presents the running time with varied values of $T$. The results show similar finding that \textit{O-SBG} outperforms \textit{SBG} and \textit{CE-SBG} in all datasets. Besides, we notice a similar running time trend in the proposed three methods when $T$ varies. Note that the running time does not always keep the same correlation with the varies of $T$. This is because the performance of all three proposed approaches highly depends on the graph structure, and the number of snapshots does not 
show 
a perceptible effect on the network structure.

\subsubsection{Performance in the Hyper Scale Networks.}

We further study the performance of different approaches on \textit{mathoverflow}, which is a huge dataset with $2,464,606$ nodes and $17,823,525$ edges. It is noticed that \textit{SBG} and \textit{CE-SBG} cannot get results in a valid time period on \textit{mathoverflow}, while \textit{O-SBG} can get the results in a valid period by varying $l$ from $1$ to $4$. Figure~\ref{fig:Scal} reports the average running time of \textit{O-SBG} on \textit{mathoverflow}. As we can see, the running time of \textit{O-SBG} scales linearly 
with the increase of $l$.


\subsection{Effectiveness Evaluation}
In this experiment, we evaluate the number of expanding influence users produced by the RT$l$L problem with different datasets and approaches in Figure~\ref{fig:influenced_users_Q} - Figure~\ref{fig:influenced_users_l} by varying one parameter and setting the others as defaults. As can be seen, the average number of influenced users of RT$l$R queries in dense graphs is significantly larger than in sparse graphs for all three approaches. Figure~\ref{fig:influenced_users_Q} shows the average number of influenced users of all three approaches \textit{O-SBG}, \textit{CE-SBG}, and \textit{SBG} on four datasets 
with 
varying $Q$. For example, in Figure~\ref{R3:exp:influenced_users_Q1}, \textit{O-SBG}, \textit{CE-SBG}, and \textit{SBG} algorithms return back $39$, $23$, $20$ number of influenced users on average when $Q=20$ in \textit{mathoverflow} (\textit{i.e., $nodes = 21,688$, temporal edges $= 107,581$, average degree $= 4.96$}), respectively. Meanwhile, in Figure~\ref{R3:exp:influenced_users_Q4}, \textit{O-SBG}, \textit{CE-SBG}, and \textit{SBG} algorithms return back $102$, $165$, $164$ number of influenced users on average when $Q=20$ in \textit{eu-core} (\textit{i.e., $nodes = 986$, temporal edges $= 332,334$, average degree $= 25.28$}), respectively.
Similar pattern can also be found in Figure~\ref{fig:influenced_users_U} - Figure~\ref{fig:influenced_users_l} as more influenced users be returned in dense graphs than in sparse graphs.  
In addition, Figure~\ref{fig:influenced_users_U} reports that the influenced users of all three approaches do not always keep the same correlation with the increases of $\mathcal{U}$.
Figure~\ref{fig:influenced_users_l} shows that the number of influenced users by all three approaches significantly increases when $l$ changes from $1$ to $40$. For example, the numbers of influenced users by \textit{O-SBG}, \textit{CE-SBG}, and \textit{SBG} when setting $l$ as $40$ are $23$ times, $11$ times, and $15$ times larger than setting $l$ as $1$ in the \textit{mathoverflow} dataset. From the above experimental results, we can conclude that reconnecting the top-$l$ relationship query is necessary to maximize the benefits of expanding the influenced users of a given group.    

\begin{figure*}[htbp]
	\centering
	\subfigure[mathoverflow]{\label{R3:exp:influenced_users_Q1}
		\includegraphics[scale=0.29]{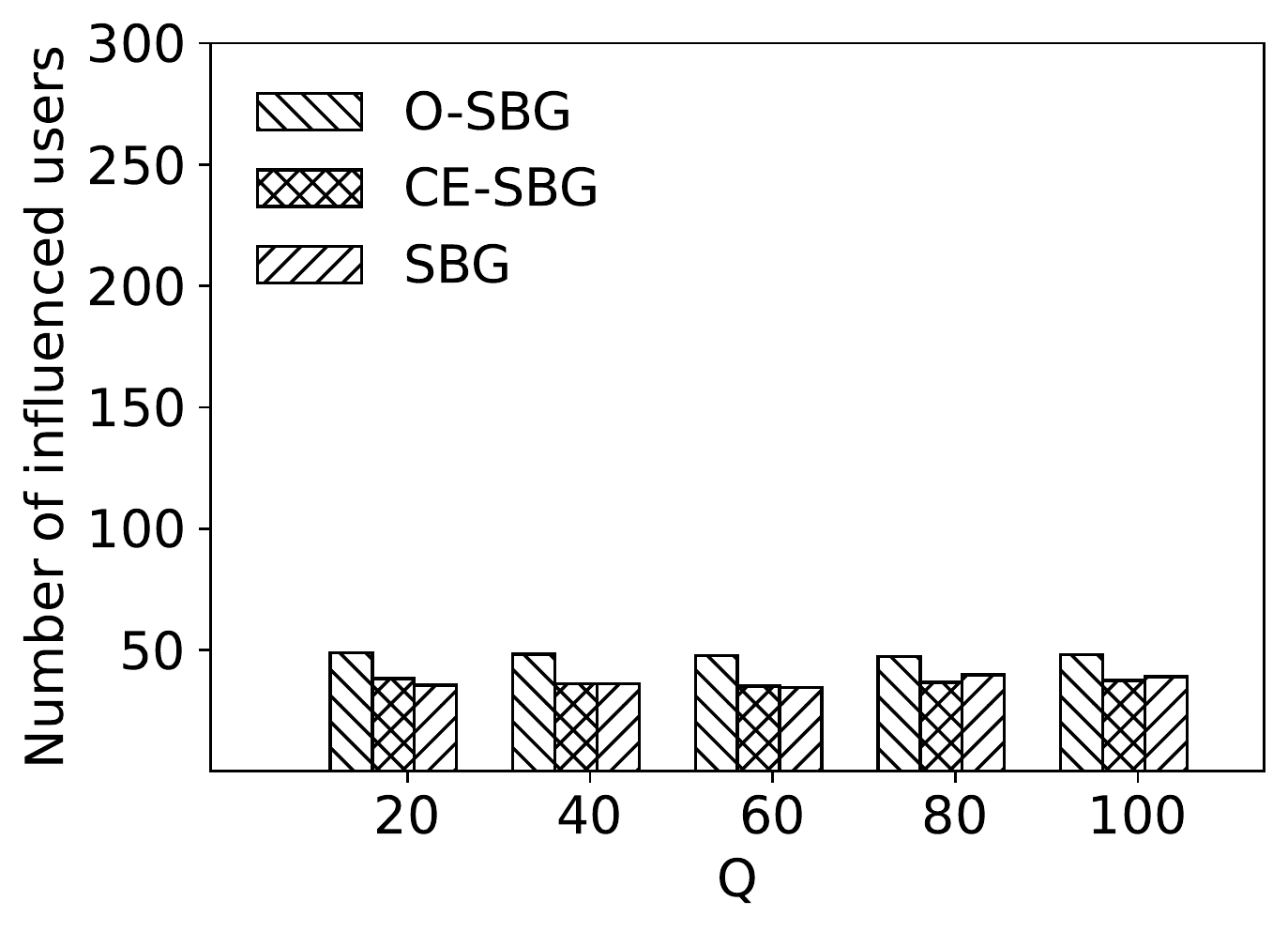}
	}
	\subfigure[ask-ubuntu]{\label{R3:exp:influenced_users_Q2}		
		\includegraphics[scale=0.29]{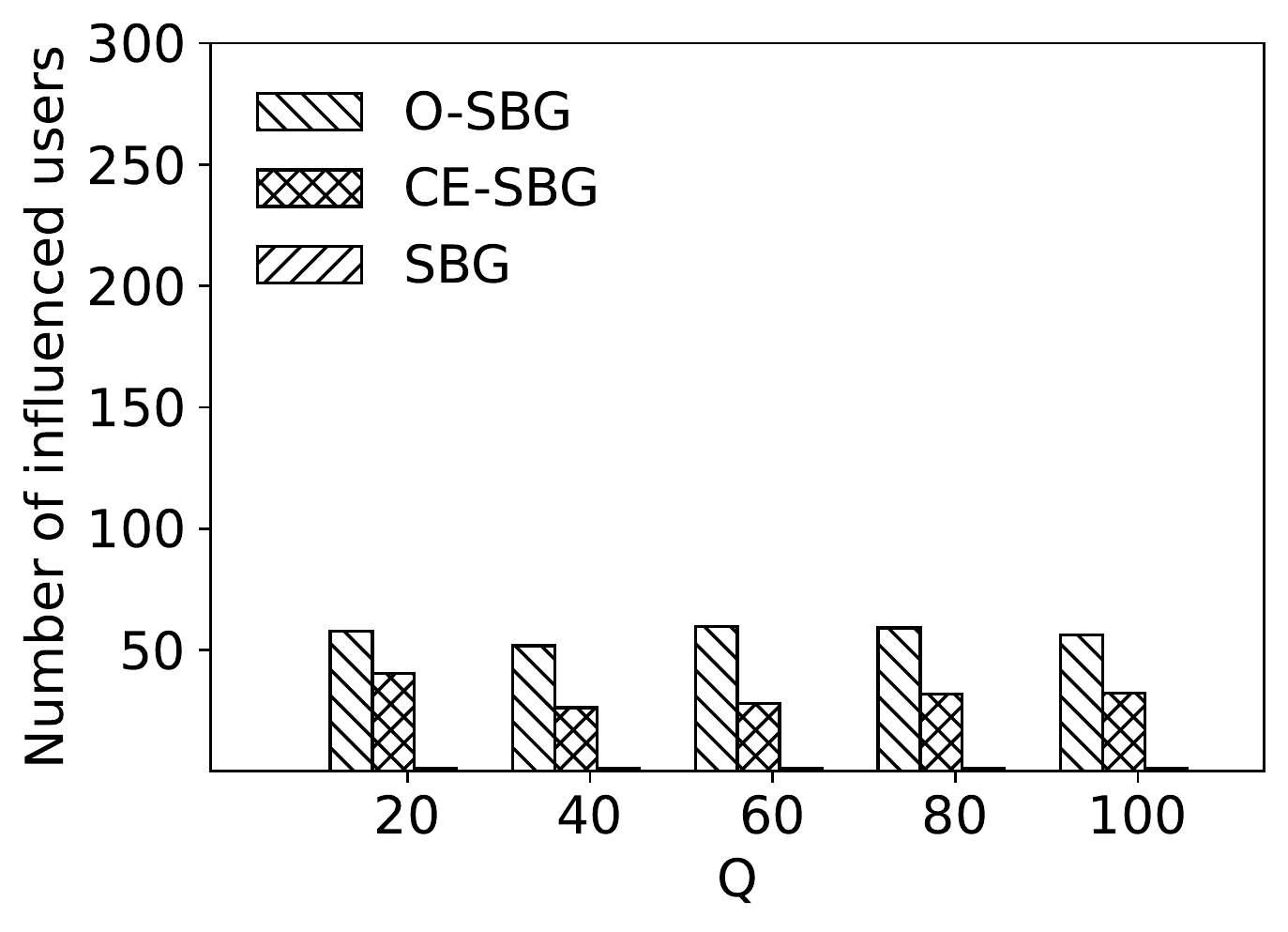}
	}
	\subfigure[CollegeMsg]{\label{R3:exp:influenced_users_Q3}		
	\includegraphics[scale=0.29]{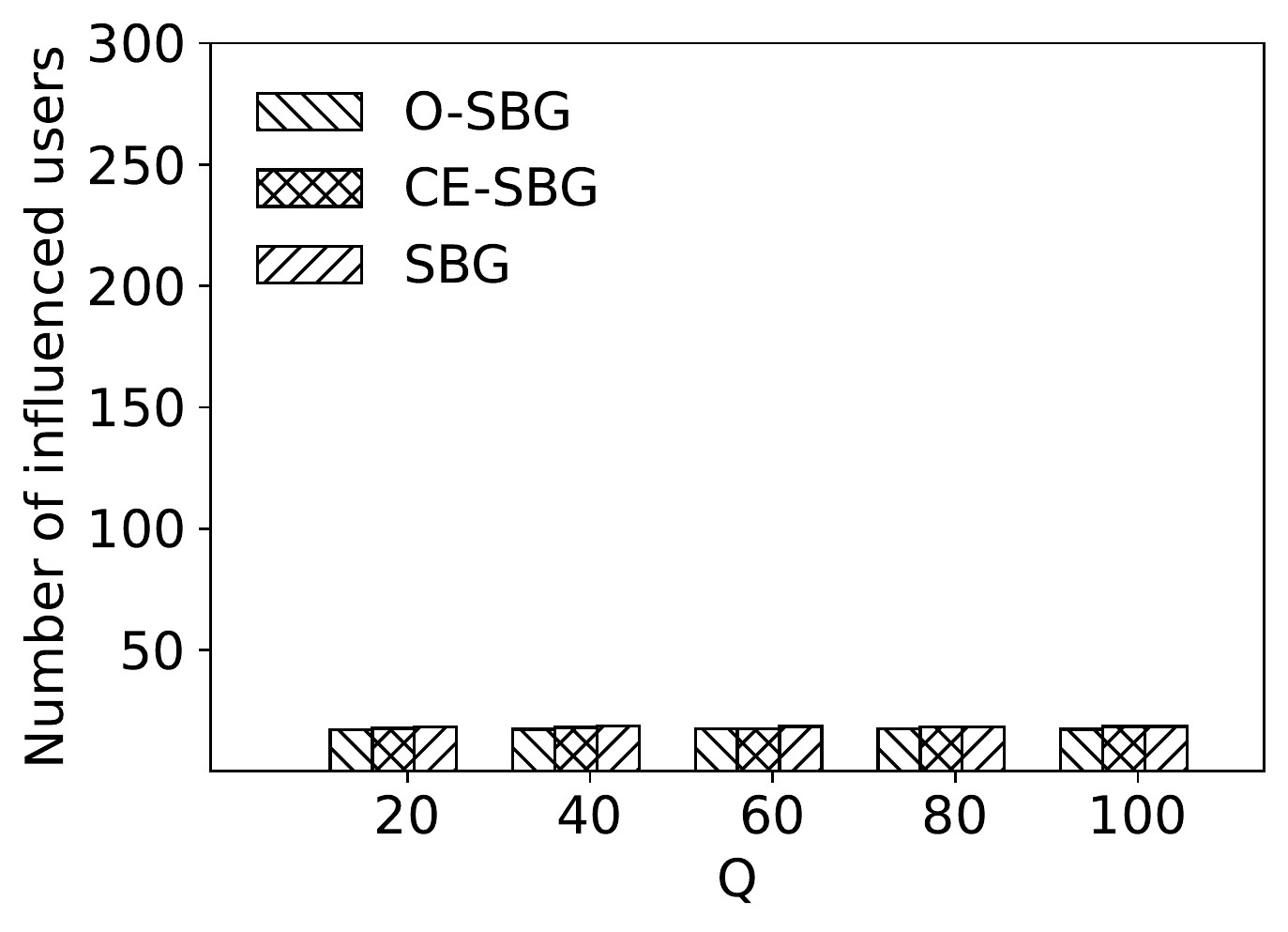}
	}
	\subfigure[eu-core]{\label{R3:exp:influenced_users_Q4}		
	\includegraphics[scale=0.29]{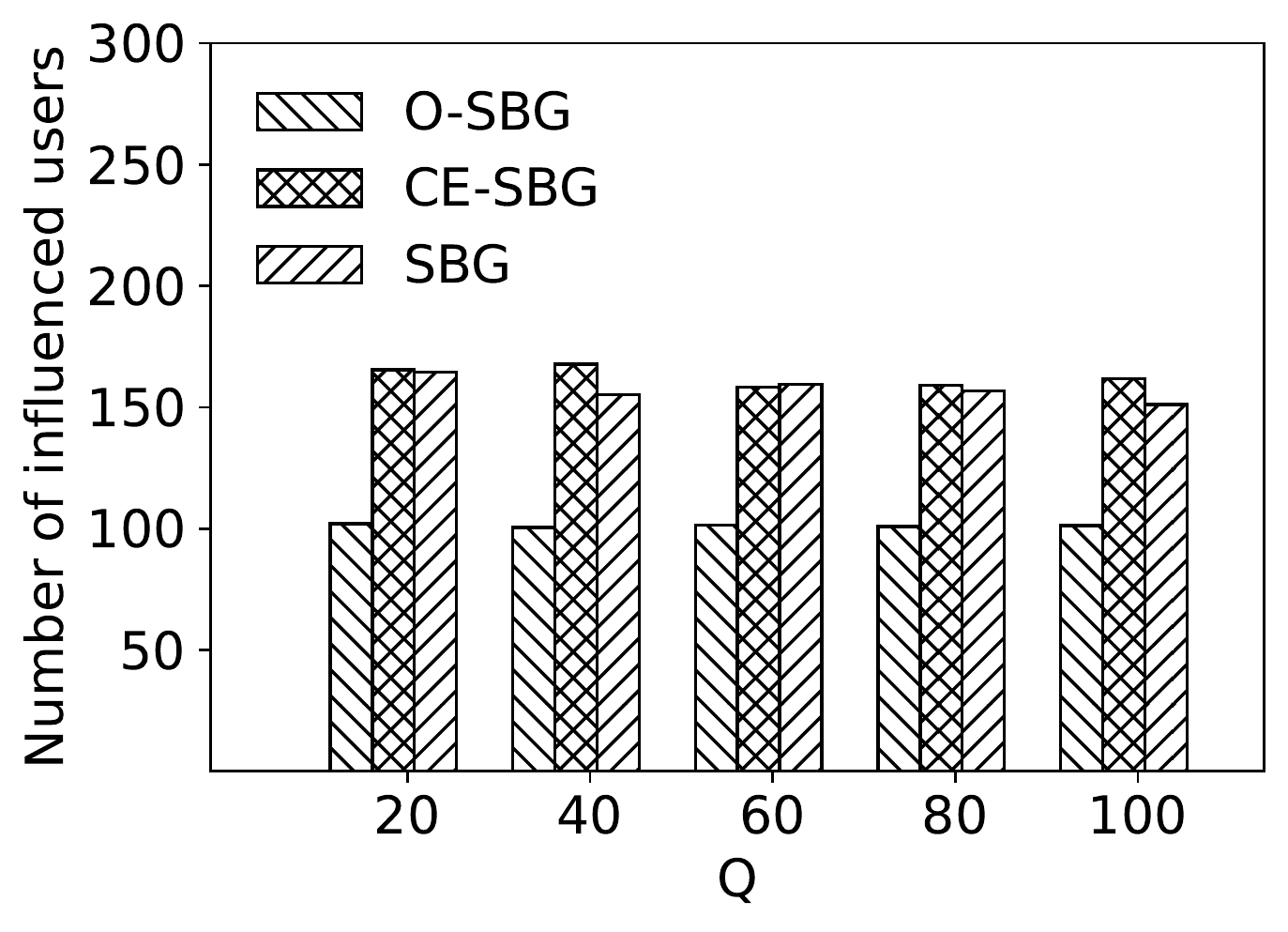}
	}
	\vspace{-4mm}
	\caption{Number of influenced users with varying $Q$}
	\vspace{-4mm}
	\label{fig:influenced_users_Q}
\end{figure*}

\begin{figure*}[htbp]
	\centering
	\subfigure[mathoverflow]{\label{R3:exp:influenced_users_U1}
		\includegraphics[scale=0.29]{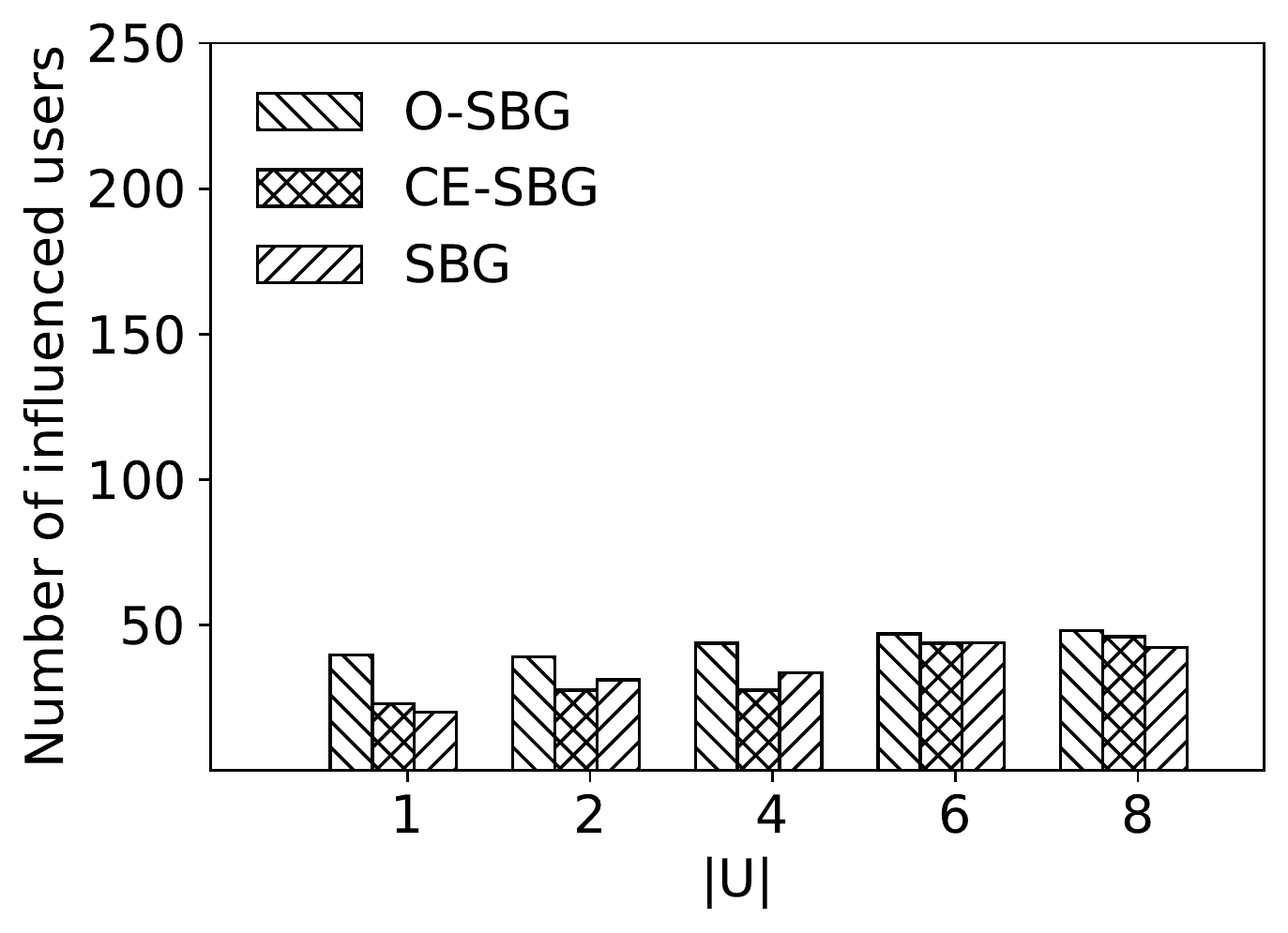}
	}
	\subfigure[ask-ubuntu]{\label{R3:exp:influenced_users_U2}		
		\includegraphics[scale=0.29]{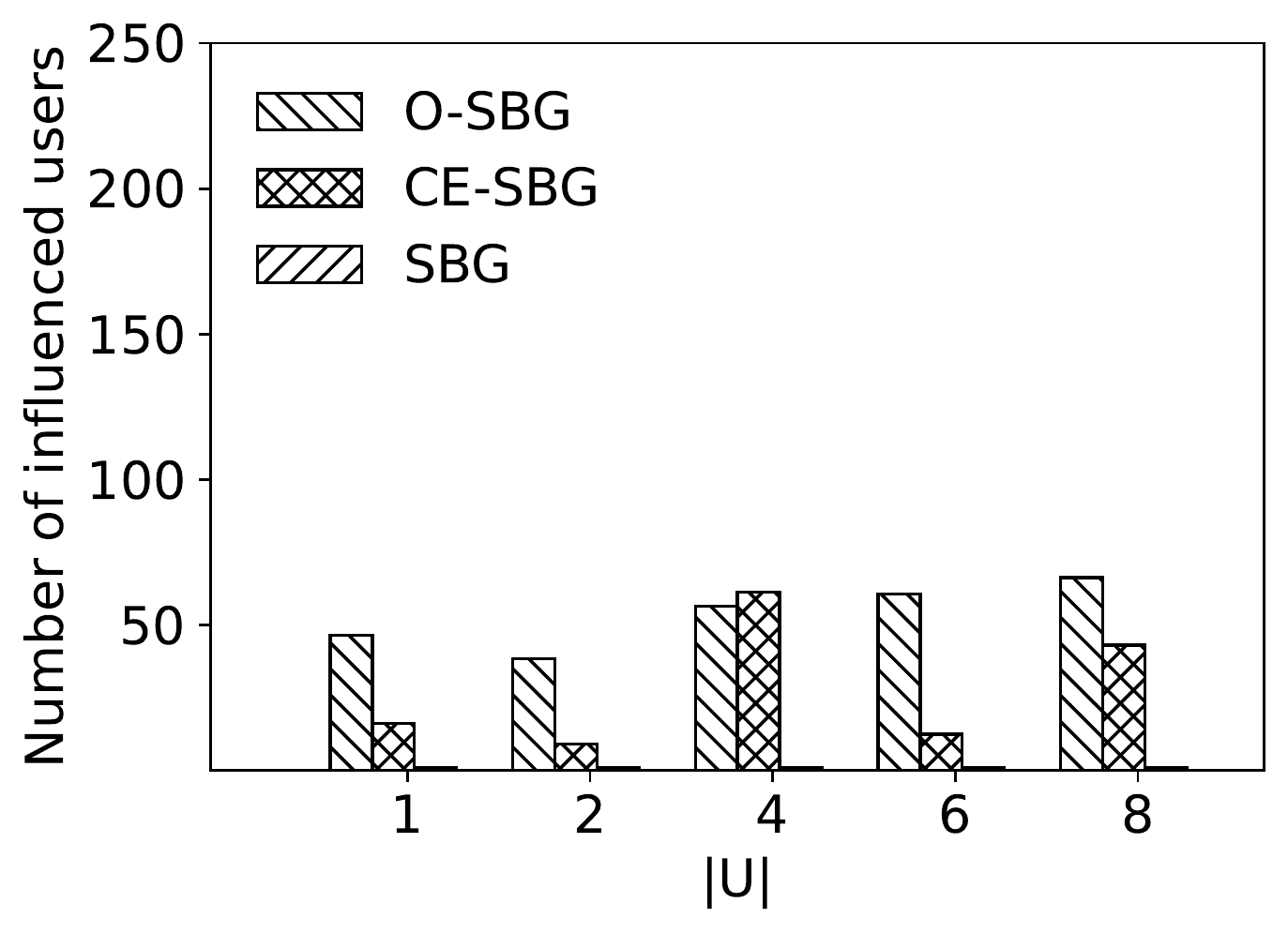}
	}
	\subfigure[CollegeMsg]{\label{R3:exp:influenced_users_U3}		
	\includegraphics[scale=0.29]{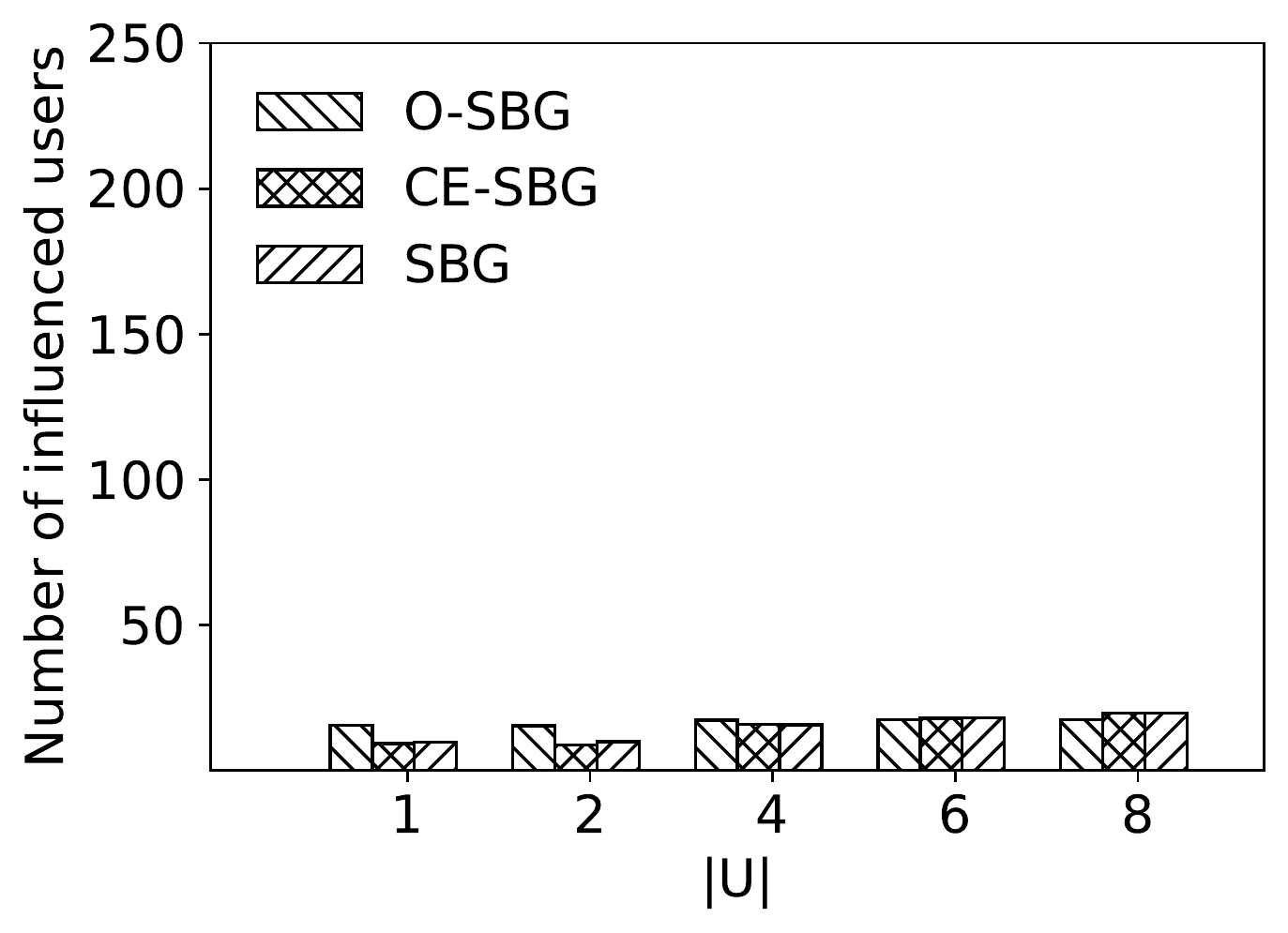}
	}
	\subfigure[eu-core]{\label{R3:exp:influenced_users_U4}		
	\includegraphics[scale=0.29]{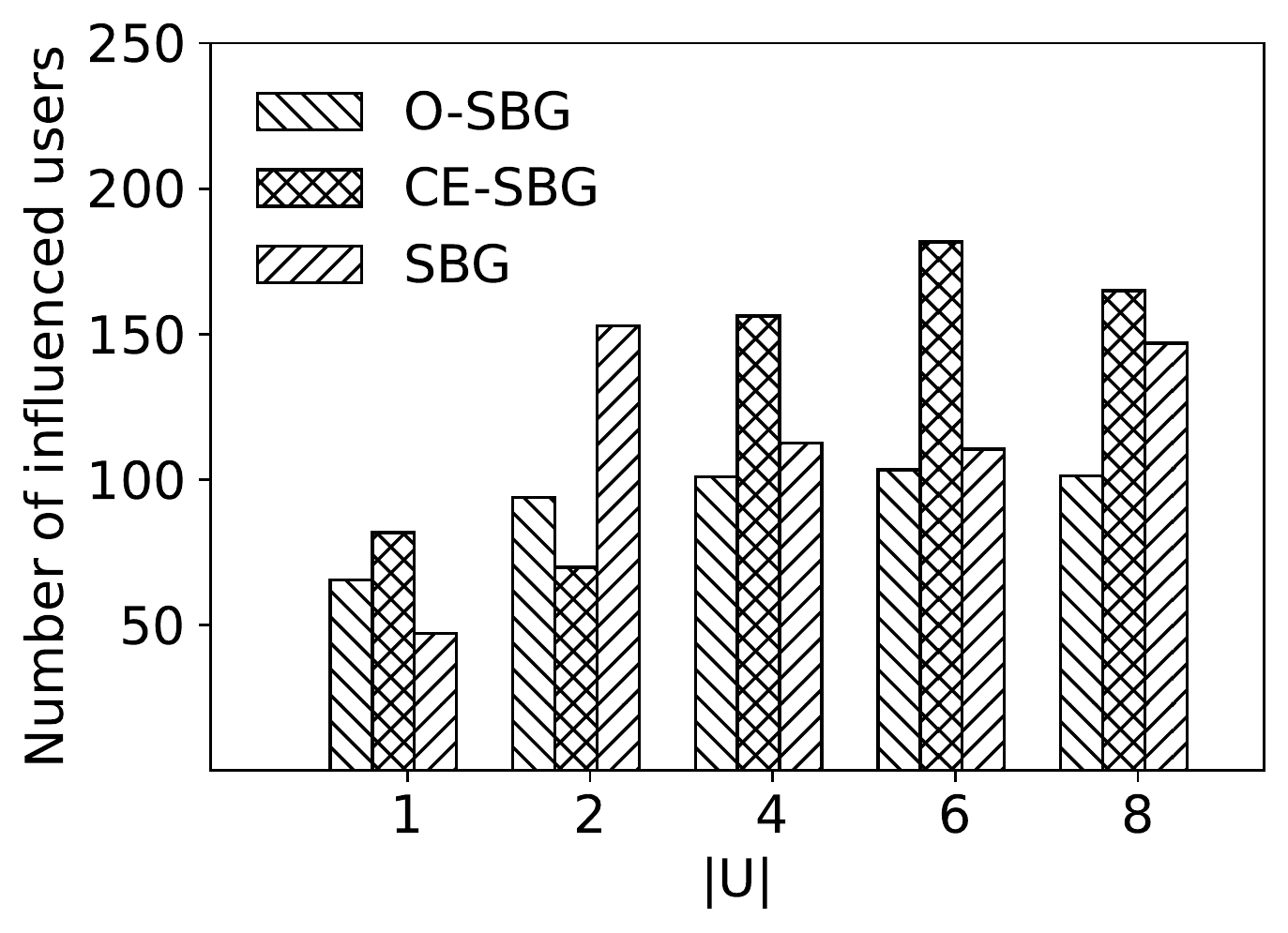}
	}
	\vspace{-4mm}
	\caption{Number of influenced users with varying $|\mathcal{U}|$}
	\vspace{-3mm}
	\label{fig:influenced_users_U}
\end{figure*}

\begin{figure*}[htbp]
	\centering
	\subfigure[mathoverflow]{\label{R3:exp:influenced_users_l1}
		\includegraphics[scale=0.29]{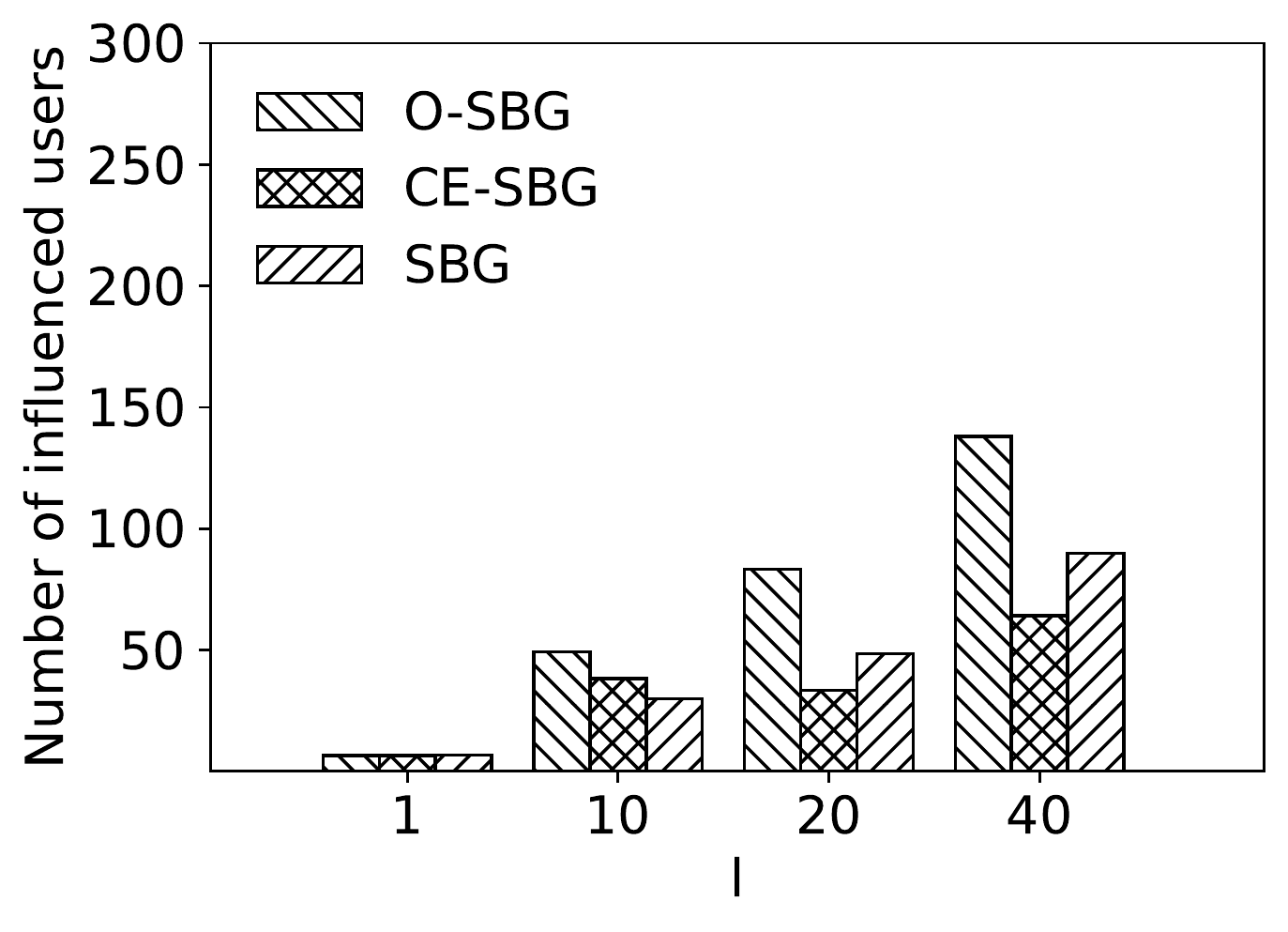}
	}
	\subfigure[ask-ubuntu]{\label{R3:exp:influenced_users_l2}		
		\includegraphics[scale=0.29]{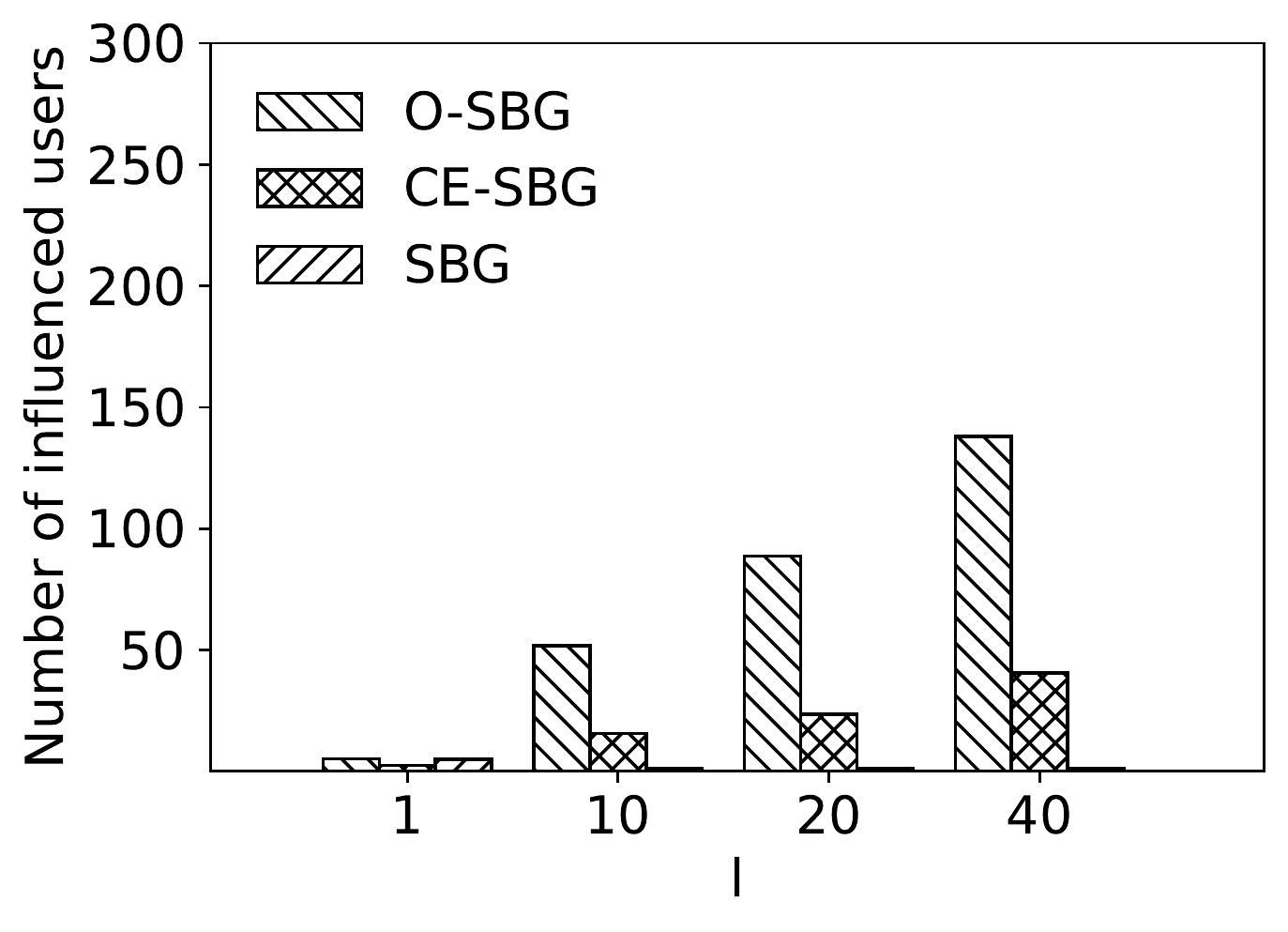}
	}
	\subfigure[CollegeMsg]{\label{R3:exp:influenced_users_l3}		
	\includegraphics[scale=0.29]{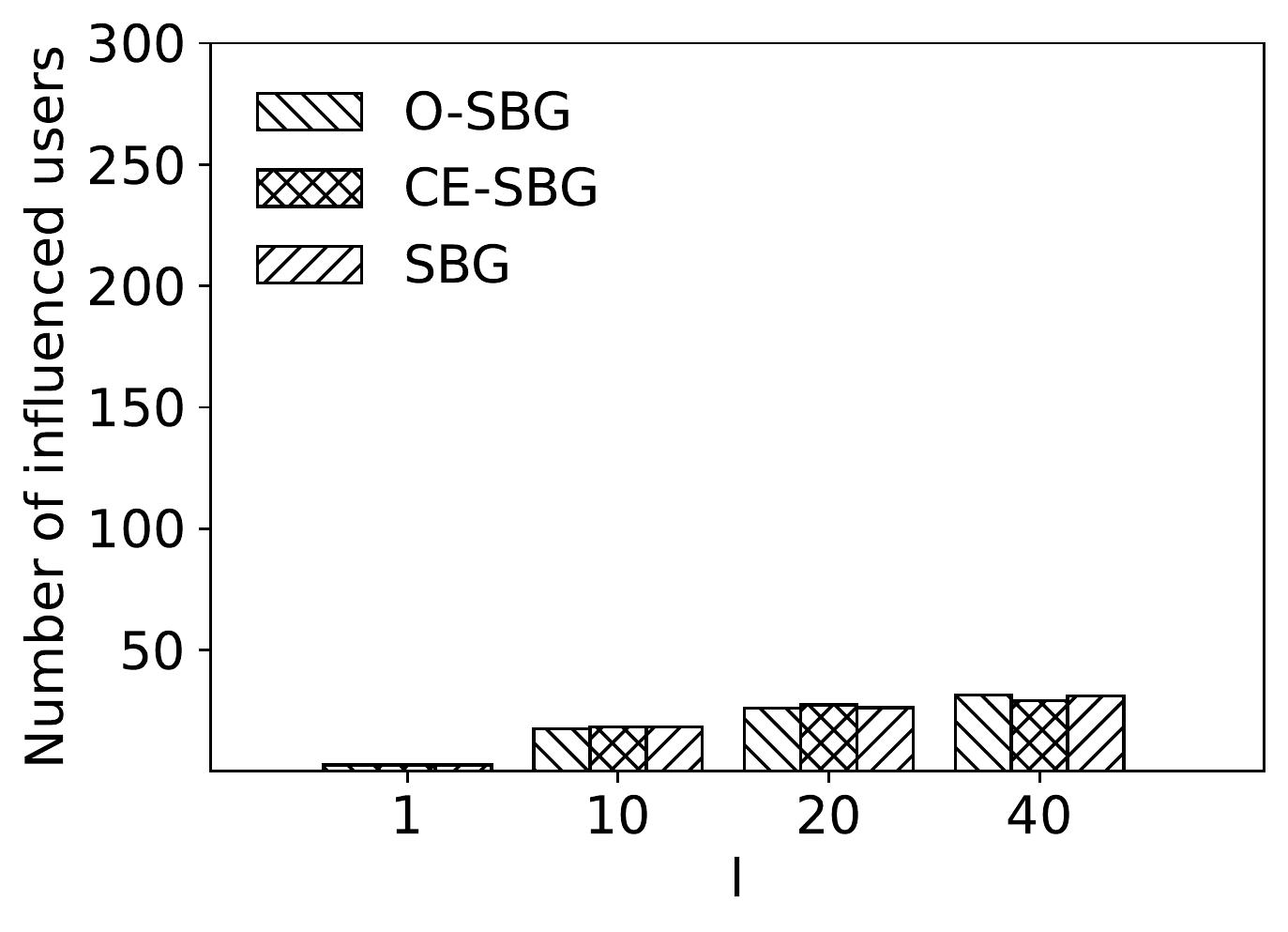}
	}
	\subfigure[eu-core]{\label{R3:exp:influenced_users_l4}		
	\includegraphics[scale=0.29]{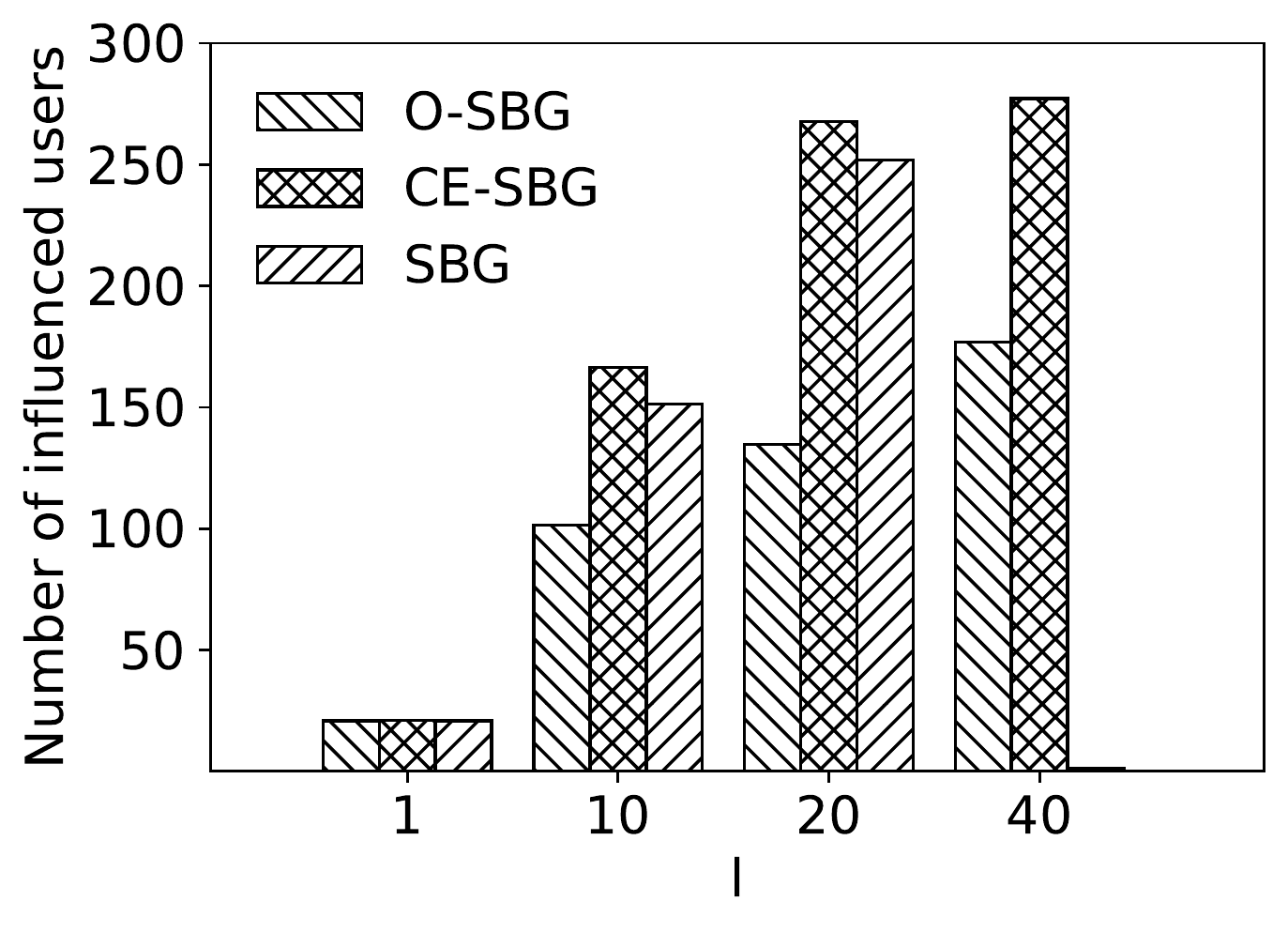}
	}
	\vspace{-4mm}
	\caption{Number of influenced users with varying $l$}
	\vspace{-3mm}
	\label{fig:influenced_users_l}
\end{figure*}

\section{Related Work} \label{sec:related}

\subsection{Influence Maximization}
Influence maximization (IM) was first formulated by Domingos et al.~\cite{domingos2001mining} as an algorithmic problem in probabilistic methods. Later on, Kempe et al.~\cite{IM2003} modeled IM as an algorithmic problem in 2003. As the IM problem is NP-hard, all existing methods focus on approximate solutions, and a keystone of these algorithmic IM studies is the greedy framework. The existing IM algorithms can be categorized into three categories: {\em simulation-based}, {\em proxy-based}, and {\em sketch-based}.

\vspace{1mm}
\noindent
\textbf{Simulation-based approaches.} The key idea of these approaches is to estimate the influence spread $I(S)$ of given users set $S$ by using the \textit{Monte Carlo} (MC) simulations of the diffusion process~\cite{IM2003,leskovec2007cost,zhou2015upper}. 
Specifically, for a given users set $S$, the simulation-based approaches simulate the randomized diffusion process with $S$ for $R$ times. Each time they count the number of active users after the diffusion ends, and then take the average of these counts over the $R$ times. The accuracy of 
these approaches is positively associated with the number of $R$. 
The simulation-based approaches have the advantage of diffusion model generality, and these approaches can be incorporated into any classical influence diffusion model. However, the time complexity of these approaches are cost-prohibitive, which would hardly be used for dealing with sizeable networks.

\vspace{1mm}
\noindent
\textbf{Proxy-based approaches.} Instead of running heavy MC simulation, the proxy-based approaches estimate the influence spread of given users by using the proxy models. 
Intuitively, there are two branches of the proxy-based approaches, including (1) Estimate the influence spread of given users by transforming it to easier problems (\textit{e.g., Degree and PageRank})~\cite{chen2010scalable,galhotra2016holistic}; and (2) Simplify the typical diffusion model (\textit{e.g., IC model}) to a deterministic model (\textit{e.g., MIA model})~\cite{chen2010scalable} or restrict the influence propagation range of given users under the typical diffusion model to the local subgraph~\cite{goyal2011simpath}, to precisely compute the influence spread of given users. 
Compared with the simulation-based approach, 
a 
proxy-based approach offers significant performance improvements but lacks theoretical guarantees.

\vspace{1mm}
\noindent
\textbf{Sketch-based approaches.} To avoid running heavy MC simulations and reserve the theoretical guarantee, the sketch-based approaches~\cite{RIS14,tang2014influence,nguyen2017importance,10.1145/2661829.2662077,10.1145/2505515.2505541,ohsaka2014fast} pre-compute a number of sketches under a specific diffusion model, and then speed up the influence evaluation based on the constructed sketches. Compared with the simulation-based approaches, the sketch-based approaches have a lower time complexity under a theoretical guarantee. Unfortunately, the sketch-based approaches are not 
generic 
to all diffusion models because the generated sketches of the sketch-based approaches are relay on the underlying diffusion models.  

\vspace{-2mm}
\subsection{Link Prediction}
Link prediction (LP) is an important network-related problem, first proposed by Liben-Nowell et al.~\cite{DBLP:conf/cikm/Liben-NowellK03}. The LP problem aims to infer the existence of new links or still unknown interactions between pairs of nodes based on the currently observed links. After decades study, a series of LP methods were proposed, including: similarity approaches~\cite{zhou2021experimental,he2015owa}, probabilistic approaches~\cite{das2017probabilistic,wang2017link}, hybrid approaches~\cite{wang2018fusion,zhang2020hybrid}, and deep learning approaches~\cite{rahman2018dylink2vec,zhang2018link,zhang2021labeling}. 

In this paper, we use the SEAL method~\cite{zhang2018link,zhang2021labeling} to predict the structure of the near future (\textit{i.e., time point $t$}) snapshot graph (\textit{i.e., $G_t$}) for a given evolving graph. Furthermore, for each given users group $\mathcal{U}$, our RT$l$R query problem aims to reconnect a set of edges in $G_t$ to maximize the number of influenced users of $\mathcal{U}$ in $G_t$, which is quite distinct from all existing IM works.

\section{Conclusion} \label{sec:conclusion}
In this paper, we studied the problem of \textit{Reconnecting Top-$l$ Relationships} (RT$l$R), which aims to find $l$ previous existing relationships but being estranged 
subsequently, 
such that reconnecting these relationships 
would 
maximize the influence spread of given users group. We have shown that the RT$l$L query problem is NP-hard. We developed a FI-Sketch based greedy (SBG) algorithm  to solve this problem. We further devised an edge reducing method to prune the candidate edges that the given users' group cannot reach.
Moreover, an order-based SBG method has been designed by utilizing the submodular characteristic of the RT$l$L query and two well-designed upper bounds.    
Lastly, the extensive performance evaluations on real datasets also revealed the practical efficiency and effectiveness of our proposed method. 
In the future, we will focus on developing more efficient approaches to deal with the RT$l$R queries in hyper scale networks.

\bibliographystyle{ACM-Reference-Format}
\bibliography{TaotaoCai2021v4,IJCAI2022}

\end{document}